\newcommand{\blind}{0}
\newtheorem{theorem}{Theorem}
\newtheorem{proposition}{Proposition}
\newtheorem{lemma}{Lemma}
\newtheorem{remark}{Remark}
\newtheorem{condition}{Condition}
\newcounter{rmk}
\def\T{{ \mathrm{\scriptscriptstyle T} }}
\def\t{ \mathrm{\scriptscriptstyle T} }
\def\pr{P}
\def\Op{O_p}
\def\op{o_p}
\def\maha{\textnormal{Ma}}
\def\sumi{\sum_{i=1}^{n}}
\def\summ{\sum_{m=1}^{M}}
\def\sumim{\sum_{i \in [m]}}
\def\argmin{\mathop{\arg\min}}
\def\mean{\textnormal{mean}}
\def\cov{\textnormal{cov}}
\def\var{\textnormal{var}}
\newcommand\sign[1]{\textnormal{sign}{(#1)}}
\def\unadj{\textnormal{unadj}}
\def\ols{\textnormal{ols}}
\def\lasso{\textnormal{lasso}}
\def\proj{\textnormal{proj}}
\def\unadjM{{\textnormal{unadj}|\mathcal{M}_a}}
\newcommand\bb[1]{\boldsymbol #1}
\def\nt{n_1}
\def\nc{n_0}
\def\nz{n_z}
\def\nm{n_{[m]}}
\def\nmt{n_{[m]1}}
\def\nmc{n_{[m]0}}
\def\nmz{n_{[m]z}}
\newcommand\nma[1]{n_{[m]#1}}
\def\pim{\pi_{[m]}}
\def\emt{e_{[m]}}
\def\emz{e_{[m]z}}
\def\propc{c}
\def\ec{e_c}
\newcommand\wnz[1]{\omega_{i}(#1)}
\newcommand\wyz[1]{\omega_{i,Y}(#1)}
\newcommand\wxz[1]{\omega_{i,\xm}(#1)}
\def\H{H}
\def\Q{Q}
\def\xe{x} 
\def\xr{\boldsymbol x} 
\def\xc{\boldsymbol X} 
\def\xm{\boldsymbol X} 
\def\wm{\boldsymbol X_{\mathcal{K}}} 
\def\z{\boldsymbol z}
\def\Z{\boldsymbol Z}
\def\e{\varepsilon}
\def\bx{\bar{\xr}}
\def\bxm{\bar{\xr}_{[m]}}
\def\bxmj{\bar{\xe}_{[m],j}}
\newcommand\bym[1]{\bar{Y}_{[m]}(#1)}
\newcommand\bxmz[1]{\bar{\xr}_{[m]#1}}
\def\bxmztj{\bar{\xe}_{[m]1,j}}
\newcommand\bymz[1]{\bar{Y}_{[m]#1}}
\def\Smx{S^2_{[m]\xm}}
\def\Smxs{S^2_{[m] \xm_{\mathcal{S}}}}
\def\Smw{S^2_{[m] \xm_{\mathcal{K}}}}
\newcommand\Smesz[1]{S^2_{[m]\e^*(#1)}}
\def\Smestmesc{S^2_{[m]\{\e^*(1)-\e^*(0)\}}}
\newcommand\Smyz[1]{S^2_{[m]Y(#1)}}
\def\Smytmyc{S^2_{[m]\{Y(1)-Y(0)\}}}
\newcommand\Smxyz[1]{S_{[m]\xm Y(#1)}}
\newcommand\Smxsyz[1]{S_{[m]\xm_{\mathcal{S}}Y(#1)}}
\newcommand\Smxsesz[1]{S_{[m]\xm_{\mathcal{S}}\e^*(#1)}}
\newcommand\Smwesz[1]{S_{[m] \wm \e^*(#1)}}
\def\smx{s^2_{[m] \xm }}
\newcommand\smxesz[1]{s_{[m] \xm \e^*(#1)}}
\def\smxz{s^2_{[m]\xm(z)}}
\newcommand\smxyz[1]{s_{[m] \xm Y(#1)}}
\def\tsmxz{\tilde{s}^2_{[m]\xm(z)}}
\newcommand\tsmxyz[1]{\tilde{s}_{[m] \xm Y(#1)}}
\def\Sx{\Sigma_{\xm \xm}}
\def\Ryw{R^2_{Y, \wm}}
\def\hRyw{\hat R^2_{Y, \wm}}
\def\Resw{R^2_{\e^*, \wm}}
\def\ti{\tau_i}
\def\tm{\tau_{[m]}}
\def\htu{\hat\tau_\unadj}
\def\htmu{\hat\tau_{[m]}}
\def\ttp{\check{\tau}_{\proj}}
\def\htlt{\hat\tau_\lasso}
\def\htp{\hat\tau_{\proj}}
\def\taux{\tau_{\xr}}
\def\htaux{\hat\tau_{\xr}}
\def\htauxs{\hat\tau_{\xr,{\mathcal{S}}}}
\def\htauw{\hat\tau_{\xr, \mathcal{K}}}
\def\su{\sigma_\unadj}
\def\hsu{\hat{\sigma}_\unadj}
\def\suM{\sigma_\unadjM}
\def\hsuM{\hat{\sigma}_\unadjM}
\def\slt{\sigma_\lasso}
\def\hslt{\hat{\sigma}_\lasso}
\def\bbeta{\boldsymbol\beta}
\newcommand\bz[1]{\bbeta(#1)}
\def\bgamma{\boldsymbol\gamma}
\def\g{\bgamma_\proj}
\def\gs{(\bgamma_\proj)_\mathcal{S}}
\def\gsc{(\bgamma_\proj)_{\mathcal{S}^c}}
\def\hg{\hat\bgamma_\lasso}
\newcommand\gz[1]{\bgamma(#1)}
\newcommand\gzs[1]{\{\bgamma(#1)\}_{\mathcal{S}}}
\newcommand\gzsc[1]{\{\bgamma(#1)\}_{\mathcal{S}^c}}
\newcommand\hgz[1]{\hat\bgamma_{\lasso,#1}}
\newcommand\hgzs[1]{\{\hat\bgamma_{\lasso,#1}\}_\mathcal{S}}
\newcommand\hgzsc[1]{\{\hat\bgamma_{\lasso,#1}\}_{\mathcal{S}^c}}
\def\hgoz{\hat\bgamma_{\ols,z}}
\def\hgozt{\hat\bgamma_{\ols,1}}
\def\hgozc{\hat\bgamma_{\ols,0}}
\def\hgo{\hat\bgamma_{\ols}}
\def\smean{\sigma_{\mean}}
\def\scov{\sigma_{\cov}}
\def\smcov{\sigma_{[m]\cov}}
\def\sa{\sigma_a^2}
\def\fma{\kappa_a^4}
\def\fmb{\kappa_b^4}
\def\subg{\kappa}
\def\h{\boldsymbol h}
\def\tY{R}
\def\setXwe{\mathcal{L}^\omega_{\xm\varepsilon}}
\def\setXwXw{\mathcal{L}^\omega_{\xm\xm}}
\def\ew{\varepsilon^{\omega}}
\newcommand\bewz[1]{\bar{\varepsilon}^{\omega}_{#1}}
\def\xwe{x^{\omega}}
\def\xwr{\boldsymbol x^{\omega}}
\newcommand\bxwz[1]{\bar{\boldsymbol x}^{\omega}_{#1}}
\def\yw{Y^{\omega}}
\newcommand\bywz[1]{\bar{Y}^{\omega}_{#1}}
\newcommand\Smxwewz[1]{S_{[m]\xm^{\omega}(#1) \ew(#1)}}
\def\Sxw{V_{\xm\xm}}
\def\hSxw{\hat{\Sigma}^\omega_{\xm\xm}}
\def\Sxewtw{\tilde \Sigma^\omega_{\xm\ew(1)}}
\def\hSxewtw{\hat{\Sigma}^\omega_{ \xm \ew(1)}}
\newcommand\zhu[1]{\textcolor{black}{#1}}
\newcommand\zk[1]{\textcolor{black}{#1}}
\begin{document}

\def\spacingset#1{\renewcommand{\baselinestretch}%
{#1}\small\normalsize} \spacingset{1}


\if0\blind
{
\title{\bf Design-based theory for Lasso adjustment in randomized block experiments and rerandomized experiments}
\author{Ke Zhu, Hanzhong Liu\thanks{
Co-corresponding author: lhz2016@tsinghua.edu.cn.
}\hspace{.2cm}\\
Center for Statistical Science, Department of Industrial Engineering,\\ Tsinghua University, Beijing, China.\\
\hspace{.1cm}\\
Yuehan Yang\thanks{Corresponding author: yyh@cufe.edu.cn.
} \hspace{.2cm}\\
School of Statistics and Mathematics, \\Central University of Finance and Economics, Beijing, China.}
\date{}
\maketitle
} \fi

\if1\blind
{
\title{\bf Design-based theory for Lasso adjustment in randomized block experiments and rerandomized experiments}
\author{}
\date{}
\maketitle
} \fi

\bigskip
\begin{abstract}
Blocking, a special case of rerandomization, is routinely implemented in the design stage of randomized experiments to balance the baseline covariates. This study proposes a regression adjustment method based on the least absolute shrinkage and selection operator (Lasso) to efficiently estimate the average treatment effect in randomized block experiments with high-dimensional covariates. We derive the asymptotic properties of the proposed estimator and outline the conditions under which this estimator is more efficient than the unadjusted one. We provide a conservative variance estimator to facilitate valid inferences. Our framework allows one treated or control unit in some blocks and heterogeneous propensity scores across blocks, thus including paired experiments and finely stratified experiments as special cases. We further accommodate rerandomized experiments and a combination of blocking and rerandomization. Moreover, our analysis allows both the number of blocks and block sizes to tend to infinity, as well as heterogeneous treatment effects across blocks without assuming a true outcome data-generating model. Simulation studies and two real-data analyses demonstrate the advantages of the proposed method.
\end{abstract}

\noindent%
{\it Keywords:}  causal inference, covariate imbalance, projection estimator, randomization-based inference, stratification

\vfill
\vfill

\spacingset{1.8} 

\section{Introduction}

Randomized experiments are the basis for evaluating the effect of a treatment on an outcome and are widely used in the industry, social sciences, and biomedical sciences \citep{Fisher1935,box2005,Imbens2015,Rosenberger2015}. In randomized experiments, complete randomization of treatment assignments can balance the baseline covariates on average. However, covariate imbalances often occur in a particular treatment assignment \citep[see, for example,][]{Fisher1926, Morgan2012,Athey2017}. To increase the estimation efficiency of the treatment effect, some researchers have recommended balancing the key covariates in the design stage \citep{Fisher1926,Morgan2012,Krieger2019}, whereas others have emphasized the implementation of adjustments for covariate imbalances in the analysis stage \citep{Fisher1935,miratrix2013,lin2013}.

\citet{Fisher1926} was the first to recommend the use of blocking or stratification in the design stage to balance a few discrete covariates that were the most predictive to the outcomes. Since then, blocking has been widely used in experimental designs \citep{Imai2008,Higgins2015,Schochet2016,Pashley2017,tabord2018stratification}. While blocking can effectively balance discrete covariates, balancing continuous covariates using this approach is less intuitive. Rerandomization is a more general approach for balancing continuous covariates \citep{student:1938,Morgan2012,Li2018,Li2020factorial}. Recently, scholars have recommended combining blocking and rerandomization techniques \citep{Schultzberg2019,Wang2020rerandomization}. Rubin summarized this design strategy as ``Block what you can and rerandomize what you cannot.''

Blocking, rerandomization, or their combination can balance only a fixed number of covariates. However, in modern randomized experiments, a large number of baseline covariates are often collected and the number of covariates can be larger than the sample size. For example, in a randomized controlled trial, the researcher may record the demographic and genetic information of each participant. \citet{bloniarz2015lasso} highlighted that, in such high-dimensional settings, most of the covariates may not be related to the outcomes; thus, the important covariates must be selected to achieve efficient treatment effect estimation. In the design stage, when pre-experimental data (outcomes and covariates) were available, \citet{johansson2020} used Lasso \citep{Tibshirani1996} to select important covariates and rerandomization to balance the selected covariates. However, when pre-experimental outcome information is unavailable, it is difficult to perform covariate selection in the design stage. A more realistic approach is to use Lasso in the analysis stage to perform a regression adjustment. Regression adjustment has been widely used to analyze randomized experiments and increase associated efficiency \citep{Fisher1935,miratrix2013,lin2013,bloniarz2015lasso,Bugni2018,Bugni2019,Liu2019,lei2020,Wang2019model,Ma2020Regression,Liu2020,jiang2022regression,jiang2022noncomp}.

Under a finite-population or design-based framework by conditioning on the potential outcomes and covariates, with treatment assignments being the only source of randomness, \citet{Liu2019} proposed a weighted regression adjustment method for randomized block experiments and demonstrated that this approach could increase the efficiency even when the number of blocks tends to infinity with the block sizes being fixed. However, this method has several limitations. First, it works only for homogeneous propensity scores (proportion of treated units in each block) across blocks and manages only low-dimensional covariates. 
\zk{However, propensity scores may be heterogeneous across blocks due to practical restrictions, as addressed by \citet{liu2021randomization} in the context of randomized block factorial experiments with low-dimensional covariates.}
Second, each block must have at least two treated and two control units, which may be unrealistic in many randomized experiments such as paired experiments and finely stratified experiments or observational studies \citep{fogarty2018finely,Pashley2017,bai2021inference}. For example, in observational studies, full matching is widely used to balance covariates and create \textit{fine blocks} with only one treated unit or only one control unit  \citep{rosenbaum1991characterization,hansen2004full}. After full matching, we can analyze the data as if they come from a finely stratified experiment \citep{bind2019}. Investigations of regression adjustment methods in finely stratified experiments are limited. Third, it does not consider rerandomization in the design stage. \citet{Li2020} established a unified theory for rerandomization followed by regression adjustment under complete randomization, but did not consider high-dimensional covariates or the combination of blocking and rerandomization in the design stage. To fill the gap, we develop general approaches and theoretical guarantees for the combination of blocking, rerandomization, and regression adjustment with high-dimensional covariates, homogeneous or heterogeneous block sizes, propensity scores, and treatment effects. Our methods do not require at least two treated and two control units in each block and thus are applicable to paired experiments, finely stratified experiments, and observational studies using full matching.

Specifically, we propose a Lasso-adjusted average treatment effect (ATE) estimator for randomized block experiments from a projection perspective. We show that under mild conditions, the proposed estimator is consistent, asymptotically normal, and more efficient than, or at least as efficient as, the classic weighted difference-in-means estimator, even when the propensity scores differ across blocks or when only one treated or control unit exists in some blocks. We consider a general asymptotic regime in which both the total number of units and the number of covariates tend to infinity, allowing for a few large blocks, many small blocks, or a combination thereof. Our results extend the design-based theory of Lasso adjustment from completely randomized experiments \citep{bloniarz2015lasso} to randomized block experiments. As a by-product, we establish novel concentration inequalities for the weighted sample mean and covariance under stratified randomization with possibly heterogeneous block sizes and propensity scores. Moreover, we propose a Neyman-type conservative variance estimator to facilitate valid inferences. Finally, we investigate the asymptotic properties of the proposed Lasso-adjusted estimator under stratified rerandomization \citep{Wang2020rerandomization}, which combines stratification and rerandomization during the design stage. We outline the conditions under which it is more efficient than an unadjusted estimator. As another by-product, we extend the results of \citet{Li2020} to high-dimensional settings. Our asymptotic results were obtained under a finite population and randomization-based inference framework. Under this framework, the potential outcomes and covariates are fixed quantities, and the only source of randomness is the treatment assignment. Our theory allows the outcome data-generating model to be misspecified.

The remainder of this paper is organized as follows. Section \ref{sec:frame} introduces the framework and notation.
Section \ref{sec:b+r+l} describes a novel ``blocking + rerandomization + Lasso adjustment'' method and establishes its asymptotic theory.
The details of the simulation studies and two real data analyses are presented in Sections \ref{sec:sim} and \ref{sec:rd}, respectively. Section \ref{sec:dis} presents concluding remarks. All proofs are presented in the Supplementary Material.

\section{Framework and notation}
\label{sec:frame}

\zk{In Section \ref{sec:2.1}, we introduce randomized block experiments and review the existing inference results within the design-based causal inference framework. Section \ref{sec:2.2} presents stratified rerandomization, a more general experimental design for balancing covariates in the design stage, while Section \ref{sec:2.3} introduces the notations.}

\subsection{Randomized block experiments}
\label{sec:2.1}

Blocking is a traditional approach to balancing discrete covariates in an experimental design. Experiments in which blocking is implemented are known as stratified randomized or randomized block experiments. Blocking can increase the estimation efficiency of the average treatment effect when blocking variables are predictive to the outcomes \citep{Fisher1926,imai2008variance,Imbens2015}.
Consider a randomized block experiment with $n$ units and a binary treatment. For each unit $i$, $i=1,\dots,n$, let $Z_i$ be an indicator of treatment assignment. Specifically, $Z_i = 1$ if unit $i$ is assigned to the treatment group, and $Z_i = 0$ otherwise.
Before the physical implementation of the experiment, we stratify the units into $M$ blocks. Let $B_i$ denote the block indicator of unit $i$. Let $\nm = \sumi I(B_i = m)$ denote the number of units in the block $m$ ($m=1,\dots,M$), where $I(\cdot)$ is an indicator function. Hereafter, subscript ``$[m]$'' indicates block-specific quantities. Let $\pim = \nm / n$ denote the proportion of block size for block $m$. Within block $m$, $\nmt$ units are randomly assigned to the treatment group, and the remaining $\nmc$ units are assigned to the control group. The total number of treated units is $\nt = \sum^M_{m = 1} \nmt$. We assume that $1 \leqslant \nmt \leqslant \nm - 1$, which includes both finely- and coarsely-stratified experiments (a \textit{fine block} has one treated or one control unit, whereas a \textit{coarse block} has at least two treated and two control units). In particular, pair experiments are special cases in our study. Let $\emt = \nmt / \nm$ denote the propensity score, which may differ across blocks for practical reasons, such as budget restrictions. The treatment assignments are independent across blocks, and thus the probability distribution of $ \Z = (Z_1,\dots,Z_n)^\T $ in randomized block experiments is $\pr ( \Z =  \boldsymbol z ) = \prod^M_{m=1} { (\nma{1}! \nma{0}! ) } / { \nm ! }, \ \sum_{i \in [m]}I(z_i = 1) = \nmt , \ z_i=0,1$, where $i \in [m]$ indexes unit $i$ in block $m$.

We define treatment effects using the Neyman--Rubin potential outcomes framework \citep{Neyman:1923,Rubin:1974}. For unit $i$, let $Y_i(1)$ and $Y_i(0)$ be the potential outcomes under treatment and control, respectively. We define the unit-level treatment effect as $\ti = Y_i(1) - Y_i(0)$. As each unit is assigned to either the treatment or control group, but not to both, we cannot simultaneously observe $Y_i(1)$ and $Y_i(0)$. Thus, $\ti$ cannot be identified without strong modeling assumptions pertaining to potential outcomes. Under the stable unit treatment value assumption (SUTVA) \citep{Rubin:1980}, the average treatment effect is identifiable.
Specifically, the block-specific average treatment effect is defined as
$\tm = ({1}/{\nm}) \sumim  Y_i(1)  -({1}/{\nm}) \sumim Y_i(0)  = \bym{1} - \bym{0}$
and the overall average treatment effect is defined as $$\tau = n^{-1} \sumi \big\{ Y_i(1) - Y_i(0) \big\} =  \summ \pim \tm.$$
Under stratified randomization, the observed outcome is $Y_i = Z_i Y_i(1) + ( 1 - Z_i ) Y_i(0)$.
An unbiased estimator of $\tm$ is the difference-in-means of the outcomes within block $m$, $\htmu = (1/\nmt) \sumim Z_i Y_i - (1/\nmc) \sumim (1 - Z_i) Y_i  = \bymz{1} - \bymz{0}$. Thus, the plug-in estimator of $\tau$ is the weighted difference-in-means, as follows: 
$$
\htu = \summ \pim \htmu = \summ \pim ( \bymz{1} - \bymz{0} ).
$$ 
Under mild conditions, $\htu$ is unbiased and $\sqrt{n}(\htu-\tau) \stackrel{d}\longrightarrow N(0, \su^2)$, where $\su^2$ is defined in Section A.3 in the Supplementary Material.

However, the unadjusted estimator $\htu$ does not incorporate the covariate information beyond blocking and thus may lose efficiency.
For each unit $i$, consider a $p$-dimensional baseline/pre-treatment covariate vector $\xr_i = (\xe_{i1},\dots,\xe_{ip})^\T \in \mathbb{R}^{p}$, where $p$ is comparable to or even larger than $n$.
The covariates in $\xr_i$ can be continuous or categorical, but we assume that they cannot be represented by a linear combination of $I(B_i=m)$, $m=1,\ldots,M$. We denote $\xm=(\xr_{1},\dots,\xr_{n})^\T$.
In order to enhance both estimation and inference efficiency, we could incorporate these covariates during both the design and analysis stages.

\zk{According to prior experiments or domain knowledge, covariates highly predictive of the potential outcomes may be preferentially balanced in the design stage. Specifically, if we have pilot experiment data that includes outcomes and covariates, we can run a Lasso regression. The covariates selected by Lasso are considered predictive and should be balanced in the subsequent experimental design \citep{johansson2020}. If we do not have pilot experiment data, we need to rely on domain knowledge to choose predictive covariates. Even if the prior domain knowledge is incorrect, balancing these covariates won't harm the validity and efficiency of the statistical inference \citep{Wang2020rerandomization}.}
For instance, consider a study interested in evaluating the effect of academic achievement awards on the academic performance of college students. The outcome of interest is the grade point average (GPA) in the current semester.
Based on domain knowledge, the GPA from the previous year is highly predictive of the GPA in the current semester, thus it is better to be balanced in the design stage.
We denote these highly predictive covariates as $\xr_{i,\mathcal{K}}=(x_{ij}, j \in \mathcal{K})^\T \in \mathbb{R}^{k}$, where $\mathcal{K}$ is the index set and $k$ is the dimension of these covariates.
We denote $\xm_{\mathcal{K}}=(\xr_{1,\mathcal{K}},\dots,\xr_{n,\mathcal{K}})^\T$.
For the remaining $p-k$ covariates, when the designer does not have prior information regarding their importance, we can perform data-driven variable selection and regression adjustment during the analysis stage. Throughout this study, we assume that $k$ is fixed regardless of $n$ while $p$ diverges with $n$. For notation simplicity, we do not index $p$ with $n$. The objective is to make valid and efficient inferences on the average treatment effect $\tau$ using the observed data $\{Y_i, Z_i, \xr_i\}_{i=1}^{n}$.

\subsection{Stratified rerandomization (blocking plus rerandomization)}
\label{sec:2.2}

In this section, we introduce stratified rerandomization, which could further balance low-dimensional covariates $\xm_{\mathcal{K}}$ beyond blocking in the design stage.
Although blocking is widely used in practice, it can only balance discrete covariates. Rerandomization is a more general approach for balancing both discrete and continuous covariates \citep{Morgan2012}. Rerandomization discards the treatment assignments that lead to covariate imbalances and accepts only those assignments that fulfill a pre-specified balance criterion.
Scholars have recommended combining blocking and rerandomization in the design stage \citep{Schultzberg2019,Wang2020rerandomization}. In particular, \citet{Wang2020rerandomization} proposed a stratified rerandomization strategy based on the Mahalanobis distance. Specifically, the weighted difference-in-means of covariates $\xm_{\mathcal{K}}$ is defined as $\htauw = \summ \pim \{ (\bar{\xr}_{[m]1})_\mathcal{K} - (\bar{\xr}_{[m]0})_\mathcal{K} \}$, where $(\bar{\xr}_{[m]z})_\mathcal{K}=\nmz^{-1} \sumim  I(Z_i = z) \xr_{\mathcal{K}}$, $z=0,1$. The Mahalanobis distance is defined as $\maha(\Z, \wm) = ( \htauw )^\T \{\cov( \htauw )\}^{-1} \htauw$. A treatment assignment is acceptable if and only if the corresponding Mahalanobis distance is less than or equal to a pre-specified threshold $a>0$; that is, $\maha(\Z, \wm) \leqslant a$. We denote $\mathcal{M}_a =\{\Z :  \maha(\Z, \wm ) \leqslant a \} $ the set of acceptable treatment assignments. \citet{Li2018} suggested choosing a suitable $a$ to ensure that the probability of a treatment assignment satisfying the balance criterion equals a certain value, for example, $p_a = \pr\{\maha( \Z, \wm) \leqslant a\}=0.001$. The procedure is presented in Algorithm \ref{alg::rr}.

\citet{Wang2020rerandomization} showed that the asymptotic distribution of $\htu$ under stratified rerandomization is a convolution of a normal distribution and a truncated normal distribution, and its asymptotic variance, denoted as $\sigma^2_{\unadjM}$, is less than or equal to that of $\htu$ in the case of stratified randomization. Moreover, the asymptotic variance can be estimated using a conservative estimator $\hat \sigma^2_{\unadjM}$, as indicated in the Supplementary Material. These conclusions hold for cases involving equal or unequal propensity scores. In this study, we propose a Lasso-based method for adjusting the high-dimensional covariates $\xm$ in the analysis stage and establish the asymptotic theory under stratified rerandomization.

\begin{algorithm}[t]
\caption{Stratified rerandomization using the Mahalanobis distance}
\label{alg::rr}
\begin{algorithmic}
    \STATE {1. Collect covariate data $\xm_{\mathcal{K}}$. }
    \STATE {2. (Re-)Randomize units into the treatment and control groups by stratified randomization and obtain the treatment assignment vector $\Z$. }
    \STATE {3. If $\maha(\Z, \wm) \leqslant a$, proceed to Step 4; otherwise, return to Step 2;}
    \STATE {4. Conduct the physical experiment using treatment assignment $\Z$.}
\end{algorithmic}
\end{algorithm}

\subsection{Notation}
\label{sec:2.3}

To facilitate the discussion, we use the following notation: For an $L$-dimensional column vector $\boldsymbol u = (u_1,\dots, u_L)^\T$, let $\| \boldsymbol u \|_0$, $\| \boldsymbol u \|_1$,  $\| \boldsymbol u \|_2$, and $\|  \boldsymbol u \|_\infty$ denote the $\ell_0$, $\ell_1$, $\ell_2$ and $\ell_\infty$ norms, respectively. For a subset $\mathcal{S} \subset \{1,\dots,L\}$, $\mathcal{S}^c$ is the complementary set of $\mathcal{S}$, and $\boldsymbol u_{\mathcal{S}} = (u_j, j \in \mathcal{S})^\T$ is the restriction of $\boldsymbol u$ on $\mathcal{S}$. Let $|\mathcal{S}|$ be the cardinality of $\mathcal{S}$. For matrix $A$, $\Lambda_{\max}(A)$ indicates the largest eigenvalue of $A$. Let $ \stackrel{d}\longrightarrow$ and $\stackrel{p}\longrightarrow$ denote the convergence in distribution and in probability, respectively. \zk{We use $c$ and $C$} to denote universal constants that do not change with $n$ but whose precise value may change from line to line.

For finite population quantities $\H = (\H_1,\dots,\H_n)^\t$ and $\Q = (\Q_1,\dots, \Q_n)^\t$, where $\H_i$ and $\Q_i$ could be potential outcomes (scalars), adjusted potential outcomes (scalars), or covariates (column vectors), the following notations are used. The block-specific finite population mean and sample mean are defined as $\bar{\H}_{[m]} = \nm^{-1} \sumim \H_i$ and $\bar{\H}_{[m]z} = \nmz^{-1} \sumim  I(Z_i = z) \H_i$, respectively.
The block-specific finite population covariance and sample covariance are defined as $S_{[m] \H \Q} = ( \nm - 1 )^{-1} \sumim ( \H_i - \bar{\H}_{[m]} ) ( \Q_i - \bar{\Q}_{[m]} )^\T$ and $s_{[m]\H \Q}=( \nmz - 1 )^{-1} \sumim I(Z_i = z) ( \H_i - \bar{\H}_{[m]z} ) ( \Q_i - \bar{\Q}_{[m]z} )^\T$. When $\H=\Q$, the subscripts are simplified to $S^2_{[m]\H} = S_{[m]\H \H}$ and $s^2_{[m]\H}=s_{[m]\H \H}$.
While these quantities do depend on $n$, they are not indexed with $n$ to maintain the simplicity of the notation.
For example, if $\H_i=Y_i(1)$ and $\Q_i=\xr_i$, we have $\bar{Y}_{[m]}(1) = \nm^{-1} \sumim Y_i(1)$, $\bar{Y}_{[m]1} = \nmt^{-1} \sumim  I(Z_i = 1) Y_i(1)$, $S_{[m] Y(1)\xm} = ( \nm - 1 )^{-1} \sumim \{ Y_i(1) - \bar{Y}_{[m]}(1) \} \{ \xr_i - \bar{\xr}_{[m]} \}^\T$, and $s_{[m] Y(1)\xm}=( \nmt - 1 )^{-1} \sumim I(Z_i = 1) \{  Y_i(1) - \bar{Y}_{[m]1} \} \{ \xr_i - \bar{\xr}_{[m]1} \}^\T$.
We summarize notations in two tables and include them in the Supplementary Material.

\section{Blocking, rerandomization, and Lasso adjustment}
\label{sec:b+r+l}

\zk{
In Section \ref{sec:3.1}, we propose a regression adjustment estimator with low-dimensional covariates ($p \ll n$) based on a projection perspective in randomized block experiments. In Section \ref{sec:Lasso-unequal}, we propose a Lasso-adjusted ATE estimator to handle high-dimensional covariates and accommodate stratified rerandomization. We also propose a variance estimator for the Lasso-adjusted estimator. Theoretical results, including asymptotic normality, validity of the variance estimator, and efficiency gain, are provided in Section \ref{sec:3.3}.
}

\subsection{Regression adjustment from a projection perspective}
\label{sec:3.1}

As the baseline covariates are not affected by the treatment, the average treatment effect of the covariates is $\taux = \summ \pim \{\bar{\xr}_{[m]} - \bar{\xr}_{[m]} \} = \bb{0}$.
Let $\htaux = \summ \pim ( \bxmz{1} - \bxmz{0} )$ be the weighted difference-in-means estimator for $\taux$.
To decrease the variance of $\htu$, we can project it onto $\htaux$. We define the projection coefficient vector $\g$ as follows:
$$
\begin{aligned}
\g =& \argmin_{\bgamma} E ( \htu - \tau -  \htaux^\T \bgamma )^2 = \cov( \htaux )^{-1} \cov( \htaux, \htu)  \\
= & \bigg\{ \summ \pim \frac{\Smx}{\emt ( 1 - \emt) } \bigg\}^{-1} \bigg\{ \summ \pim \frac{\Smxyz{1}}{\emt } + \summ \pim \frac{\Smxyz{0}}{ 1 - \emt } \bigg\}.
\end{aligned}
$$
The oracle projection estimator $\ttp = \htu - \htaux^\T \g $ is consistent, asymptotically normal, and has the smallest asymptotic variance among the estimators that are asymptotically equivalent to $\htu - \htaux^\T \bgamma$ for some adjusted vector $\bgamma\in\mathbb{R}^p$.

However, $\ttp$ is not feasible in practice, because it depends on the unknown vector $\g$.
To consistently estimate $\g$, we decompose it into two terms: Let $\emz=z \emt + (1 - z )( 1 - \emt)$, $z=0,1$. Then, $\g=\gz{0}+\gz{1}$ with
$$
\gz{z} = \bigg\{ \summ \pim \frac{\Smx}{\emt ( 1 - \emt) } \bigg\}^{-1} \bigg\{ \summ \pim \frac{\Smxyz{z}}{\emz }  \bigg\}, \quad z=0,1.
$$

Intuitively, when $\nmz \geq 2$, we can estimate $\gz{z}$ by replacing the block-specific covariances $\Smx$ and $\Smxyz{z}$ with the corresponding sample covariances $\smxz$ and $\smxyz{z}$ within block $m$ for treatment arm $z$. When $\nmz=1$ for some $m$ and $z$, both $\smxz$ and $\smxyz{z}$ are not well-defined. To address this issue, we can use the following estimators:
\begin{gather*}
\tsmxz=\frac{\nm}{\nmz(\nm-1)} \sumim I(Z_i=z) \left(\xr_{i}-\bxm\right)\left(\xr_{i}-\bxm\right)^\T,\\
\tsmxyz{z}=\frac{\nm}{\nmz(\nm-1)} \sumim I(Z_i=z) \left( \xr_{i}-\bxm\right) Y_{i},
\end{gather*}
that are unbiased and well-defined when $1\leqslant\nmz\leqslant\nm-1$.
Then, we can estimate $\gz{z}$ by the following plug-in estimator,
\begin{equation*}
\begin{aligned}
\hgoz &= \bigg\{ \summ \pim \frac{\tsmxz}{\emz ( 1 - \emz) } \bigg\}^{-1} \bigg\{ \summ \pim \frac{\tsmxyz{z}}{  \emz }  \bigg\}\\
&=\argmin_{\bgamma}
\summ \sum_{i \in [m], Z_i = z}  \frac{\nm^2}{\emz\nmz(\nm-1)}  \bigg \{ \sqrt{1 - \emz}\, Y_i - \frac{1}{\sqrt{1 - \emz}}\, ( \xr_{i} - \bar{\xr}_{[m]})^\T \bgamma \bigg \}^2.
\end{aligned}
\end{equation*}

To simplify the above expression, we introduce several weights. For $i\in [m]$, let $\wnz{z} = \nm^2/\{\emz\nmz(\nm-1)\}$, $\wyz{z} = 1 - \emz$, and $\wxz{z} = 1/(1 - \emz)$. The weighted potential outcomes and covariates are denoted by $\yw_i(z)=\sqrt{\wnz{z} \wyz{z}} \,Y_i(z)$ and $\xwr_{i}(z)=\sqrt{\wnz{z} \wxz{z}} \,( \xr_{i} - \bar{\xr}_{[m]} )$, respectively. Note that because the weights are different for $z=0,1$, we have two groups of weighted covariates $\xwr_{i}(z)$. Then, $\hgoz$ can be rewritten as
\begin{equation*}
\begin{aligned}
\hgoz =\argmin_{\bgamma}
\sum_{i:Z_i = z}  \Big \{  \yw_i - (\xwr_{i}- \bxwz{z})^\T \bgamma \Big \}^2,
\end{aligned}
\end{equation*}
where $\yw_i=Z_i\yw_i(1)+(1-Z_i)\yw_i(0)$, $\xwr_{i}=Z_i\xwr_i(1)+(1-Z_i)\xwr_i(0)$, and $\bxwz{z} =\nz^{-1}\sum_{i:Z_i= z} \xwr_{i}$.

\begin{remark}
The exact equivalent formula for $\hgoz$ is regressing $\yw_i$ on $\xwr_{i}$ for treatment arm $z$ without an intercept term, which has the same asymptotic distribution but worse finite sample performance. Therefore, we suggest using the with-intercept regression for estimating $\gz{z}$.
\end{remark}

Finally, we obtain a feasible estimator $\htp = \htu - \htaux^\T \hgo $, where $\hgo = \hgozt + \hgozc$. The feasible estimator $\htp$ achieves the same asymptotic efficiency as $\ttp$.
To the best of our knowledge, $\htp$ is the first regression-adjusted average treatment effect estimator that accommodates randomized block experiments involving both coarse and fine blocks. Meanwhile, $\htp$ is consistent with previous estimators in special cases. For instance, $\htp$ is exactly equivalent to the regression-adjusted estimator $\hat{\tau}_{R1}$ proposed by \citet{fogarty2018} for paired experiments. In the next section, we extend $\htp$ to a high-dimensional setting.

\subsection{Lasso-adjusted ATE estimator and variance estimator}
\label{sec:Lasso-unequal}

In this section, we consider the general framework that includes blocking, rerandomization, and high-dimensional covariates.
In a high-dimensional setting, if many covariates do not affect the potential outcomes, it is reasonable to assume that the projection coefficient $\g$ is sparse. More precisely, let $\mathcal{S} \in \{1,\dots,p\}$ be the set of relevant covariates that are predictive to the outcomes and let $s= |\mathcal{S}|$. We assume that $p \gg n$ but $s \ll n$. Both $\mathcal{S}$ and $s$ are unknown in practice.
Recall that in stratified rerandomization, we balance the low-dimensional covariates $\xm_{\mathcal{K}}$.
For simplicity, we assume that $\mathcal{K} \subset \mathcal{S}$. Otherwise, we consider $\mathcal{\tilde S} = \mathcal{S} \cup \mathcal{K}$ such that $\mathcal{K} \subset  \mathcal{\tilde S}$ and the following results holds with $\mathcal{S}$ replaced by $\mathcal{\tilde S}$. This also implies that we should include all covariates in $\mathcal{K}$ when conducting regression adjustment, as doing so is important to ensure the efficiency gain.

We still use $\g=\gz{0}+\gz{1}$ to denote the projection coefficient, where $\gzsc{z}= \boldsymbol 0$ and
$$
\gzs{z} = \bigg\{ \summ \pim \frac{\Smxs}{\emt ( 1 - \emt) } \bigg\}^{-1} \bigg\{ \summ \pim \frac{\Smxsyz{z}}{\emz }  \bigg\}, \quad z=0,1.
$$
We can estimate $\gz{z}$ using Lasso
$$
\begin{aligned}
\hgz{z} =  \argmin_{\bgamma}
\frac{1}{2 n_z}\sum_{i:Z_i = z}  \bigg \{  \yw_i - ( \xwr_{i} - \bxwz{z} )^\T \bgamma \bigg \}^2 + \lambda_z \| \bgamma \|_1,
\end{aligned}
$$
where $\lambda_z$ is the tuning parameter, $z=0,1$.
We replace $\g$ with its estimator $\hg=\hgz{1} + \hgz{0}$ and obtain the projection-originated Lasso-adjusted estimator, $\htlt = \htu -  \htaux^\T \hg$.
\zk{We summarize the implementation of $\htlt$ in Algorithm~\ref{alg:lasso}.}

\begin{algorithm}[t]
\caption{\zk{Lasso-adjusted ATE estimator}}
\label{alg:lasso}
\begin{algorithmic}
    \STATE \textbf{Input:} Outcome $Y_i$, treatment indicator $Z_i$, block indicator $B_i$, covariates $\xr_i$ (if rerandomization is performed for balancing $\xr_{i,\mathcal{K}}$, $\xr_{i,\mathcal{K}}$ should be included in $\xr_i$)
    \STATE \textbf{Output:} $\htlt$
    
    \STATE {\textbf{1. Transform outcome and covariates based on blocking scheme:}}
    \STATE  \hspace{\algorithmicindent}\textbf{for} $m=1,\ldots,M$ \textbf{do}
    \STATE \hspace{\algorithmicindent}\hspace{\algorithmicindent}$\nm = \sum_{i=1}^n I(B_i = m)$ \hfill // num. of units in block $m$
        \STATE  \hspace{\algorithmicindent}\hspace{\algorithmicindent}\textbf{for} $z=0,1$ \textbf{do}
        
        \STATE \hspace{\algorithmicindent}\hspace{\algorithmicindent}\hspace{\algorithmicindent} $\nmz = \sum_{i=1}^n I(B_i = m) I(Z_i = z)$ \hfill // num. of treated/control units in block $m$
        \STATE \hspace{\algorithmicindent}\hspace{\algorithmicindent}\hspace{\algorithmicindent} $\emz = \nmz / \nm$ \hfill // proportion of treated/control units in block $m$
\STATE  \hspace{\algorithmicindent}\hspace{\algorithmicindent}\textbf{end for}
\STATE  \hspace{\algorithmicindent}\textbf{end for}
    
\hspace{\algorithmicindent}\textbf{for} $i=1,\ldots,n$ \textbf{do}
    \STATE \hspace{\algorithmicindent}\hspace{\algorithmicindent} $\omega_i = \sum_{m=1}^M\sum_{z=0,1} [I(B_i = m) I(Z_i = z)\cdot \nm^2/\{\emz\nmz(\nm-1)\}]$
    \STATE \hspace{\algorithmicindent}\hspace{\algorithmicindent} $\omega_{i,Y} = 
    \sum_{m=1}^M\sum_{z=0,1} [I(B_i = m) I(Z_i = z)\cdot (1 - \emz)]$
    \STATE \hspace{\algorithmicindent}\hspace{\algorithmicindent} $\omega_{i,\xm} = 1 / \omega_{i,Y}$
    \STATE \hspace{\algorithmicindent}\hspace{\algorithmicindent} $\yw_i=\sqrt{\omega_i\, \omega_{i,Y}} \,Y_i$  \hfill // Transformed outcome
    \STATE \hspace{\algorithmicindent}\hspace{\algorithmicindent} $\xwr_{i}=\sqrt{\omega_i\, \omega_{i,\xm}} \,\big\{ \xr_{i} - \nm^{-1} \sum_{i=1}^n I(B_i = m)\cdot \xr_i \big\}$  \hfill // Transformed covariates
\STATE  \hspace{\algorithmicindent}\textbf{end for}

\STATE {\textbf{2. Fit Lasso in each treatment group:}}
  \STATE  \hspace{\algorithmicindent}\textbf{for} $z=0,1$ \textbf{do}

\STATE \hspace{\algorithmicindent}\hspace{\algorithmicindent} Fit Lasso with intercept based on data $\{\yw_i,\xwr_{i}\}_{i: Z_i=z}$
\STATE \hspace{\algorithmicindent}\hspace{\algorithmicindent} Determine tuning parameter $\lambda_z$ by cross-validation and obtain $\hgz{z}$

\STATE  \hspace{\algorithmicindent}\textbf{end for}
    
    \STATE {\textbf{3. Compute ATE estimator:}}
        \STATE  \hspace{\algorithmicindent} $\htu = \summ  (\nm/n)\big\{ \nmt^{-1} \sum_{i=1}^n I(B_i = m) Z_i Y_i - \nmc^{-1} \sum_{i=1}^n I(B_i = m)(1 - Z_i) Y_i \big\}$
        \STATE \hspace{\algorithmicindent} $\htaux=\summ  (\nm/n)\big\{ \nmt^{-1} \sum_{i=1}^n I(B_i = m) Z_i \xr_i - \nmc^{-1} \sum_{i=1}^n I(B_i = m)(1 - Z_i) \xr_i \big\}$
        \STATE \hspace{\algorithmicindent} $\htlt = \htu -  \htaux^\T (\hgz{0} + \hgz{1})$
\STATE {\textbf{Return}} $\htlt$
  
\end{algorithmic}
\end{algorithm}

To facilitate valid inference, we propose a conservative variance estimator for $\htlt$.
To handle fine blocks with one treated or control unit, we extend the variance estimator proposed by \citet{Pashley2017}. Specifically, let $\mathcal{A}_c=\{1\leqslant m\leqslant M:\nmt > 1, \ \nmc > 1\}$ and $\mathcal{A}_f=\{1\leqslant m\leqslant M:\nmt= 1 \textnormal{ or } \nmc = 1\}$ denote sets of coarse and fine blocks, respectively. Let $n_f=\sum_{m\in\mathcal{A}_f}\nm$ be the total number of units in fine blocks. For $m\in\mathcal{A}_f$, we define a weight $\omega_{[m]}=\nm^2/(n_f-2\nm)$ and assume $n_f>2\nm$ throughout the paper. Define the adjusted outcomes $\tY_i=Y_i- (\xr_i-\bxm)^\T \hg$. Let $\hat{\tau}_{\tY,[m]}=\bar{\tY}_{[m]1}-\bar{\tY}_{[m]0}$ and $\hat{\tau}_{\tY,f}=\sum_{m\in \mathcal{A}_f} (\nm/n_f) \hat{\tau}_{\tY,[m]}$. Then, $\slt^2$ can be estimated by:
$$
\begin{aligned}
\hslt^2 = \frac{{n}}{{n} - \hat s} \Bigg[
&\sum_{m\in\mathcal{A}_c} \pim \Big\{ \frac{ s^2_{[m]\tY(1)}}{\emt}  + \frac{ s^2_{[m]\tY(0)}}{1 - \emt} \Big\}\\
&+\left(\frac{n_f}{n}\right)^2\frac{n}{n_f+\sum_{m\in\mathcal{A}_f}\omega_{[m]}}\sum_{m\in\mathcal{A}_f}\omega_{[m]}(\hat{\tau}_{\tY,[m]}-\hat{\tau}_{\tY,f})^2\Bigg],
\end{aligned}
$$
where $\hat s = ||\hg||_0$. The factor ${n}/\{ {n} - \hat s \}$ adjusts for the degrees of freedom of the residuals to achieve better finite sample performance. Define $\hsu^2$ similarly to $\hslt^2$ with $\tY_i=Y_i- (\xr_i-\bxm)^\T \hg$ replaced by $Y_i$.

\begin{remark}[\zk{Connection to \citet{Liu2019}}]
\label{rm:opt1}
The Lasso-adjusted estimator $\htlt$ is a generalization of the OLS-adjusted estimator proposed by \citet{Liu2019} to high-dimensional settings, but we use different weights to ensure an efficiency gain in cases of heterogeneous propensity scores across blocks or the presence of fine blocks, i.e., $\nmz = 1$ for some $m=1,\dots,M$ and $z=0,1$.
\end{remark}

\begin{remark}[\zk{Non-linear adjustments}]
\label{rm:nonlinear}
The proposed Lasso-adjusted estimator is motivated by a linear projection, which implicitly restricts the class of estimators to $\htu - \htaux^\T \bgamma$ where $ \bgamma \in \mathbb{R}^p$ and $\ \bgamma_{\mathcal{S}^c}= \boldsymbol 0$. It is possible to further improve efficiency by using non-linear adjustments. In practice, we can include non-linear transformations of the original covariates to $\xm$ before using the Lasso for performing non-linear adjustments. As a demonstration, we include quadratic terms of the continuous covariates and two-way interactions in the real data analysis; see Section \ref{sec:rd}.
\end{remark}

\subsection{Theoretical properties}
\label{sec:3.3}

To investigate the asymptotic properties of $\htlt$, we decompose the original potential outcomes and define the approximation error $\e^*_i(z)$ as follows:
\begin{equation*}
Y_i(z) =  \bar{Y}_{[m]}(z) +  (\xr_i - \bxm)^\T \g +  \e^*_i(z), \quad i \in [m], \quad z=0,1.
\end{equation*}
We require the following regularity conditions to guarantee the asymptotic normality of $\htlt$.

\begin{condition}\label{cond:propensity}
There exists a constant $\propc \in(0,0.5)$ independent of $n$ such that $\propc \leqslant \emt  \leqslant 1 - \propc$ for $m=1,\dots,M$.
\end{condition}

\begin{condition}\label{cond:moment-X}
There exists a constant $L < \infty$ independent of $n$ such that for $z=0,1$ and $j=1,\dots,p$,
$$
\frac{1}{n}\summ \sumim (\xe_{ij} - \bxmj )^4 \leqslant L,
\quad
\frac{1}{n}\summ \sumim \{  \e^*_i(z)  \}^4 \leqslant L,
$$
$$
\frac{1}{n}\summ \sumim [  \yw_i(z) - \bar{Y}^\omega(z)-\{\xwr_i(z)-\bar{\xr}^\omega(z)\}^\T \gz{z} ]^4 \leqslant L,
$$
where
$\bar{Y}^\omega(z)=n^{-1}\sum_{i=1}^n Y_i^\omega(z)$ and $\bar{\xr}^\omega(z)=n^{-1}\sum_{i=1}^n \xr_i^\omega(z)$.
\end{condition}

\begin{condition}\label{cond:limit-variance}
The weighted variances $\summ \pim \Smesz{1}/\emt$, $\summ \pim \Smesz{0}/(1 - \emt)$, and $\summ \pim \Smestmesc$ tend to finite limits, with positive values for the first two terms. The limit of $$\summ \pim \Smesz{1}/\emt + \summ \pim \Smesz{0}/(1 - \emt) - \summ \pim \Smestmesc $$ is strictly positive.
\end{condition}

\begin{condition}\label{cond:re}
There exist constants $C>0$ and $\xi > 1$ independent of $n$ such that $|| \h_\mathcal{S} ||_1 \leqslant C s || \Sxw \h ||_{\infty}$, $\forall \h \in \{\h:  || \h_{\mathcal{S}^c} ||_1 \leqslant \xi || \h_\mathcal{S} ||_1  \}$, where $\Sxw = \summ\pim\Smx/\{\emt(1-\emt)\}$.
\end{condition}

\begin{condition}\label{cond:tuning}
There exist constants $C>0$ and $0 < \eta < (\xi - 1) / (\xi + 1)$ such that the tuning parameters of Lasso satisfy
$$
s \sqrt{\log p} \lambda_z \rightarrow 0\quad \textnormal{and}\quad
\lambda_z\geqslant \frac{1}{\eta}\Big\{C\sqrt{\frac{\log p}{n}}+\delta_n\Big\}, \quad z=0,1,
$$
where $\delta_n = \max_{z=0,1} \| \summ (\pim - n^{-1}) \emz  S_{[m]\xm^{\omega}(z) \{ \yw(z) -\xm^{\omega}(z)  \gz{z} \} }  \|_{\infty}$.
\end{condition}

\begin{condition}\label{cond:rerandomization}
The weighted covariances $\summ \pim \Smw/\emt$, $\summ \pim \Smw/(1-\emt)$, $\summ \pim S_{[m]  \xm_{\mathcal{K}} \e^*(1)}/\emt$, and $\summ \pim S_{[m]  \xm_{\mathcal{K}} \e^*(0)}/(1-\emt)$ tend to finite limits, and the limit of $V_{\wm \wm}=\summ \pim \Smw / \{ \emt ( 1 - \emt) \}$ is strictly positive definite.
\end{condition}

\begin{remark}
Conditions~\ref{cond:propensity} and \ref{cond:limit-variance} are required for deriving the asymptotic normality of low-dimensional regression-adjusted treatment effect estimators \citep{Freedman2008,lin2013}. 
\zk{
Condition \ref{cond:moment-X} assumes the bounded fourth moment for both covariates and residuals, which is stronger than the maximum second moment condition in \citet{Liu2019}.
Conditions~\ref{cond:moment-X}, \ref{cond:re} and \ref{cond:tuning}} are similar to the typical conditions for deriving the $l_1$ convergence rate of Lasso under complete randomization, with treated units being sampled without replacement from the finite population \citep{bloniarz2015lasso}.
\end{remark}

\begin{remark}
When $s=o_p(\sqrt{n}/\log p)$, $\lambda_z \propto \sqrt{\log p/n}$, and $\delta_n=O_p(\sqrt{\log p/n})$ are satisfied, Condition~\ref{cond:tuning} holds. Sparsity scaling $s=o_p(\sqrt{n}/\log p)$ is also required for the de-biased Lasso in high-dimensional inference \citep{zhang2014confidence, van2014asymptotically, javanmard2014confidence}. The tuning parameter $\lambda_z$ is typically assumed to scale as $\sqrt{\log p/n}$ for the Lasso. A sufficient condition for $\delta_n=O_p(\sqrt{\log p/n})$ is the existence of $s$ transformed covariates that can perfectly predict the transformed potential outcomes. Specifically, perfect prediction means $
\yw_i(z) - \bar{Y}^\omega(z)-\{\xwr_i(z)-\bar{\xr}^\omega(z)\}^\T \gzs{z}=0$ for all $i$ and $z$, which implies $\delta_n=0$. In practice, perfect prediction is nearly impossible, so we allow $\delta_n$ to scale as $O_p(\sqrt{\log p/n})$.
\end{remark}

\begin{theorem}
\label{thm:srr2}
Suppose that Conditions~\ref{cond:propensity}--\ref{cond:rerandomization} hold. Under stratified rerandomization $\mathcal{M}_a$ with a fixed $a > 0$, $\sqrt{n} ( \htlt - \tau )$ is asymptotically normal, that is, $\{ \sqrt{n} ( \htlt - \tau ) \mid \mathcal{M}_a \} \stackrel{d}\longrightarrow N(0, \slt^2)$, where
$$
\slt^2 = \lim\limits_{n \rightarrow \infty} \summ \pim \Big\{ \frac{\Smesz{1}}{\emt} + \frac{\Smesz{0}}{1 - \emt} - \Smestmesc  \Big\}.
$$
Furthermore, $\htlt$ is asymptotically more efficient than $\htu$ under stratified rerandomization, as implied by $\slt^2 \leqslant \sigma^2_{\unadjM} \leqslant \sigma^2_{\unadj}$.

\end{theorem}

Theorem~\ref{thm:srr2} shows that the asymptotic distribution of $\htlt$ in the stratified rerandomization case is normal. Moreover, the asymptotic variance of $\htlt$ is no greater than that of $\htu$ in the stratified randomization and stratified rerandomization scenarios, even for cases involving unequal propensity scores or fine blocks. Thus, the efficiency achieved using $\htlt$ is never lower than that achieved using $\htu$.


Subsequently, we outline the theoretical results regarding variance estimators within the general framework of stratified rerandomization.

\begin{condition}\label{cond:largest-RE}
There exists a constant  $C < \infty$ such that $\Lambda_{\max}( \Sxw ) \leqslant C$ and $n^{-1}\sumi\{ Y_i(z) -  (\xr_i - \bxm)^\T \g  \}^2\leqslant C$, $z=0,1$.
\end{condition}

\begin{theorem}\label{thm:srr2var}
Suppose that Conditions~\ref{cond:propensity}--\ref{cond:largest-RE} hold. Under stratified rerandomization $\mathcal{M}_a$ with a fixed $a>0$, $\hslt^2$ converges in probability to
$$
\slt^2 + \lim\limits_{n \rightarrow \infty}  \left\{\sum_{m\in\mathcal{A}_c} \pim\Smestmesc +
\left(\frac{n_f}{n}\right)^2\frac{n}{n_f+\sum_{m\in\mathcal{A}_f}\omega_{[m]}}\sum_{m\in\mathcal{A}_f}\omega_{[m]}({\tau}_{[m]}-{\tau}_{f})^2
\right\},
$$
which is no less than $\slt^2$.
Moreover, $\hslt^2 \leqslant \hat \sigma^2_{\unadjM} \leqslant \hat \sigma^2_{\unadj}$ holds in probability.
\end{theorem}

If the treatment effects within each coarse block are constant; that is, $\ti = \tm$ for all $i \in [m]$, then we have $\Smestmesc = 0$. If further the average treatment effects across fine blocks are constants, i.e., ${\tau}_{[m]} ={\tau}_{f}$ for all $m\in\mathcal{A}_f$, $\hslt^2$ is a consistent estimator of $\slt^2$. In general, $\hslt^2$ is a conservative variance estimator. Given $0<\alpha <1$, let $q_{\alpha/2}$ denote the upper $\alpha/2$ quantile of the standard normal distribution.
Based on Theorems~\ref{thm:srr2} and \ref{thm:srr2var}, we can construct an asymptotically valid $1 - \alpha$ confidence interval for $\tau$: $\big[\htlt - q_{\alpha/2} {\hslt} / \sqrt{n}, \htlt + q_{\alpha/2} {\hslt} / \sqrt{n} \big]$, whose asymptotic coverage rate is greater than or equal to $1 - \alpha$.
The length of this confidence interval is less than or equal to that based on the estimated asymptotic distributions of $\htu$ in the stratified randomization and stratified rerandomization cases. Therefore, $\htlt$ is the most efficient estimator for all the considered scenarios.
\zk{We provide remarks for discussions and connections to the existing literature.}

\begin{remark}[\zk{Optimality}]
\label{rm:opt2}
Because $\slt^2  = \lim\limits_{n \rightarrow \infty}  E \{ \htu - \tau -  \htaux^\T \bgamma_{\proj} \}^2$, Theorem~\ref{thm:srr2} indicates that $\htlt$ is not only feasible but also has the same asymptotic distribution as $\ttp$ even for unequal propensity scores.
In other words, $\htlt$ has the smallest asymptotic variance among the estimators that have the same asymptotic distribution as $\htu - \htaux^\T \bgamma$ for $\bgamma \in \mathbb{R}^p$ with $\bgamma_{\mathcal{S}^c}= \boldsymbol 0$.
\end{remark}

\begin{remark}[\zk{Distributions of $\htu$ and $\htlt$ under rerandomization}]
The asymptotic distribution of the unadjusted estimator under rerandomization, $\{ \sqrt{n} ( \htu - \tau ) \mid \mathcal{M}_a \}$, is typically a convolution of a normal distribution and a truncated normal distribution \citep{Li2018}. In contrast, the asymptotic distribution of the OLS-adjusted estimator under rerandomization is normal if the covariates used in the design stage are included in the regression adjustment \citep{Li2020}. Because the Lasso-adjusted estimator uses all covariates in $\mathcal{K}$, the set of covariates used in the rerandomization, the asymptotic distribution of $\htlt$ under stratified rerandomization is normal.
\end{remark}

\begin{remark}[\zk{Special cases}]
When $M=1$, stratified rerandomization becomes rerandomization.
Therefore, Theorem~\ref{thm:srr2} extends the results from \citet{Li2020} to high-dimensional settings.
When $a=\infty$, stratified rerandomization becomes stratified randomization.
Our method and theory also apply to this important special case without the need for Condition \ref{cond:rerandomization}. The corresponding theory is provided in the Supplemental Materials.
\end{remark}

\begin{remark}[\zk{Finite sample gain from rerandomization}]
In comparison to stratified randomization, the asymptotic efficiency in the stratified rerandomization scenario does not increase when $\htlt$ is used in the analysis stage. Similar conclusions were derived by \citet{Li2020}, who examined the combination of rerandomization and OLS adjustment.
However, our simulation results indicate that stratified rerandomization can decrease the mean squared error (MSE) of $\htlt$ in finite samples and is thus recommended.
The distinction between the asymptotic theory and finite sample performance may be due to the following two reasons: (1) the asymptotic results for a fixed threshold $a$ do not accurately represent the finite sample behavior when $p_a$ is close to zero \citep{wang2022rerandomization}, and (2) from an asymptotic point of view, Lasso is expected to have adjusted all covariates such that using rerandomization to balance any covariates in advance makes no difference, regardless of how much balance is achieved or how small the fixed threshold $a$ is. However, in finite samples, Lasso may not always be consistent for variable selection. As a result, some relevant covariates, especially those in $\mathcal{K}$, may not be selected and adjusted by the Lasso (resulting in corresponding Lasso-adjusted coefficients being zero due to regularization). In contrast, rerandomization always ensures the balancing of those covariates, leading to better finite sample performance. 
\zk{We can also force Lasso to adjust covariates in $\mathcal{K}$ by setting their corresponding $l_1$ penalties as 0. 
The simulation in the Supplementary Materials shows that the Lasso with forced adjustment could reduce the MSE, but it is not as efficient as the Lasso with rerandomization in the design stage.}
\end{remark}

\begin{remark}[\zk{Power issue}]
\zk{When considering testing $H_0:\tau=0$ versus $H_1:\tau\neq0$. 
We reject $H_0$ if and only if $\tau_0$ does not belong to $\big[\htlt - q_{\alpha/2} {\hslt} / \sqrt{n}, \htlt + q_{\alpha/2} {\hslt} / \sqrt{n} \big]$.
By Theorems~\ref{thm:srr2} and \ref{thm:srr2var}, this test is asymptotically valid (conservative) under $H_0$.
Due to conservative inference, the power of the proposed test may be less than the power of the test based on $(\htu,\hat\sigma^2_{\unadj})$ when the true $\tau$ is small.
A similar conclusion regarding the conservative inference for rerandomization is also obtained in \citet{branson2024power}.
The simulation results in the Supplementary Materials show that the proposed test could increase power compared to the test based on $(\htu,\sigma^2_{\unadj})$ provided that the true $\tau$ is not very small.}
\end{remark}

\begin{remark}[\zk{Proof techniques}]
The proof of Theorem~\ref{thm:srr2} relies on novel concentration inequalities for the weighted sample mean and covariance under stratified randomization. These inequalities are crucial for deriving the $l_1$ convergence rate of the Lasso estimator in a finite population and randomization-based inference framework. We obtain these inequalities in general asymptotic regimes, including the cases of (1) $M$ tending to infinity with fixed $\nm$ and (2) $\nm$ tending to infinity with fixed $M$. These inequalities are of independent interest in other fields where stratified sampling without replacement is performed. This aspect is discussed extensively in the Supplementary Material.
\end{remark}

\section{Simulation}
\label{sec:sim}

This section describes simulation studies performed to examine the finite-sample performance of the proposed methods. We set the sample size as $n=300$ and $600$. We consider four types of blocks: two large coarse blocks with $n_{[m]}=150$ or 300 and $M=2$, many small coarse blocks with $n_{[m]}=10$ and $M=30$ or 60, hybrid coarse blocks with $n_{[m]}^S=10$, $M^S=10$ or 20, $n_{[m]}^L=100$ or 200, and $M^L=2$, where the superscripts ``S'' and ``L'' denote small and large blocks, respectively, and many triplet fine blocks with $n_{[m]}=3$ and $M=100$ or 200. For the first three types of blocks, we consider equal propensity scores with $\emt$ equal to $0.5$ and unequal propensity scores with $\emt$ evenly spaced in values between 0.3 and 0.7. For the last type of block, we set $\emt$ to be $2/3$ or evenly spaced in values between 0.3 and 0.7. The number of treated units in each block is equal to $\mathrm{round}(\emt n_{[m]})$. We also consider a scenario without blocking and set $n_1 = \sum_{i=1}^{n} Z_i = 0.5 n$. The potential outcomes are generated as follows: $Y_i(z)=(B_i/M)^{2z+1}+\xr_i^\T \bz{z}-2\xr_{bc,i}^\T \bz{z} + \e_i(z)$, $i=1,...,n$, $z = 0, 1$, where $\xr_i$ is generated from a $p$-dimensional multivariate normal distribution $N(0,\Sigma)$ with $\Sigma_{ij}=\rho^{|i-j|}$, $\xr_{bc,i}$ is generated by centering $\xr_i$ in each block, the first $s$ elements of $\bz{z}$ are generated from the uniform distribution on $[0,1]$, the remaining elements are zero, and $\e_i(z)$ is generated from a normal distribution with a mean of zero and variance of $\sigma^2$ such that the signal-to-noise ratio is equal to 5. We set $p=400$, $s=10$, and $\rho=0.6$. The potential outcomes and covariates are generated once and then kept fixed.

For each scenario, we consider two designs (with/without rerandomization) and two estimators (unadjusted and Lasso-adjusted estimators). We set $\xr_{i,\mathcal{K}}$ as the first $k=5$ dimensions of $\xr_i$ and $p_a=0.001$ for rerandomization. We use the R package ``glmnet'' to fit the solution path of Lasso. We choose the tuning parameter in Lasso via 10-fold cross validation with the default option ``lambda.1se.'' We replicate the randomization/rerandomization 1000 times to evaluate the repeated sampling properties.
We employ a bootstrap approach to evaluate the variability across simulations. Specifically, subsequent to obtaining 1000 sets of inferential results (point estimates and confidence intervals), we resample with replacement from them to construct new bootstrap samples of inferential results. These bootstrap samples are utilized to compute bootstrap versions of summary statistics, such as the coverage probability. Through the iterative resampling process conducted 500 times, we obtain 500 bootstrap versions of summary statistics. Finally, we utilize the standard error associated with these 500 bootstrap versions to provide an approximation of the Monte Carlo standard error corresponding to summary statistics.

\begin{figure}[t]
\centering
\includegraphics[scale = 0.6]{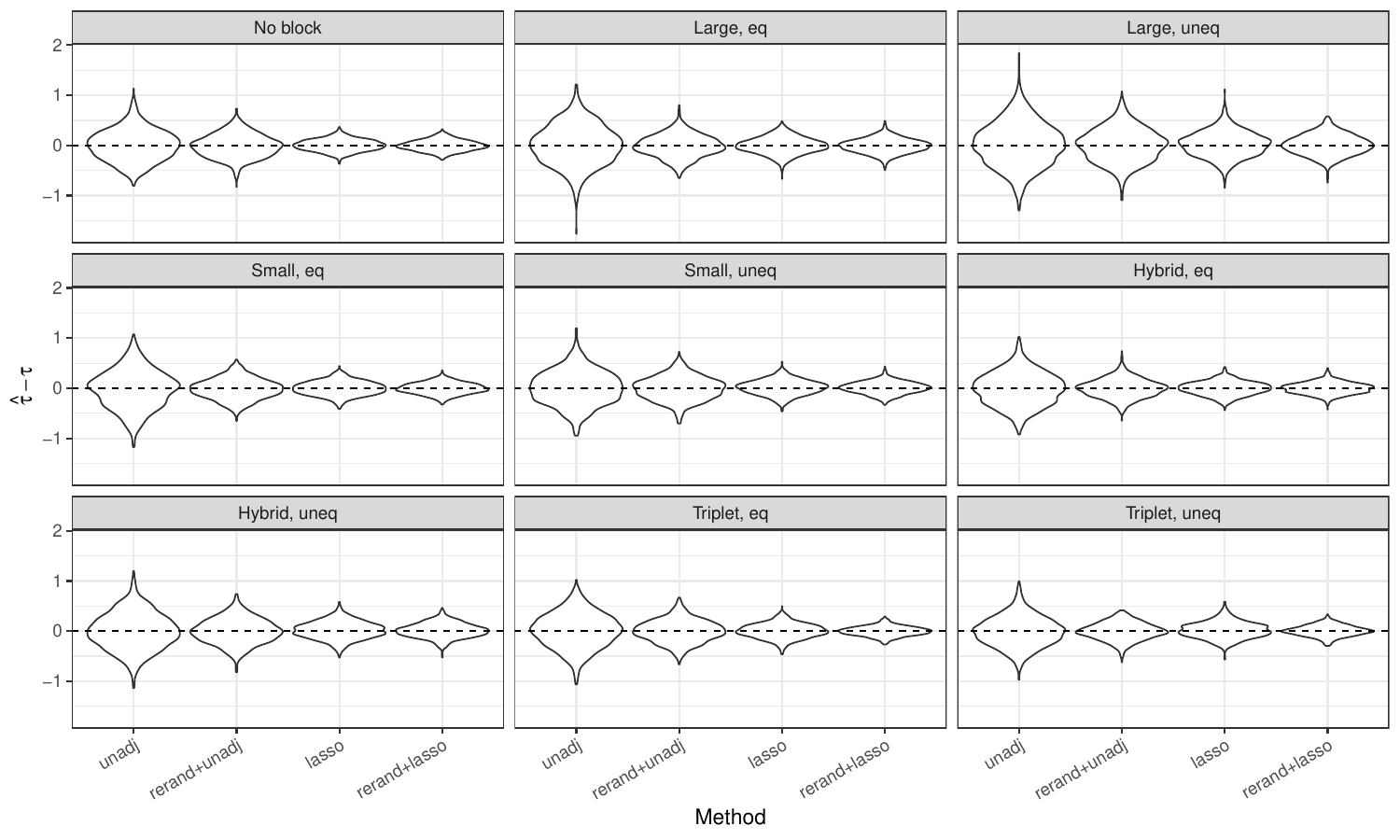}
\caption{Distributions of the average treatment effect estimators minus the true value of the average treatment effect for different scenarios when $n=300$.}
\label{fig:sim_dist300}
\end{figure}

Figure \ref{fig:sim_dist300} shows the distributions (violin plots) of different estimators for $n=300$. All distributions are symmetric around the true value of the average treatment effect. The distributions of the Lasso-adjusted estimator are more concentrated than those of the unadjusted estimator under both randomization and rerandomization. Table \ref{tab:sim_300} presents several summary statistics for different estimators when $n=300$. The simulation results for $n=600$ are similar and relegated to the Supplementary Material.
First, for all designs, the absolute value of the bias of each estimator is considerably smaller than the standard deviation (SD). Second, compared with the unadjusted estimator without rerandomization, the Lasso-adjusted estimator with rerandomization reduces the standard deviation and root mean squared error (RMSE) by 54\%--77\%. Third, the empirical coverage probabilities (CP) of all estimators reach the nominal level $95\%$ (in a few cases, the coverage probabilities are less than $95\%$ but very close to $95\%$). Fourth, compared to the unadjusted estimator without rerandomization, the Lasso-adjusted estimator with rerandomization decreases the mean confidence interval length (Length) by 37\%--58\%. Finally,
given using the Lasso-adjusted estimator in the analysis stage, compared to stratified randomization, stratified rerandomization can further decrease the root mean squared error.
Thus, based on the simulation results, our final recommendation is to implement stratified rerandomization in the design stage and to use the Lasso-adjusted estimator in the analysis stage.

\begin{table}[p]
\caption{\label{tab:sim_300}Simulation results for different scenarios when $n=300$.}
\centering
\begin{threeparttable}
\begin{tabular}[t]{>{\centering\arraybackslash}p{1.4cm}ccrrrrr}
\toprule
\multicolumn{1}{c}{Scenario} & \multicolumn{1}{c}{Rerand.} & \multicolumn{1}{c}{Est.} & \multicolumn{1}{c}{Bias} & \multicolumn{1}{c}{SD} & \multicolumn{1}{c}{RMSE} & \multicolumn{1}{c}{CP} & \multicolumn{1}{c}{Length}\\
\midrule
& no & unadj & 0.6 (1.0) & 30.7 (0.7) & 30.6 (0.7) & 96.6 (0.6) & 130.5 (0.1)\\

& yes & unadj & -0.8 (0.7) & 22.7 (0.5) & 22.7 (0.5) & 97.1 (0.5) & 98.2 (0.1)\\

& no & lasso & 0.4 (0.4) & 11.5 (0.3) & 11.5 (0.3) & 99.6 (0.2) & 66.9 (0.1)\\

\multirow{-4}{1.4cm}{\centering\arraybackslash No block} & yes & lasso & -0.2 (0.3) & 10.6 (0.2) & 10.6 (0.2) & 100.0 (0.0) & 67.0 (0.1)\\
\midrule

& no & unadj & 1.2 (1.2) & 39.4 (0.9) & 39.4 (0.9) & 96.0 (0.6) & 161.2 (0.1)\\

& yes & unadj & -1.1 (0.7) & 22.0 (0.5) & 22.0 (0.5) & 97.5 (0.5) & 99.0 (0.1)\\

& no & lasso & -0.3 (0.5) & 16.4 (0.4) & 16.4 (0.4) & 98.7 (0.3) & 81.6 (0.1)\\

\multirow{-4}{1.4cm}{\centering\arraybackslash Large, equal} & yes & lasso & -0.7 (0.4) & 14.2 (0.3) & 14.2 (0.3) & 99.2 (0.3) & 81.8 (0.1)\\
\midrule

& no & unadj & 2.8 (1.3) & 43.5 (1.0) & 43.5 (1.0) & 96.4 (0.6) & 182.8 (0.3)\\

& yes & unadj & 0.7 (1.1) & 32.9 (0.7) & 32.9 (0.7) & 96.5 (0.6) & 139.9 (0.2)\\

& no & lasso & 2.6 (0.8) & 24.8 (0.6) & 25.0 (0.6) & 98.0 (0.4) & 115.1 (0.3)\\

\multirow{-4}{1.4cm}{\centering\arraybackslash Large, unequal} & yes & lasso & 1.3 (0.6) & 20.0 (0.4) & 20.0 (0.4) & 99.6 (0.2) & 115.2 (0.2)\\
\midrule

& no & unadj & -1.8 (1.2) & 37.9 (0.8) & 37.9 (0.8) & 95.7 (0.6) & 154.9 (0.1)\\

& yes & unadj & -1.3 (0.7) & 20.4 (0.4) & 20.4 (0.4) & 97.1 (0.5) & 90.0 (0.1)\\

& no & lasso & -0.0 (0.4) & 13.9 (0.3) & 13.8 (0.3) & 99.4 (0.2) & 72.0 (0.1)\\

\multirow{-4}{1.4cm}{\centering\arraybackslash Small, equal} & yes & lasso & -0.4 (0.4) & 11.1 (0.2) & 11.1 (0.2) & 100.0 (0.0) & 72.2 (0.1)\\
\midrule

& no & unadj & 1.0 (1.1) & 33.8 (0.7) & 33.8 (0.7) & 95.3 (0.6) & 137.0 (0.2)\\

& yes & unadj & -0.5 (0.7) & 23.2 (0.5) & 23.2 (0.5) & 94.7 (0.7) & 93.1 (0.2)\\

& no & lasso & 1.7 (0.5) & 14.8 (0.3) & 14.9 (0.3) & 97.4 (0.5) & 65.7 (0.1)\\

\multirow{-4}{1.4cm}{\centering\arraybackslash Small, unequal} & yes & lasso & 0.9 (0.4) & 11.8 (0.3) & 11.8 (0.3) & 99.3 (0.3) & 65.6 (0.1)\\
\midrule

& no & unadj & -1.1 (1.1) & 32.6 (0.7) & 32.6 (0.7) & 97.2 (0.5) & 144.5 (0.1)\\

& yes & unadj & -0.1 (0.6) & 18.6 (0.5) & 18.6 (0.5) & 98.7 (0.3) & 94.3 (0.1)\\

& no & lasso & -0.4 (0.4) & 13.4 (0.3) & 13.4 (0.3) & 99.8 (0.2) & 82.6 (0.1)\\

\multirow{-4}{1.4cm}{\centering\arraybackslash Hybrid, equal} & yes & lasso & 0.0 (0.4) & 11.5 (0.3) & 11.5 (0.3) & 99.9 (0.1) & 82.8 (0.1)\\
\midrule

& no & unadj & -0.2 (1.2) & 36.2 (0.8) & 36.2 (0.8) & 96.3 (0.6) & 149.3 (0.1)\\

& yes & unadj & -0.2 (0.7) & 24.8 (0.6) & 24.8 (0.6) & 96.6 (0.6) & 106.0 (0.2)\\

& no & lasso & 1.8 (0.5) & 17.3 (0.4) & 17.3 (0.4) & 98.0 (0.4) & 80.2 (0.2)\\

\multirow{-4}{1.4cm}{\centering\arraybackslash Hybrid, unequal} & yes & lasso & 1.3 (0.5) & 14.3 (0.3) & 14.3 (0.3) & 99.2 (0.3) & 80.3 (0.2)\\
\midrule

& no & unadj & 1.1 (1.0) & 34.3 (0.7) & 34.3 (0.7) & 95.1 (0.7) & 135.7 (0.2)\\

& yes & unadj & -0.1 (0.7) & 22.0 (0.5) & 22.0 (0.5) & 93.8 (0.8) & 86.4 (0.4)\\

& no & lasso & 0.2 (0.5) & 14.5 (0.3) & 14.5 (0.3) & 96.2 (0.6) & 60.0 (0.2)\\

\multirow{-4}{1.4cm}{\centering\arraybackslash Triplet, equal} & yes & lasso & -0.2 (0.3) & 9.4 (0.2) & 9.4 (0.2) & 100.0 (0.0) & 59.5 (0.2)\\
\midrule

& no & unadj & -0.9 (1.0) & 29.7 (0.7) & 29.7 (0.7) & 95.1 (0.7) & 118.9 (0.2)\\

& yes & unadj & -0.7 (0.6) & 17.1 (0.4) & 17.1 (0.4) & 93.9 (0.8) & 67.1 (0.3)\\

& no & lasso & 1.7 (0.5) & 16.4 (0.4) & 16.4 (0.4) & 95.6 (0.7) & 66.5 (0.2)\\

\multirow{-4}{1.4cm}{\centering\arraybackslash Triplet, unequal} & yes & lasso & 0.5 (0.3) & 10.8 (0.3) & 10.8 (0.3) & 99.9 (0.1) & 67.0 (0.2)\\
\bottomrule
\end{tabular}
\begin{tablenotes}
\item Note: The numbers in brackets are the corresponding standard errors estimated using the bootstrap with 500 replications. Bias, SD, RMSE, CP, Length, and their standard errors are multiplied by 100.
\end{tablenotes}
\end{threeparttable}
\end{table}

\section{Real data illustration}
\label{sec:rd}

\zk{In this section, we consider a randomized block experiment with heterogeneous propensity scores across blocks and a matched observational study with fine blocks.}

\subsection{``Opportunity Knocks'' experiment}
\label{sec:rd_ok}

In this part, we use experimental data to illustrate the merits of the combination of stratified rerandomization and the Lasso adjustment. The ``Opportunity Knocks'' (OK) randomized experiment aimed at evaluating the effect of academic achievement awards on the academic performance of college students \citep{angrist2014opportunity}. Based on sex and discretized high school grades, $n=506$ second-year college students were stratified into $M=8$ blocks, with sizes ranging from 42 to 90. In each block, only approximately 25 students were assigned to the treatment group (receiving incentives); thus, the propensity scores were significantly different across blocks.

We consider the grade point average (GPA) at the end of the fall semester as the outcome. There were 23 baseline covariates, such as demographic variables, GPA in the previous year, and whether the students correctly answered tests about the scholarship formula. We adjust for the main effect, quadratic terms of the continuous covariates, and two-way interactions. The design matrix $\xm$ contains $p=253$ columns (covariates) and $n=506$ rows (observations). Based on the unadjusted estimator, the average treatment effect estimate is 0.032 and the 95\% confidence interval is $[-0.099, 0.163]$. Based on the Lasso-adjusted estimator, which selects two covariates (``gpapreviousyear'' and ``gpapreviousyear:test1correct'') into the model, the average treatment effect estimate is 0.038 and the 95\% confidence interval is $[-0.069, 0.146]$. Both confidence intervals contain zero, which means that there is insufficient evidence to support the effectiveness of the scholarship program. The Lasso-adjusted estimator appears to be more efficient than the unadjusted estimator because it shortens the interval length by 18\%.

\subsection{An observational study of fish consumption}
\label{sec:rd_fish}

The second dataset was obtained from the US National Health and Nutrition Examination Survey (NHANES) 2013–2014. \citet{zhao2018cross} investigated the effect of high-level fish consumption on biomarkers based on this dataset. We use this dataset to illustrate the application of the proposed Lasso-adjusted estimator in the matched observational study. There were 88 people with high fish consumption and 663 with low fish consumption. We use the blood cadmium level on the log scale as the outcome. Many covariates were available, such as demographic variables, disability, and history of drugs, alcohol, and smoking. As in \citet{Pashley2017}, we match for age, sex, race, income, education, and smoking. We fit the propensity score model using the R package ``brglm'' and perform full matching using the R package ``optmatch,'' which produces 88 fine blocks, each consisting of one treated unit and a range of one to eight control units.

After full matching, we could analyze the data as a finely stratified experiment \citep{bind2019,Pashley2017}. In contrast to \citet{Pashley2017}, where inference is based on the weighted difference-in-means estimator, we perform covariate adjustment to improve efficiency. Thirty baseline covariates were included in the initial dataset. We perform feature engineering and include the main effect, quadratic terms of the continuous covariates, and two-way interactions, which generate a design matrix $\xm$ with $p=390$ columns (covariates) and $n=751$ rows (observations). Based on the unadjusted estimator, the average treatment effect estimate is 0.139 and the 95\% confidence interval is $[-0.052, 0.330]$. Based on the Lasso-adjusted estimator, which selects 16 covariates into the model, the average treatment effect estimate is 0.129 and the 95\% confidence interval is $[-0.046, 0.305]$. Both methods indicate that there is insufficient evidence to support the hypothesis that high fish consumption affects blood cadmium level. Notably, the Lasso-adjusted estimator shortens the confidence interval length by 8\% and is thus more efficient than the unadjusted estimator.

\section{Discussion}
\label{sec:dis}

This study aimed to enhance the estimation and inference efficiencies of the average treatment effect in randomized experiments when many baseline covariates are available. We propose a Lasso-adjusted average treatment effect estimator in randomized block experiments based on a projection perspective. Under mild conditions, we obtain the asymptotic distribution of the proposed estimator when blocking, rerandomization, or both are implemented in the design stage. We demonstrate that the proposed estimator enhances, or at least does not deteriorate, the precision compared with that associated with the unadjusted estimator.
Our results are design-based and robust to model misspecification as long as the relationship between the potential outcomes and covariates can be approximated well by a sparse linear projection.
Our results are also robust to heterogeneous block sizes, propensity scores, and treatment effects. In addition, we propose a conservative variance estimator to construct asymptotically conservative confidence intervals or tests for the average treatment effect.
Based on the simulation results, we recommend using blocking and rerandomization in the design stage to balance a subset of covariates that are most predictive to the potential outcomes and then implementing regression adjustment using the Lasso in the analysis stage to adjust for the remaining covariate imbalances.
Similar to the findings reported by \citet{Li2020}, when rerandomization or the combination of blocking and rerandomization is used in the design stage, the Lasso adjustment should consider all of the covariates used in the rerandomization to ensure efficiency gains.

To render the theory and methods more intuitive, we focus on inferring the average treatment effect for a binary treatment. Our analysis can be generalized to multiple value treatments, including factorial experiments \citep{Fisher1935,Li2020factorial,liu2021randomization}. Moreover, it may be interesting to extend our results to other complicated settings, such as binary outcomes based on penalized logistic regression \citep{freedman2008randomization, zhang2008binary} and the use of other machine learning methods such as random forest \citep{wager2016}, $L_2$-boosting \citep{kueck2022estimation}, and neural networks \citep{farrell2021deep}.

In practice, some experimental units may not comply with their treatment assignment; that is, the actual treatments received by the experimental units may be different from the treatments assigned \citep{imbens1994identification,angrist1996identification}. When there is noncompliance, investigators often use the two-stage least squares to estimate the complier average treatment effect \citep{angrist2008mostly}. It would be interesting to extend our methods to noncompliance settings.

\section*{Acknowledgement}
Dr. Liu is supported by the National Natural Science Foundation of China (Grant No. 12071242). Dr. Yang is supported by the National Natural Science Foundation of China (Grant No. 12001557, 12371281), the Emerging Interdisciplinary Project, Program for Innovation Research, and the Disciplinary Funding in Central University of Finance and Economics. The authors thank the editor, associate editor and reviewers for their valuable suggestions.

\section*{Supplementary Material}

The Supplementary Material provides concentration inequalities under stratified randomization, proofs and additional simulation results.

\bibliographystyle{agsm}
\bibliography{causal}

@article{imbens1994identification,
	Author = {Imbens, Guido W and Angrist, Joshua D},
	Date-Added = {2016-11-13 22:05:49 +0000},
	Date-Modified = {2016-11-13 22:05:49 +0000},
	Journal = {Econometrica},
	Number = {2},
	Pages = {467--475},
	Publisher = {JSTOR},
	Title = {Identification and estimation of local average treatment effects},
	Volume = {62},
	Year = {1994}}

@article{angrist1996identification,
	Author = {Angrist, Joshua D and Imbens, Guido W and Rubin, Donald B},
	Date-Added = {2016-11-13 22:05:12 +0000},
	Date-Modified = {2016-11-13 22:05:12 +0000},
	Journal = {Journal of the American Statistical Association},
	Number = {434},
	Pages = {444--455},
	Publisher = {Taylor \&amp; Francis},
	Title = {Identification of causal effects using instrumental variables},
	Volume = {91},
	Year = {1996}}

@book{angrist2008mostly,
	Author = {Angrist, Joshua D and Pischke, J{\"o}rn-Steffen},
	Date-Added = {2016-09-12 04:03:23 +0000},
	Date-Modified = {2016-09-12 05:57:18 +0000},
	Publisher = {Princeton: Princeton University Press},
	Title = {Mostly Harmless Econometrics: An Empiricist's Companion},
	Year = {2008}}

@article{freedman2008randomization,
	Author = {Freedman, David A},
	Date-Added = {2016-08-18 03:57:48 +0000},
	Date-Modified = {2016-08-24 06:29:38 +0000},
	Journal = {Statistical Science},
	Pages = {237--249},
	Publisher = {Institute of Mathematical Statistics},
	Title = {Randomization does not justify logistic regression},
	Volume = {23},
	number={2},
	Year = {2008}}

@article{imai2008variance,
	Author = {Imai, K.},
	Date-Added = {2016-07-20 02:43:56 +0000},
	Date-Modified = {2016-07-20 02:43:56 +0000},
	Journal = {Statistics in Medicine},
	Pages = {4857--4873},
	Title = {Variance identification and efficiency analysis in randomized experiments under the matched-pair design},
	Volume = {27},
	number={24},
	Year = {2008}}

@article{miratrix2013,
	Author = {Miratrix, L. W. and Sekhon, J. S. and Yu, B.},
	Date-Added = {2016-07-20 02:43:56 +0000},
	Date-Modified = {2016-07-20 02:43:56 +0000},
	Journal = {Journal of the Royal Statistical Society: Series B (Statistical Methodology)},
	Pages = {369--396},
	Title = {Adjusting treatment effect estimates by post-stratification in randomized experiments},
	Volume = {75},
	number={2},
	Year = {2013}}

@article{bloniarz2015lasso,
	Author = {Bloniarz, A. and Liu, H. and Zhang, C. H. and Sekhon, J. S. and Yu, B.},
	Date-Added = {2016-07-20 02:43:56 +0000},
	Date-Modified = {2016-07-20 02:43:56 +0000},
	Journal = {Proceedings of the National Academy of Sciences of the United States of America},
	Title = {Lasso adjustments of treatment effect estimates in randomized experiments},
	Pages = {7383--7390},
	Volume = {113},
	number={27},
	Year = {2016}}

@article{lin2013,
	Author = {Lin, W.},
	Date-Added = {2016-07-20 02:43:56 +0000},
	Date-Modified = {2016-07-20 02:43:56 +0000},
	Journal = {The Annals of Applied Statistics},
	Pages = {295--318},
	Publisher = {The Institute of Mathematical Statistics},
	Title = {Agnostic notes on regression adjustments to experimental data: reexamining {Freedman}'s critique},
	Volume = {7},
	number={1},
	Year = {2013}}

@article{Rubin:1974,
	Author = {Rubin, D. B.},
	Date-Added = {2016-07-20 02:43:56 +0000},
	Date-Modified = {2016-07-20 02:43:56 +0000},
	Journal = {Journal of Educational Psychology},
	Pages = {688--701},
	Title = {Estimating causal effects of treatments in randomized and nonrandomized studies},
	Volume = {66},
	number={5},
	Year = {1974}}

@article{Rubin:1980,
	Author = {Rubin, D. B.},
	Date-Added = {2016-07-20 02:43:56 +0000},
	Date-Modified = {2016-07-20 02:43:56 +0000},
	Journal = {Journal of the American Statistical Association},
	Pages = {591--593},
	Title = {Randomization analysis of experimental data: the {Fisher} randomization test comment},
	Volume = {75},
	number={371},
	Year = {1980}}

@article{student:1938,
	Author = {Student},
	Date-Added = {2016-07-20 02:31:58 +0000},
	Date-Modified = {2016-07-20 02:31:58 +0000},
	Journal = {Biometrika},
	Publisher = {[Oxford University Press, Biometrika Trust]},
	Title = {Comparison Between Balanced and Random Arrangements of Field Plots},
	Volume = {29},
	number={3-4},
	pages={363--379},
	Year = {1938}}

@article{Fisher1926,
	Author = {Fisher, R. A.},
	Date-Added = {2016-07-20 02:31:58 +0000},
	Date-Modified = {2016-07-20 02:31:58 +0000},
	Journal = {Journal of Ministry of Agriculture of Great Britain},
	Pages = {503--513},
	Title = {The arrangement of field experiments},
	Volume = {33},
	Year = {1926}}

@article{fcltxlpd2016,
	Author = {Li, X. and Ding, P.},
	Date-Added = {2016-07-20 02:31:58 +0000},
	Date-Modified = {2016-11-21 07:48:23 +0000},
	Journal = {Journal of the American Statistical Association},
	Pages = {1759--1769},
	Title = {General forms of finite population central limit theorems with applications to causal inference},
	Volume = {112},
	number={520},
	Year = {2017}}

@article{Neyman:1923,
	Author = {Spława-Neyman, J. and Dabrowska, D. M. and Speed, T. P.},
	Date-Added = {2016-07-20 02:31:58 +0000},
	Date-Modified = {2016-07-20 02:31:58 +0000},
	Journal = {Statistical Science},
	Pages = {465--472},
	Title = {On the application of probability theory to agricultural experiments},
	Volume = {5},
	number={4},
	Year = {1990}}

@article{Tibshirani1996,
	Author = {Tibshirani, R. J.},
	Journal = {Journal of the Royal Statistical Society, Series B (Methodological)},
	Pages = {267--288},
	Title = {Regression shrinkage and selection via the lasso},
	Volume = {58},
	number={1},
	Year = {1996}}

@article{Yue2018,
	Author = {Yue, L. and Li, G. and Lian, H. and Wan, X.},
	Journal = {Computational Statistics \& Data Analysis},
	Pages = {17--35},
	Title = {Regression adjustment for treatment effect with multicollinearity in high dimensions},
	Volume = {134},
	Year = {2019}}

@article{fogarty2018,
	Author = {Fogarty, C. B.},
	Journal = {Biometrika},
	Pages = {994--1000},
	Title = {Regression-assisted inference for the average treatment effect in paired experiments},
	Volume = {105},
	number={4},
	Year = {2018}}

@book{box2005,
	Author = {Box, G. E. P. and Hunter, J. S. and Hunter, W. G.},
	Publisher = {New York: Wiley-Interscience},
	Title = {Statistics for Experimenters: Design, Innovation and Discovery},
	Year = {2005}}

@article{fogarty2018finely,
	Author = {Fogarty, C. B.},
	Journal = {Journal of the Royal Statistical Society: Series B (Statistical Methodology)},
	Pages = {1035--1056},
	Title = {On mitigating the analytical limitations of finely stratified experiments},
	Volume = {80},
	number={5},
	Year = {2018}}

@article{zhang2008binary,
	Author = {Zhang, M. and Tsiatis, A. A. and Davidian, M.},
	Journal = {Biometrics},
	Pages = {707--715},
	Title = {Improving efficiency of inferences in randomized clinical trials using auxiliary covariates},
	volume={64},
	number={3},
	Year = {2008}}

@article{wager2016,
	Author = {Wager, S. and Du, W. and Taylor, J. and Tibshirani, R. J.},
	Journal = {Proceedings of the National Academy of Sciences of the United States of America},
	Pages = {12673--12678},
	Title = {High-dimensional regression adjustments in randomized experiments},
	Volume = {113},
	number={45},
	Year = {2016}}

@article{Pashley2017,
  title={Insights on variance estimation for blocked and matched pairs designs},
  author={Pashley, Nicole E and Miratrix, Luke W},
  journal={Journal of Educational and Behavioral Statistics},
  volume={46},
  number={3},
  pages={271--296},
  year={2021},
  publisher={SAGE Publications Sage CA: Los Angeles, CA}
}

@article{Higgins2015,
	Author = {Higgins, M. J. and S\"{a}vje, F. and Sekhon, J. S.},
	Title = {Blocking estimators and inference under the {N}eyman--{R}ubin model},
	Journal = {arXiv preprint arXiv:1510.01103},
	Year = {2015}}

@article{Imai2008,
	Author = {Imai, K. and King, G. and Stuart, E. A.},
	Journal = {Journal of the Royal Statistical Society: Series {A} ({S}tatistics in {S}ociety)},
	Pages = {481--502},
	Title = {Misunderstandings between experimentalists and observationalists about causal inference},
	Volume = {171},
	number={2},
	Year = {2008}}

@TECHREPORT{Schochet2016,
   AUTHOR =       {Schochet, P. Z. },
   TITLE =        {Statistical theory for the {RCT-YES} software: design-based causal inference for the {RCTs} (2nd ed.)},
   INSTITUTION =  {Washington, DC: U.S. Department of Education, Institute of Education Sciences, National Center for Education Evaluation and Regional Assistance, Analytic Technical Assistance and Development },
   YEAR =         {2016}
 }

@article{Liu2019,
	Author = {Liu, H. and Yang, Y.},
	Title = {Regression-adjusted average treatment effect estimators in stratified randomized experiments},
	Journal = {Biometrika},
  Volume    = {107},
  number={4},
  Pages     = {935--948},
	Year = {2020}}

@Article{Bugni2018,
  author    = {Bugni, Federico A and Canay, Ivan A and Shaikh, Azeem M},
  title     = {Inference under covariate-adaptive randomization},
  journal   = {Journal of the American Statistical Association},
  year      = {2018},
  volume    = {113},
  number    = {524},
  pages     = {1784--1796},
  publisher = {Taylor \& Francis},
}

@article{Bugni2019,
	Author = {Bugni, F. A. and Canay, I. A. and Shaikh, A. M.},
	Title = {Inference under covariate-adaptive randomization with multiple treatments},
	Journal = {Quantitative Economics},
	Pages = {1747--1785},
	Volume = {10},	
	number={4},
	Year = {2019}}

@Article{Freedman2008,
  Title                    = {On regression adjustments to experimental data},
  Author                   = {David A. Freedman},
  Journal                  = {Advances in Applied Mathematics},
  Year                     = {2008},
  Number                   = {2},
  Pages                    = {180-193},
  Volume                   = {40},

  Owner                    = {qyc},
  Timestamp                = {2016.04.28}
}

@Book{Imbens2015,
  Title                    = {Causal Inference for Statistics, Social, and Biomedical Sciences: An Introduction},
  Author                   = {Guido W. Imbens and Donald B. Rubin},
  Publisher                = {Cambridge University Press},
  Year                     = {2015},

  Owner                    = {qyc},
  Timestamp                = {2016.04.28}
}

@Book{Rosenberger2015,
  title     = {Randomization in clinical trials: theory and practice},
  publisher = {John Wiley \& Sons},
  year      = {2015},
  author    = {Rosenberger, William F. and Lachin, John M.},
  edition   = {2nd},
}

@book{Fisher1935,
  title={The Design of Experiments},
  author={Fisher, R. A.},
  Edition = {1st},
  publisher={Oliver and Boyd, Edinburgh},
  year={1935}
}

@article{Li2018,
author={Li, Xinran and Ding, Peng and Rubin, Donald B.},
year={2018},
month={Sep 11},
title={Asymptotic theory of rerandomization in treatment-control experiments},
journal={Proceedings of the National Academy of Sciences of the United States of America},
volume={115},
number={37},
pages={9157-9162}
}

@article{Li2020,
author = {Li, Xinran and Ding, Peng},
title = {Rerandomization and regression adjustment},
journal = {Journal of the Royal Statistical Society: Series B (Statistical Methodology)},
volume = {82},
number = {1},
pages = {241-268},
doi = {10.1111/rssb.12353},
year = {2020}
}

@article{Morgan2012,
	Author = {Morgan, K. L. and Rubin, D. B.},
	Journal = {The Annals of Statistics},
	Pages = {1263--1282},
	Publisher = {Institute of Mathematical Statistics},
	Title = {Rerandomization to improve covariate balance in experiments},
	Volume = {40},
	number={2},
	Year = {2012}}

@article{Schultzberg2019,
  title={Rerandomization: A complement or substitute for stratification in randomized experiments?},
  author={Johansson, Per and Schultzberg, M{\aa}rten},
  journal={Journal of Statistical Planning and Inference},
  volume={218},
  pages={43--58},
  year={2022},
  publisher={Elsevier}
}

@article{Athey2017,
Author = {Athey, Susan and Imbens, Guido W.},
Title = {The state of applied econometrics: causality and policy evaluation},
Journal = {Journal of Economic Perspectives},
Volume = {31},
Number = {2},
Year = {2017},
Month = {May},
Pages = {3-32}
}

@article{Li2020factorial,
	Author = {Li, X. and Ding, P. and Rubin, D. B.},
	Journal = {The Annals of Statistics},
	Pages = {43--63},
	Title = {Rerandomization in $2^K$ factorial experiments},
	Volume = {48},
	number={1},
	Year = {2020}}

@article{Wang2020rerandomization,
    title={Rerandomization in stratified randomized experiments},
    author={Wang, Xinhe and Wang, Tingyu and Liu, Hanzhong},
    journal={Journal of the American Statistical Association},
    Volume = {118},
    number = {542},
    pages = {1295-1304},
    year={2023},
    publisher={Taylor \& Francis}
}

@article{Krieger2019,
  title = {Nearly random designs with greatly improved balance},
  author = {Krieger, A. M. and Azriel, D. and Kapelner, A.},
  year = {2019},
  month = sep,
  volume = {106},
  pages = {695--701},
  journal = {Biometrika},
  language = {en},
  number = {3}
}

@article{Wang2019model,
  title={Model-robust inference for clinical trials that improve precision by stratified randomization and covariate adjustment},
  author={Wang, Bingkai and Susukida, Ryoko and Mojtabai, Ramin and Amin-Esmaeili, Masoumeh and Rosenblum, Michael},
  journal={Journal of the American Statistical Association},
  Volume = {118},
  number = {542},
  pages = {1152-1163},
  year={2023},
  publisher={Taylor \& Francis}
}

@Article{Ma2020Regression,
  author  = {Ma, Wei and Tu, Fuyi and Liu, Hanzhong},
  title   = {Regression analysis for covariate-adaptive randomization: a robust and efficient inference perspective},
  journal = {Statistics in Medicine},
  volume    = {41},
  number    = {29},
  pages     = {5645-5661},
  year    = {2022}
}

@Article{Liu2020,
  title={Lasso-adjusted treatment effect estimation under covariate-adaptive randomization},
  author={Liu, Hanzhong and Tu, Fuyi and Ma, Wei},
  journal={Biometrika},
  volume    = {110},
  number    = {2},
  pages     = {431–447},
  year    = {2023}
}

@article{angrist2014opportunity,
  title={When opportunity knocks, who answers? New evidence on college achievement awards},
  author={Angrist, Joshua D and Oreopoulos, Philip and Williams, Tyler},
  journal={Journal of Human Resources},
  volume={49},
  number={3},
  pages={572--610},
  year={2014},
  publisher={University of Wisconsin Press}
}

@inproceedings{zhang2012some,
  title={Some new deformation formulas about variance and covariance},
  author={Zhang, Yuli and Wu, Huaiyu and Cheng, Lei},
  booktitle={2012 Proceedings of International Conference on Modelling, Identification and Control},
  pages={987--992},
  year={2012},
  organization={IEEE}
}

@article{lei2020,
  title={Regression adjustment in completely randomized experiments with a diverging number of covariates},
  author={Lei, Lihua and Ding, Peng},
  journal={Biometrika},
  volume={108},
  number={4},
  pages={815--828},
  year={2021},
  publisher={Oxford University Press}
}

@article{bobkov2004,
  title={Concentration of normalized sums and a central limit theorem for noncorrelated random variables},
  author={Bobkov, Sergey G},
  journal={Annals of Probability},
  volume={32},
  number={4},
  pages={2884--2907},
  year={2004}
}

@article{tolstikhin2017,
  title={Concentration inequalities for samples without replacement},
  author={Tolstikhin, IO},
  journal={Theory of Probability \& Its Applications},
  volume={61},
  number={3},
  pages={462--481},
  year={2017},
  publisher={SIAM}
}

@article{mukerjee2018,
  title={Using standard tools from finite population sampling to improve causal inference for complex experiments},
  author={Mukerjee, Rahul and Dasgupta, Tirthankar and Rubin, Donald B},
  journal={Journal of the American Statistical Association},
  volume={113},
  number={522},
  pages={868--881},
  year={2018},
  publisher={Taylor \& Francis}
}

@incollection{massart1986,
  title = {Rates of Convergence in the Central Limit Theorem for Empirical Processes},
  booktitle = {Geometrical and {{Statistical Aspects}} of {{Probability}} in {{Banach Spaces}}},
  author = {Massart, Pascal},
  editor = {Fernique, Xavier and Heinkel, Bernard and Meyer, Paul-Andr{\'e} and Marcus, Michael B.},
  year = {1986},
  volume = {1193},
  pages = {73--109},
  publisher = {{Springer}},
  address = {{Berlin, Heidelberg}}
}

@article{johansson2020,
  title={Rerandomization strategies for balancing covariates using pre-experimental longitudinal data},
  author={Johansson, Per and Schultzberg, M{\aa}rten},
  journal={Journal of Computational and Graphical Statistics},
  volume={29},
  number={4},
  pages={798--813},
  year={2020},
  publisher={Taylor \& Francis}
}

@article{zhao2018cross,
  title={Cross-screening in observational studies that test many hypotheses},
  author={Zhao, Qingyuan and Small, Dylan S and Rosenbaum, Paul R},
  journal={Journal of the American Statistical Association},
  volume={113},
  number={523},
  pages={1070--1084},
  year={2018},
  publisher={Taylor \& Francis}
}

@article{bind2019,
  title = {Bridging Observational Studies and Randomized Experiments by Embedding the Former in the Latter},
  author = {Bind, Marie-Abele C and Rubin, Donald B},
  year = {2019},
  journal = {Statistical Methods in Medical Research},
  volume = {28},
  number = {7},
  pages = {1958--1978},
  langid = {english}
}

@article{hansen2004full,
  title={Full matching in an observational study of coaching for the SAT},
  author={Hansen, Ben B},
  journal={Journal of the American Statistical Association},
  volume={99},
  number={467},
  pages={609--618},
  year={2004},
  publisher={Taylor \& Francis}
}

@article{rosenbaum1991characterization,
  title={A characterization of optimal designs for observational studies},
  author={Rosenbaum, Paul R},
  journal={Journal of the Royal Statistical Society: Series B (Methodological)},
  volume={53},
  number={3},
  pages={597--610},
  year={1991},
  publisher={Wiley Online Library}
}

@article{liu2021randomization,
  title={Randomization-based joint central limit theorem and efficient covariate adjustment in Randomized Block $2^K$ factorial experiments},
  author={Liu, Hanzhong and Ren, Jiyang and Yang, Yuehan},
  journal={Journal of the American Statistical Association},
  volume={119},
  number={545},
  pages={136--150},
  year={2024},
  publisher={Taylor \& Francis}
}

@article{kueck2022estimation,
  title={Estimation and inference of treatment effects with $L_2$-boosting in high-dimensional settings},
  author={Kueck, Jannis and Luo, Ye and Spindler, Martin and Wang, Zigan},
  journal={Journal of Econometrics},
  volume={234},
  number={2},
  pages={714--731},
  year={2023},
  publisher={Elsevier}
}

@article{farrell2021deep,
  title={Deep neural networks for estimation and inference},
  author={Farrell, Max H and Liang, Tengyuan and Misra, Sanjog},
  journal={Econometrica},
  volume={89},
  number={1},
  pages={181--213},
  year={2021},
  publisher={Wiley Online Library}
}

@article{jiang2022regression,
  title={Regression-adjusted estimation of quantile treatment effects under covariate-adaptive randomizations},
  author={Jiang, Liang and Phillips, Peter CB and Tao, Yubo and Zhang, Yichong},
  journal={Journal of Econometrics},
  volume={234},
  number={2},
  pages={758--776},
  year={2023},
  publisher={Elsevier}
}

@article{jiang2022noncomp,
title={Improving estimation efficiency via regression-adjustment in covariate-adaptive randomizations with imperfect compliance},
  author={Jiang, Liang and Linton, Oliver B and Tang, Haihan and Zhang, Yichong},
  journal={Review of Economics and Statistics},
  volume={in press},
  year={2024}
}

@article{bai2021inference,
  title={Inference in experiments with matched pairs},
  author={Bai, Yuehao and Romano, Joseph P and Shaikh, Azeem M},
  journal={Journal of the American Statistical Association},
  volume={117},
  number={540},
  pages={1726--1737},
  year={2022},
  publisher={Taylor \& Francis}
}

@article{tabord2018stratification,
  title={Stratification trees for adaptive randomization in randomized controlled trials},
  author={Tabord-Meehan, Max},
  journal={The Review of Economic Studies},
  volume = {90},
  number = {5},
  pages = {2646-2673},
  year={2023}
}

@article{zhang2014confidence,
  title={Confidence intervals for low dimensional parameters in high dimensional linear models},
  author={Zhang, Cun-Hui and Zhang, Stephanie S},
  journal={Journal of the Royal Statistical Society Series B: Statistical Methodology},
  volume={76},
  number={1},
  pages={217--242},
  year={2014},
  publisher={Oxford University Press}
}

@article{van2014asymptotically,
  title={On asymptotically optimal confidence regions and tests for high-dimensional models},
  author={Van de Geer, Sara and B{\"u}hlmann, Peter and Ritov, Ya’acov and Dezeure, Ruben},
  journal={The Annals of Statistics},
  volume={42},
  number={3},
  pages={1166--1202},
  year={2014}
}

@article{javanmard2014confidence,
  title={Confidence intervals and hypothesis testing for high-dimensional regression},
  author={Javanmard, Adel and Montanari, Andrea},
  journal={The Journal of Machine Learning Research},
  volume={15},
  number={1},
  pages={2869--2909},
  year={2014},
  publisher={JMLR. org}
}

@article{hoerl1970ridge,
  title={Ridge regression: Biased estimation for nonorthogonal problems},
  author={Hoerl, Arthur E and Kennard, Robert W},
  journal={Technometrics},
  volume={12},
  number={1},
  pages={55--67},
  year={1970},
  publisher={Taylor \& Francis}
}

@article{zou2005regularization,
  title={Regularization and variable selection via the elastic net},
  author={Zou, Hui and Hastie, Trevor},
  journal={Journal of the Royal Statistical Society Series B: Statistical Methodology},
  volume={67},
  number={2},
  pages={301--320},
  year={2005},
  publisher={Oxford University Press}
}

@article{wang2022rerandomization,
  title={Rerandomization with diminishing covariate imbalance and diverging number of covariates},
  author={Wang, Yuhao and Li, Xinran},
  journal={The Annals of Statistics},
  volume={50},
  number={6},
  pages={3439--3465},
  year={2022},
  publisher={Institute of Mathematical Statistics}
}

@article{branson2024power,
  title={Power and sample size calculations for rerandomization},
  author={Branson, Zach and Li, Xinran and Ding, Peng},
  journal={Biometrika},
  volume={111},
  number={1},
  pages={355--363},
  year={2024},
  publisher={Oxford University Press}
}


\newpage

\if0\blind
{
  \title{\bf Supplementary Material for ``Design-based theory for Lasso adjustment in randomized block experiments and rerandomized experiments''}
  \author{}
  \date{}
  \maketitle
} \fi

\if1\blind
{
  \title{\bf Supplementary Material for ``Design-based theory for Lasso adjustment in randomized block experiments and rerandomized experiments''}
  \author{}
  \date{}
  \maketitle
} \fi

\spacingset{1.8} 

\appendix

\setcounter{equation}{0}
\renewcommand{\theequation}{S\arabic{equation}}
\setcounter{table}{0}
\renewcommand{\thetable}{S\arabic{table}}
\setcounter{figure}{0}
\renewcommand{\thefigure}{S\arabic{figure}}
\setcounter{theorem}{0}
\renewcommand{\thetheorem}{S\arabic{theorem}}
\setcounter{proposition}{0}
\renewcommand{\theproposition}{S\arabic{proposition}}
\setcounter{lemma}{0}
\renewcommand{\thelemma}{S\arabic{lemma}}
\setcounter{condition}{0}
\renewcommand{\thecondition}{S\arabic{condition}}
\setcounter{remark}{0}
\renewcommand{\theremark}{S\arabic{remark}}

Section \ref{sec:pre} provides bounds for sampling without replacement, concentration inequalities under stratified randomization, and asymptotic results of $\htu$ under stratified randomization and stratified rerandomization.
Section \ref{sec:proof-main} provides proofs of Propositions~\ref{thm:sr2} and \ref{thm:sr2var} and Theorems~\ref{thm:srr2} and \ref{thm:srr2var}.
Section \ref{sec:addsim} provides additional simulation results.
\zhu{We collect notations in the main text into Tables \ref{tab:notations} and \ref{tab:notations2}: one for notations related to blocking and rerandomization, another for notations related to covariate adjustment.}


\begin{table}[H]
\centering
\renewcommand{\arraystretch}{1.5}
\caption{Summary of notations for blocking and rerandomization}
\label{tab:notations}
\begin{threeparttable}
\begin{tabular}{cp{0.8\textwidth}}
\toprule
Notation & Definition \\
\midrule

$M$ & Number of blocks \\

$\nm$  & Number of units in block $m$ \\

$\pim$ &  Proportion of block size for block $m$: $\pim = \nm / n$ \\

$\nmt$ ($\nmc$) & treatment (control) group size in block $m$ \\

$\emt$ & Propensity score in block $m$: $\emt = \nmt / \nm$ \\

$\bar{\H}_{[m]}$ & Block-specific finite population mean: $\bar{\H}_{[m]}= \nm^{-1} \sumim \H_i$ \\

$\bar{\H}_{[m]z}$ & Block-specific sample mean: $\bar{\H}_{[m]z} = \nmz^{-1} \sumim  I(Z_i = z) \H_i$ \\



$S_{[m] \H \Q}$ & Block-specific finite population covariance: $S_{[m] \H \Q} = ( \nm - 1 )^{-1} \sumim ( \H_i - \bar{\H}_{[m]} ) ( \Q_i - \bar{\Q}_{[m]} )^\T$ \\

$S^2_{[m]\H}$ & $S^2_{[m]\H} = S_{[m]\H \H}$ \\

$s_{[m]\H \Q}$ & Block-specific sample covariance: $s_{[m]\H \Q}= ( \nmz - 1 )^{-1} \sumim I(Z_i=z) ( \H_i - \bar{\H}_{[m]z} ) ( \Q_i - \bar{\Q}_{[m]z} )^\T$ \\

$s^2_{[m]\H}$ & $s^2_{[m]\H}=s_{[m]\H \H}$ \\



$\tm$ & Block-specific ATE: $\tm = \bym{1} - \bym{0}$ \\

$\htmu$ & Block-specific difference-in-means: $\htmu = \bymz{1} - \bymz{0}$ \\

$\tau$ & Overall ATE: $\tau = n^{-1} \sumi \big\{ Y_i(1) - Y_i(0) \big\} =  \summ \pim \tm$ \\

$\htu$ & Weighted difference-in-means estimator: $\htu = \summ \pim \htmu$ \\

$\hsu^2$ & Variance estimator for $\htu$\\

$\xm_{\mathcal{K}}$ & Covariates that are balanced in the design stage \\

$\htauw$ & Weighted difference-in-means of $\xm_{\mathcal{K}}$: $\htauw = \summ \pim \{ (\bar{\xr}_{[m]1})_\mathcal{K} - (\bar{\xr}_{[m]0})_\mathcal{K} \}$ \\

$\maha(\Z, \wm)$ & Mahalanobis distance \\

$a$ & Rerandomization threshold \\

\bottomrule
\end{tabular}

\begin{tablenotes}
\item Note: $\H$ and $\Q$ could be potential outcomes, adjusted potential outcomes, or covariates.
\end{tablenotes}
\end{threeparttable}
\end{table}

\begin{table}[H]
\centering
\renewcommand{\arraystretch}{1.5}
\caption{Summary of notations for covariate adjustment}
\label{tab:notations2}
\begin{tabular}{cp{0.8\textwidth}}
\toprule
Notation & Definition \\
\midrule

$\taux$ & Average treatment effect of $\xm$: $\taux = \summ \pim \{\bar{\xr}_{[m]} - \bar{\xr}_{[m]} \} = \bb{0}$ \\

$\htaux$ & Weighted difference-in-means of $\xm$: $\htaux = \summ \pim ( \bxmz{1} - \bxmz{0} )$ \\

$\g$ & Projection coefficient vector: $\g = \argmin_{\bgamma} E ( \htu - \tau -  \htaux^\T \bgamma )^2$ \\

$\ttp$ & Oracle projection estimator: $\ttp = \htu - \htaux^\T \g $ \\


$\gz{z}$ & Projection coefficient vector: $\g=\gz{0}+\gz{1}$ \\

$\tsmxz$ & Unbiased estimators for $\Smx$ \\

$\tsmxyz{z}$ & Unbiased estimators for $\Smxyz{z}$ \\

$\hgoz$ & Plug-in estimator of $\gz{z}$; $\hgoz$ could also be obtained through a weighted regression \\




$\yw_i(z)$ & Weighted potential outcomes: $\yw_i(z)=\sqrt{\wnz{z} \wyz{z}} \,Y_i(z)$ \\

$\xwr_{i}(z)$ & Weighted covariates: $\xwr_{i}(z)=\sqrt{\wnz{z} \wxz{z}} \,( \xr_{i} - \bar{\xr}_{[m]} )$ \\

$\yw_i$ & Observed weighted outcome:  $\yw_i=Z_i\yw_i(1)+(1-Z_i)\yw_i(0)$ \\

$\xwr_{i}$ & Observed weighted covariates: $\xwr_{i}=Z_i\xwr_i(1)+(1-Z_i)\xwr_i(0)$ \\

$\bxwz{z}$ & Sample mean of $\xwr_{i}$: $\bxwz{z} =\nz^{-1}\sum_{i:Z_i= z} \xwr_{i}$ \\


$\hgo$ & Estimator of $\g$: $\hgo = \hgozt + \hgozc$ \\

$\htp$ & Regression-adjusted ATE estimator: $\htp = \htu - \htaux^\T \hgo $\\

$\mathcal{S}$ & Set of relevant covariates that are predictive to the outcomes \\

$\hgz{z}$ & Lasso estimator of $\gz{z}$ \\

$\hg$ & Lasso estimator of $\g$: $\hg=\hgz{1} + \hgz{0}$ \\

$\htlt$ & Lasso-adjusted ATE estimator: $\htlt = \htu -  \htaux^\T \hg$ \\

$\mathcal{A}_c$ & sets of coarse blocks: $\mathcal{A}_c=\{1\leqslant m\leqslant M:\nmt > 1, \ \nmc > 1\}$ \\

$\mathcal{A}_f$ & sets of fine blocks: $\mathcal{A}_f=\{1\leqslant m\leqslant M:\nmt= 1 \textnormal{ or } \nmc = 1\}$ \\

$n_f$ & Total number of units in fine blocks: $n_f=\sum_{m\in\mathcal{A}_f}\nm$ \\


$\tY_i$ & Adjusted outcome: $\tY_i=Y_i- (\xr_i-\bxm)^\T \hg$ \\

$\hat{\tau}_{\tY,[m]}$ & Block-specific difference-in-means of $\tY_i$: $\hat{\tau}_{\tY,[m]}=\bar{\tY}_{[m]1}-\bar{\tY}_{[m]0}$ \\

$\hat{\tau}_{\tY,f}$ & Weighted difference-in-means of $\tY_i$ in fine blocks: $\hat{\tau}_{\tY,f}=\sum_{m\in \mathcal{A}_f} (\nm/n_f) \hat{\tau}_{\tY,[m]}$ \\

$\hslt^2$ & Variance estimator for Lasso-adjusted ATE estimator $\htlt$ \\

$\hat s$ & Sparsity of $\hg$: $\hat s = ||\hg||_0$ \\

\bottomrule
\end{tabular}
\end{table}

\section{Some preliminary results}
\label{sec:pre}

\subsection{Bounds for sampling without replacement}

The connection between randomized experiments and survey sampling has been discussed in depth by many scholars \citep{lin2013,fcltxlpd2016,mukerjee2018,lei2020}. Both of them are based on a probability model of sampling without replacement from a finite population. We start by introducing Bobkov's inequality, a powerful tool to prove concentration inequalities for sampling without replacement. In this section, we consider completely randomized experiments; that is,
$$
\pr ( \Z =  \z ) = \frac{ \nt ! \nc ! }{ n ! }, \quad \sumi I(z_i = 1) = \nt, \quad z_i=0,1.
$$
We denote the propensity score by $\ec=\nt/n$. The value space of $\Z$ is defined as
$$
\mathcal{G}=\left\{\z=(z_1,\dots,z_n)\in\{0,1\}^n: z_{1}+\cdots+z_{n}=n_1\right\}.
$$
For every $\z\in\mathcal{G}$, we pick a pair of units $(i,j)$ such that $z_i=1$ and $z_j=0$, and switch the value of $z_i$ and $z_j$ to obtain a ``neighbour'' of $\z$, denoted by $\z^{i,j}$. Clearly, for different $(i,j)$, $\z$ has totally $n_1(n-n_1)$ neighbours. For every real-valued function $f$ on $\mathcal{G}$, we define the discrete gradient as follows: $\nabla f(\z)= (f(\z)-f\left(\z^{i,j}\right))_{i,j}$, which is an $n_1(n-n_1)$ dimensional vector. We define the $\ell_2$ norm of $\nabla f(\z)$ as
$$
\|\nabla f(\z)\|^{2}_2=\sum_{i:z_i=1} \sum_{j:z_j=0}\left|f(\z)-f\left(\z^{i,j}\right)\right|^{2}.
$$

\begin{lemma}[\citet{bobkov2004}]\label{lem:bobkov}
For every real-valued function $f$ on $\mathcal{G}$, if $\|\nabla f(\z)\|_{2}\leqslant \sigma$ for all $z\in\mathcal{G}$, then
$$
E \exp \left[t\left\{f(\Z)-Ef(\Z)\right\}\right] \leqslant \exp \left\{\sigma^{2} t^{2} /(n+2)\right\}, \quad t\in\mathbb{R}.
$$
\end{lemma}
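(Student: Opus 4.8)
The plan is to combine the sharp Poincar\'e inequality for the uniform law on the slice $\mathcal{G}$ with an elementary discrete chain-rule estimate, and then to bootstrap the exponential control this produces into the stated Gaussian one. Write $\Z$ for a uniform element of $\mathcal{G}$, $\psi(t)=E\exp\{tf(\Z)\}$ and $H(t)=\log\psi(t)$. Replacing $f$ by $f-Ef(\Z)$ one may assume $Ef(\Z)=0$, and by symmetry ($f\mapsto -f$) it suffices to treat $t\geqslant 0$; note that $\|\nabla f(\z)\|_2\leqslant\sigma$ in particular bounds every single-edge increment by $\sigma$.

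The first ingredient I would use is the spectral gap of the slice: for every $g:\mathcal{G}\to\mathbb{R}$,
$$
\operatorname{Var}(g(\Z))\;\leqslant\;\frac{1}{2n}\,E\big[\|\nabla g(\Z)\|_2^2\big].
$$
This is classical and self-contained: the generator $\mathcal{L}g(\z)=\{n_1(n-n_1)\}^{-1}\sum_{i:z_i=1}\sum_{j:z_j=0}\{g(\z^{i,j})-g(\z)\}$ is self-adjoint on $L^2(\mathcal{G})$ with Dirichlet form $\tfrac12\{n_1(n-n_1)\}^{-1}E\|\nabla g(\Z)\|_2^2$, the linear functions $\z\mapsto\sum_k z_ka_k$ are eigenfunctions that identify the spectral gap as $n/\{n_1(n-n_1)\}$, and $g(\z)=z_1$ shows $\tfrac1{2n}$ is optimal. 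The second ingredient is a chain rule for $g=\exp\{tf/2\}$: convexity of $x\mapsto e^x$ gives $|e^s-e^u|\leqslant\tfrac12|s-u|(e^s+e^u)$ and hence $(e^s-e^u)^2\leqslant\tfrac12(s-u)^2(e^{2s}+e^{2u})$; with $s=tf(\z)/2$, $u=tf(\z^{i,j})/2$, summing over edges at $\z$ (using $\|\nabla f(\z)\|_2\leqslant\sigma$) and then averaging over $\z$ (the edge relation being symmetric) gives
$$
E\big[\|\nabla\exp\{tf(\Z)/2\}\|_2^2\big]\;\leqslant\;\frac{t^2\sigma^2}{4}\,\psi(t).
$$

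Feeding the chain rule into the Poincar\'e inequality applied to $g=\exp\{tf/2\}$ yields $\psi(t)-\psi(t/2)^2\leqslant\frac{t^2\sigma^2}{8n}\psi(t)$, i.e. $H(t)\leqslant 2H(t/2)-\log\!\big(1-\tfrac{t^2\sigma^2}{8n}\big)$ for $t^2\sigma^2<8n$. Iterating, the correction terms form a convergent geometric-type series while $2^kH(t/2^k)\to tEf(\Z)=0$, so $H(t)\leqslant \sigma^2t^2/(n+2)$ near the origin. A cleaner-looking alternative is to establish a logarithmic Sobolev inequality on $\mathcal{G}$ and run Herbst's argument with the same chain rule, which delivers a bound valid for all $t$; but the log-Sobolev constant of $\mathcal{G}$ is of order $1/n$ only when $n_1$ is comparable to $n$ --- for extreme $n_1$ (for $n_1=1$, $\mathcal{G}$ is the complete graph $K_n$) it carries an extra $\log n$ --- so that route does not recover the stated bound uniformly.

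Accordingly, the obstacle I anticipate is making a single argument cover all $t\in\mathbb{R}$ and all admissible $n_1$ at once. The Poincar\'e--iteration route is uniform in $n_1$ but breaks down once $t^2\sigma^2\gtrsim n$, whereas the log-Sobolev--Herbst route handles all $t$ but loses a logarithm for extreme $n_1$; reconciling the two, and doing the bookkeeping so that the denominator comes out exactly $n+2$ rather than some $c\,n$, requires the more careful semigroup-type analysis carried out by Bobkov, which is why the lemma is quoted rather than re-derived here.
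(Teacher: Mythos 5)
The paper offers no proof of this lemma: it is imported verbatim from \citet{bobkov2004} and used as a black box, so there is no internal argument to compare yours against. Your decision to quote rather than re-derive therefore matches the paper's treatment exactly, and the two partial routes you sketch (spectral gap of the Bernoulli--Laplace slice plus the discrete chain rule $(e^s-e^u)^2\leqslant\tfrac12(s-u)^2(e^{2s}+e^{2u})$, versus log-Sobolev plus Herbst) are the standard ingredients one would reach for; your identification of the spectral gap as $n/\{n_1(n-n_1)\}$ and the resulting Poincar\'e constant $1/(2n)$ are correct.

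That said, if the sketch were meant to stand as a proof it would have a genuine gap, and you have diagnosed it accurately yourself: a Poincar\'e inequality alone can never yield a sub-Gaussian moment generating function bound valid for all $t\in\mathbb{R}$ (it gives only sub-exponential tails), so the iteration $H(t)\leqslant 2H(t/2)-\log\{1-t^2\sigma^2/(8n)\}$ necessarily dies once $t^2\sigma^2\gtrsim n$; and the log-Sobolev constant of the slice degrades by a factor $\log\{n/\min(n_1,n-n_1)\}$ for unbalanced $n_1$, so Herbst's argument does not recover a denominator of order $n$ uniformly, let alone the exact constant $n+2$. Bobkov's actual argument is a more delicate one tailored to the slice, and the clean constant $n+2$ is specific to it. Since the lemma is a cited external result and your write-up is explicit that the missing steps are exactly the ones Bobkov supplies, this is an honest and appropriate deferral rather than an error, but the partial analysis should not be mistaken for a self-contained derivation.
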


Consider two sequences of real numbers $(a_1,\dots,a_n)$ and $(b_1,\dots,b_n)$, we denote
$$
\bar a=\frac1n\sum_{i=1}^n a_i,\quad \bar a_1=\bar a_1(\Z)=\frac{1}{\nt}\sum_{i=1}^n Z_ia_i,
$$
$$
S_{a b} = \frac{ 1 }{ n - 1 } \sum_{i=1}^n ( a_i - \bar a ) ( b_i - \bar b ) ,\quad
s_{a b} =s_{a b}(\Z) = \frac{ 1 }{ n_1 - 1 } \sum_{i=1}^n Z_i ( a_i - \bar a_1 ) ( b_i - \bar b_1 ).
$$
The following result from \citet{zhang2012some} is useful to bound $\|\nabla f(\z) \|^{2}_2$.
\begin{lemma}[\citet{zhang2012some}]\label{lem:zhang}
We have
$$
\sum_{i=1}^n ( a_i - \bar a ) ( b_i - \bar b ) =
\frac1{n}\sum_{1\leqslant i<j\leqslant n} \left(a_i-a_j\right)\left(b_i-b_j\right). $$
\end{lemma}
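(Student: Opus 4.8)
The plan is to prove the identity by a direct algebraic expansion, reducing both sides to the single quantity $\sum_{i=1}^n a_i b_i - n\bar a\bar b$. No probabilistic input is needed; this is a deterministic combinatorial identity in the two sequences $(a_i)$ and $(b_i)$.

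First I would symmetrize the right-hand side. Since the summand $(a_i-a_j)(b_i-b_j)$ is invariant under interchanging $i$ and $j$, and since the diagonal terms $i=j$ vanish, we have $\sum_{1\leqslant i<j\leqslant n}(a_i-a_j)(b_i-b_j)=\tfrac12\sum_{i=1}^n\sum_{j=1}^n(a_i-a_j)(b_i-b_j)$. Expanding $(a_i-a_j)(b_i-b_j)=a_ib_i-a_ib_j-a_jb_i+a_jb_j$ and summing term by term over all $i,j$, the two ``diagonal-type'' terms each contribute $n\sum_i a_ib_i$, while each of the two cross terms contributes $\big(\sum_i a_i\big)\big(\sum_j b_j\big)=n^2\bar a\bar b$. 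Collecting these gives $\sum_{i,j}(a_i-a_j)(b_i-b_j)=2n\sum_i a_ib_i-2n^2\bar a\bar b$, hence $\sum_{i<j}(a_i-a_j)(b_i-b_j)=n\big(\sum_i a_ib_i-n\bar a\bar b\big)$.

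Then I would expand the left-hand side directly: $\sum_i(a_i-\bar a)(b_i-\bar b)=\sum_i a_ib_i-\bar b\sum_i a_i-\bar a\sum_i b_i+n\bar a\bar b=\sum_i a_ib_i-n\bar a\bar b$. Comparing this with the previous display and dividing by $n$ yields the claimed identity. There is essentially no obstacle; the only point requiring care is the bookkeeping of the powers of $n$ (one factor of $n$ from the diagonal sums, two factors of $n$ when a double sum factors into a product of single sums), and the observation that restricting to $i<j$ versus all ordered pairs costs exactly a factor of $2$ because the diagonal contributes nothing.
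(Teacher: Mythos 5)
Your proof is correct and complete: both sides reduce to $\sum_{i=1}^n a_i b_i - n\bar a\bar b$, and the symmetrization step (passing from the sum over $i<j$ to half the sum over all ordered pairs) is handled properly. The paper itself gives no proof of this lemma — it is simply cited from Zhang (2012) — and your direct algebraic verification is exactly the standard argument one would supply.
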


Next, we apply Bobkov's inequality to derive the bounds for the sample mean and sample covariance, respectively.
\begin{lemma}\label{lem:expmean}
For $t\in\mathbb{R}$, we have
\begin{equation}
 E \exp \left\{t\left(\bar a_1-\bar a\right)\right\} \leqslant \exp \left\{\smean^{2} t^{2} /(n+2)\right\},\nonumber
\end{equation}
where $\smean^2=\ec^{-2}n^{-1}\sum_{i=1}^n(a_i-\bar a)^2$.
\end{lemma}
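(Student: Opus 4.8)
The plan is to apply Bobkov's inequality (Lemma~\ref{lem:bobkov}) to the function $f(\Z) = \bar a_1(\Z)$. Since $E[\bar a_1(\Z)] = \bar a$, the quantity to be controlled is exactly $E\exp[t\{f(\Z)-Ef(\Z)\}]$, so by Lemma~\ref{lem:bobkov} it suffices to exhibit a bound $\|\nabla f(\z)\|_2 \leqslant \smean$ that holds uniformly over all $\z \in \mathcal{G}$.

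First I would compute the discrete gradient. For the neighbour $\z^{i,j}$ of $\z$ obtained by switching a treated coordinate $z_i = 1$ with a control coordinate $z_j = 0$, exactly one treated unit changes, so $f(\z) - f(\z^{i,j}) = (a_i - a_j)/n_1$, and hence
$$
\|\nabla f(\z)\|_2^2 = \frac{1}{n_1^2}\sum_{i:z_i=1}\sum_{j:z_j=0}(a_i - a_j)^2 .
$$
Centering by $c_i = a_i - \bar a$ (so that $\sum_{i=1}^n c_i = 0$) and writing $A = \sum_{i:z_i=1} c_i^2$, $B = \sum_{j:z_j=0} c_j^2$ and $U = \sum_{i:z_i=1} c_i = -\sum_{j:z_j=0} c_j$, expanding the square gives
$$
\sum_{i:z_i=1}\sum_{j:z_j=0}(c_i - c_j)^2 = n_0 A + n_1 B + 2U^2 .
$$

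The only step that needs a small idea is the cross term $2U^2$: Cauchy--Schwarz applied to $U = \sum_{i:z_i=1} c_i$ gives $U^2 \leqslant n_1 A$, and applied to $-U = \sum_{j:z_j=0} c_j$ gives $U^2 \leqslant n_0 B$, so $2U^2 \leqslant n_1 A + n_0 B$. Consequently $n_0 A + n_1 B + 2U^2 \leqslant (n_0 + n_1)(A + B) = n\sum_{i=1}^n c_i^2$, which yields
$$
\|\nabla f(\z)\|_2^2 \leqslant \frac{n}{n_1^2}\sum_{i=1}^n (a_i - \bar a)^2 = \ec^{-2} n^{-1}\sum_{i=1}^n (a_i - \bar a)^2 = \smean^2
$$
uniformly in $\z \in \mathcal{G}$, and Lemma~\ref{lem:bobkov} then finishes the argument. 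The main (and essentially the only) obstacle is pinning down the constant in the gradient bound: the Cauchy--Schwarz estimate on $U^2$ is precisely what makes the three contributions telescope exactly to $n\sum_i c_i^2$ rather than to a larger multiple, after which the $1/n_1^2$ prefactor turns into $\ec^{-2}n^{-1}$; if one preferred, the expansion of $\sum_{i,j}(c_i - c_j)^2$ could instead be organised through Lemma~\ref{lem:zhang}. Everything else is routine bookkeeping.
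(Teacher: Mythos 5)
Your proposal is correct and follows essentially the same route as the paper: apply Bobkov's inequality to $f(\Z)=\bar a_1(\Z)$ after showing $\|\nabla f(\z)\|_2^2=\frac{1}{n_1^2}\sum_{i:z_i=1}\sum_{j:z_j=0}(a_i-a_j)^2\leqslant \smean^2$ uniformly over $\mathcal{G}$. The only (cosmetic) difference is in the elementary bound on the double sum: the paper enlarges it to the sum over all pairs $i<j$ and invokes Lemma~\ref{lem:zhang}, whereas you expand in the centered variables and control the cross term by Cauchy--Schwarz; both yield exactly $n\sum_{i=1}^n(a_i-\bar a)^2$.
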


\begin{proof}
By definition and simple calculation, we have
\begin{equation}\label{eq:da1}
\|\nabla \bar a_1(\z)\|^{2}_2=\sum_{i:z_i=1} \sum_{j:z_j=0}\left|\bar a_1(\z)-\bar a_1(\z^{i,j})\right|^{2}
=\frac1{n_1^2}\sum_{i:z_i=1} \sum_{j:z_j=0}\left|a_i-a_j\right|^{2}.
\end{equation}
Then, by Lemma \ref{lem:zhang}, we have
\begin{equation}\label{eq:da2}
\frac1{n_1^2}\sum_{i:z_i=1} \sum_{j:z_j=0}\left|a_i-a_j\right|^{2}\leqslant \frac1{n_1^2}\sum_{1\leqslant i<j\leqslant n}\left|a_i-a_j\right|^{2}=\frac{n}{n_1^2}\sum_{i=1}^n(a_i-\bar a)^2=:\smean^2.
\end{equation}
Combining (\ref{eq:da1}) and (\ref{eq:da2}), Lemma~\ref{lem:expmean} follows from Lemma \ref{lem:bobkov}.
\end{proof}

\begin{lemma}\label{lem:expcov}
If $\nmt\geqslant2$ for all $m$, then, for $t\in\mathbb{R}$,
\begin{equation}
 E \exp \left\{t\left(s_{ab}-S_{ab}\right)\right\} \leqslant \exp \left\{\scov^{2} t^{2} /(n+2)\right\},\nonumber
\end{equation}
where
$$
\scov^2=\Bigg\{\sqrt{ \frac{1}{\ec^2n}\sum_{i=1}^n(a_i-\bar a)^2(b_i-\bar b)^2
} + \sqrt{
\frac{8}{\ec^3n^2}\sum_{i=1}^n(a_i-\bar a)^2\sum_{i=1}^n(b_i-\bar b)^2
}\Bigg\}^2.
$$
\end{lemma}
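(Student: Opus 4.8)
The plan is to apply Bobkov's inequality (Lemma~\ref{lem:bobkov}) to the function $f(\z) = s_{ab}(\z)$ on $\mathcal{G}$. Because $s_{ab}(\Z)$ is unbiased for $S_{ab}$ under sampling without replacement, $Ef(\Z) = S_{ab}$, so the claimed sub-Gaussian tail bound follows as soon as we verify the deterministic gradient bound $\|\nabla f(\z)\|_2 \leqslant \scov$ for every $\z \in \mathcal{G}$. Everything therefore reduces to a careful analysis of how $s_{ab}$ changes under a single treated/control swap.

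First I would derive a clean expression for the increment $f(\z) - f(\z^{i,j})$ when a treated unit $i$ ($z_i = 1$) is exchanged with a control unit $j$ ($z_j = 0$). Writing $T = \{k : z_k = 1\}$, $\bar a_T = \nt^{-1}\sum_{k\in T}a_k$, $\bar b_T = \nt^{-1}\sum_{k\in T}b_k$, and starting from $s_{ab} = (\nt-1)^{-1}\{\sum_{k\in T}a_kb_k - \nt\bar a_T\bar b_T\}$ (or, equivalently, from the pair representation of Lemma~\ref{lem:zhang} applied to the treated subsample), a short computation gives
$$
f(\z) - f(\z^{i,j}) = \frac{(a_i - \bar a_T)(b_i - \bar b_T) - (a_j - \bar a_T)(b_j - \bar b_T)}{\nt - 1} + \frac{(a_i - a_j)(b_i - b_j)}{\nt(\nt - 1)}.
$$
Since the differences are invariant under re-centering, I would then write $\tilde a_k = a_k - \bar a$, $\tilde b_k = b_k - \bar b$, $\bar{\tilde a}_T = \bar a_T - \bar a$, $\bar{\tilde b}_T = \bar b_T - \bar b$, and expand the first bracket as $\tilde a_i\tilde b_i - \tilde a_j\tilde b_j - \bar{\tilde a}_T(\tilde b_i - \tilde b_j) - \bar{\tilde b}_T(\tilde a_i - \tilde a_j)$, leaving an increment built from (i) a difference-of-products term, (ii) two ``within-group mean'' terms, and (iii) an $O(\nt^{-1})$ bilinear correction.

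The second and main step is to bound $\|\nabla f(\z)\|_2^2 = \sum_{i\in T}\sum_{j:z_j=0}|f(\z)-f(\z^{i,j})|^2$ via the componentwise triangle inequality over these three pieces. The workhorse is the elementary inequality $\sum_{i\in T}\sum_{j:z_j=0}(u_i-u_j)^2 \leqslant n\sum_{k=1}^n(u_k-\bar u)^2$, which follows from $\sum_{i,j=1}^n(u_i-u_j)^2 = 2n\sum_k(u_k-\bar u)^2$ by discarding the nonnegative within-group double sums. Applied with $u_k = \tilde a_k\tilde b_k$ it controls piece (i) by $\{n\sum_k\tilde a_k^2\tilde b_k^2\}^{1/2}$, which after dividing by $\nt-1$ and using $\nt = \ec n$ produces the first summand of $\scov$; for the pieces in (ii) one pulls the scalars $\bar{\tilde a}_T,\bar{\tilde b}_T$ out of the sum, bounds them uniformly by $\bar{\tilde a}_T^2 \leqslant \nt^{-1}\sum_k\tilde a_k^2$ (Cauchy--Schwarz over $T$), and applies the same elementary inequality to $\sum_{i\in T}\sum_{j:z_j=0}(\tilde b_i-\tilde b_j)^2$; piece (iii) is handled crudely through $(\tilde a_i-\tilde a_j)^2(\tilde b_i-\tilde b_j)^2 \leqslant 4(\tilde a_i^2+\tilde a_j^2)(\tilde b_i^2+\tilde b_j^2)$. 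Each of these contributions is of order $\{\ec^{-3}n^{-2}(\sum_k\tilde a_k^2)(\sum_k\tilde b_k^2)\}^{1/2}$, and the absolute constant $8$ inside $\scov^2$ is chosen so that their sum, together with the $\nt/(\nt-1)$ slack left over from piece (i), fits inside the second summand of $\scov$. The delicate point — and where the argument really differs from the sample-mean case of Lemma~\ref{lem:expmean} — is that the increment genuinely depends on the data-dependent means $\bar a_T,\bar b_T$, so no per-pair bound suffices; one must exploit that these means are fixed given $\z$ and uniformly $O(\nt^{-1/2})$ in size, and track every absolute constant tightly enough that the correction terms do not overshoot the stated bound. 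Everything else is repeated use of the one double-sum inequality and Cauchy--Schwarz.
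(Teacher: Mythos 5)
Your overall strategy---Bobkov's inequality applied to $f(\z)=s_{ab}(\z)$, reduced to a deterministic bound on $\|\nabla f(\z)\|_2$---is exactly the paper's, and your swap-increment formula (main term centered at the treated-sample means plus the $O(\nt^{-1})$ bilinear correction) is algebraically correct. The gap is in the constant-tracking, and with the decomposition you chose it is not recoverable: the lemma's bound, with the first summand equal to $\ec^{-2}n^{-1}\sum_i(a_i-\bar a)^2(b_i-\bar b)^2$ and the constant $8$ in the second summand, fails to follow for small $\nt$. Write $Q=\{\ec^{-3}n^{-2}\sum_i(a_i-\bar a)^2\sum_i(b_i-\bar b)^2\}^{1/2}$, so the second summand of $\scov$ is $2\sqrt2\,Q$. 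Each of your two ``within-group mean'' pieces in (ii) contributes, after your own bounds $(\bar a_T-\bar a)^2\le \nt^{-1}\sum_k(a_k-\bar a)^2$ and $\sum_{i\in T}\sum_{j:z_j=0}(\tilde b_i-\tilde b_j)^2\le n\sum_k\tilde b_k^2$, exactly $\frac{\nt}{\nt-1}Q$; together that is $\frac{2\nt}{\nt-1}Q$, which already exceeds the entire budget $2\sqrt2\,Q$ whenever $\nt<2+\sqrt2$, i.e.\ for $\nt\in\{2,3\}$, before you even add the $(\nt-1)^{-1}$ slack left over from piece (i) and the contribution of the crude bound on piece (iii) (the latter is of order $\sqrt{8\nt}(\nt-1)^{-1}Q$, which is $4Q$ at $\nt=2$). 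The sum of all three only drops below $2\sqrt2\,Q$ once $\nt$ is in the tens. This matters: the lemma is applied blockwise in Proposition~\ref{prop:concov} under the standing assumption $2\le\nmt\le\nm-2$, and its constant propagates into the $60$ there and into the $\sqrt{120L/c^3}$ of Condition~\ref{cond:tuning}, so "some larger constant" is not enough to prove the statement as written.

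The paper avoids this by splitting the same increment differently: $s_{ab}(\z)-s_{ab}(\z^{i,j})=\nt^{-1}(U_{ij}+V_{ij})$ with $U_{ij}=(a_i-\bar a)(b_i-\bar b)-(a_j-\bar a)(b_j-\bar b)$ centered at the \emph{population} means, so the main term carries denominator exactly $\nt$ (matching the first summand of $\scov$ with zero slack and no separate bilinear correction term), while $V_{ij}=\frac{a_j-a_i}{\nt-1}\sum_{l\ne i}z_l(b_l-\bar b)+\frac{b_j-b_i}{\nt-1}\sum_{l\ne i}z_l(a_l-\bar a)$ is bounded by Cauchy--Schwarz over the $\nt-1$ treated indices, which places the factor $(\nt-1)^{-1}$ \emph{inside} a square root; the replacement $(\nt-1)^{-1}\le 2/\nt$ then costs only $\sqrt2$ per term and $(\sqrt2+\sqrt2)^2=8$ lands exactly on the stated constant. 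If you redo your increment in that form---do not re-center at the treated-sample means, and let the paper's algebra absorb your correction term (iii) into the main product-difference---the remainder of your argument goes through essentially verbatim.
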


\begin{proof}

We start by examining $s_{ab}(\z)-s_{ab}(\z^{i,j})$. By Lemma \ref{lem:zhang} and some simple calculation, we have
$$
\begin{aligned}
&s_{ab}(\z)-s_{ab}(\z^{i,j}) \\
=&\frac1{n_1(n_1-1)}\sum_{1\leqslant i'<j'\leqslant n}\{z_{i'}z_{j'} (a_{i'}-a_{j'})(b_{i'}-b_{j'}) -z^{i,j}_{i'}z^{i,j}_{j'} (a_{i'}-a_{j'})(b_{i'}-b_{j'})\}\\
=&\frac1{n_1(n_1-1)}\sum_{l\neq i}z_l\{(a_{l}-a_{i})(b_{l}-b_{i}) - (a_{l}-a_{j})(b_{l}-b_{j})\}\\
=&\frac1{n_1(n_1-1)}\sum_{l\neq i}z_l\{a_ib_i-a_jb_j+(a_{j}-a_{i})b_{l} + a_{l}(b_{j}-b_{i}) \}\\
=&\frac1{n_1(n_1-1)}\sum_{l\neq i}z_l\{(a_ib_i-a_i\bar b-\bar ab_i+\bar a\bar b)-(a_jb_j-a_j\bar b-\bar ab_j+\bar a\bar b)  \\
&+(a_{j}-a_{i})(b_{l}-\bar b) + (a_{l}-\bar a)(b_{j}-b_{i})\}\\
=&\frac1{n_1}(U_{ij}+V_{ij}),
\end{aligned}
$$
where
$$
U_{ij}:=(a_i-\bar a)(b_i-\bar b)-(a_j-\bar a)(b_j-\bar b),
$$
$$
V_{ij}:=\frac{(a_{j}-a_{i})}{n_1-1}\sum_{l\neq i}z_l(b_{l}-\bar b) + \frac{(b_{j}-b_{i})}{n_1-1}\sum_{l\neq i}z_l(a_{l}-\bar a).
$$
By Cauchy--Schwarz inequality, we have
\begin{align}
    V_{ij}\leqslant&
    |a_{j}-a_{i}|\sqrt{\frac{1}{n_1-1}\sum_{l=1}^n(b_l-\bar b)^2}+
    |b_{j}-b_{i}|\sqrt{\frac{1}{n_1-1}\sum_{l=1}^n(a_l-\bar a)^2}\nonumber\\
    \leqslant& |a_{j}-a_{i}|\sqrt{\frac{2}{\ec n}\sum_{l=1}^n(b_l-\bar b)^2}+
    |b_{j}-b_{i}|\sqrt{\frac{2}{\ec n}\sum_{l=1}^n(a_l-\bar a)^2}.\label{eq:Vij}
\end{align}
Then, we can bound $\|\nabla s_{ab}(\z)\|^{2}_2$. By definition and Minkowski's inequality, we have
\begin{align}
\|\nabla s_{ab}(\z)\|^{2}_2
=&\sum_{i:z_i=1} \sum_{j:z_j=0}\left|s_{ab}(\z)-s_{ab}(\z^{i,j})\right|^{2}\nonumber\\
=&\sum_{i:z_i=1} \sum_{j:z_j=0}\left|U_{ij}/n_1+V_{ij}/n_1\right|^{2}\nonumber\\
\leqslant&\sum_{1\leqslant i<j\leqslant n}\left|U_{ij}/n_1+V_{ij}/n_1\right|^{2}\nonumber\\
\leqslant&(\sqrt{U}+\sqrt{V})^2,\label{eq:dsab}
\end{align}
where
$$
U:=\frac1{n_1^2}\sum_{1\leqslant i<j\leqslant n}U_{ij}^2,\quad V:=\frac1{n_1^2}\sum_{1\leqslant i<j\leqslant n}V_{ij}^2.
$$
We bound $U$ and $V$ separately. By Lemma \ref{lem:zhang}, we have
\begin{align}
U=&\frac1{n_1^2}\sum_{1\leqslant i<j\leqslant n}\{(a_i-\bar a)(b_i-\bar b)-(a_j-\bar a)(b_j-\bar b)\}^2 \nonumber\\
=&\frac{n}{n_1^2}\sum_{i=1}^n\Big\{(a_i-\bar a)(b_i-\bar b)-\frac1n\sum_{j=1}^n(a_j-\bar a)(b_j-\bar b)\Big\}^2 \nonumber\\
\leqslant&\frac{n}{n_1^2}\sum_{i=1}^n\{(a_i-\bar a)(b_i-\bar b)\}^2 \nonumber\\
=&\frac{1}{\ec^2n}\sum_{i=1}^n(a_i-\bar a)^2(b_i-\bar b)^2. \label{eq:Ubound}
\end{align}
By (\ref{eq:Vij}), Minkowski's inequality, and Lemma \ref{lem:zhang}, we have
\begin{align}
V\leqslant&\frac{1}{n_1^2}\sum_{1\leqslant i<j\leqslant n}\Big(|a_{j}-a_{i}|\sqrt{\frac{2}{\ec n}\sum_{i=1}^n(b_i-\bar b)^2}+|b_{j}-b_{i}|\sqrt{\frac{2}{\ec n}\sum_{i=1}^n(a_i-\bar a)^2}\Big)^2\nonumber\\
\leqslant&\frac{1}{n_1^2} \Bigg(\sqrt{\frac{2}{\ec n}\sum_{i=1}^n(b_i-\bar b)^2\sum_{1\leqslant i<j\leqslant n}(a_i-a_j)^2}
+\sqrt{\frac{2}{\ec n}\sum_{i=1}^n(a_i-\bar a)^2\sum_{1\leqslant i<j\leqslant n}(b_i-b_j)^2}\Bigg)^2
\nonumber\\
=& \frac{8}{\ec^3n^2}\sum_{i=1}^n(a_i-\bar a)^2\sum_{i=1}^n(b_i-\bar b)^2.\label{eq:Vbound}
\end{align}
Combining (\ref{eq:dsab}), (\ref{eq:Ubound}), and (\ref{eq:Vbound}), we have
$$
\|\nabla s_{ab}(\z)\|^{2}_2 \leqslant
\Bigg\{\sqrt{ \frac{1}{\ec^2n}\sum_{i=1}^n(a_i-\bar a)^2(b_i-\bar b)^2
} + \sqrt{
\frac{8}{\ec^3n^2}\sum_{i=1}^n(a_i-\bar a)^2\sum_{i=1}^n(b_i-\bar b)^2
}\Bigg\}^2
=:\scov^2.
$$
Then, the conclusion follows from Lemma \ref{lem:bobkov}.
\end{proof}


\subsection{Concentration inequalities for stratified randomization}

\citet{massart1986,bloniarz2015lasso}, and \citet{tolstikhin2017} established concentration inequalities for the sample mean under simple random sampling without replacement. We apply Lemmas~\ref{lem:expmean} and \ref{lem:expcov} in each block to obtain concentration inequalities for \textit{the weighted sample mean and sample covariance} under \textit{stratified random sampling without replacement}. These novel inequalities hold for a wide range of number of blocks, block sizes, and propensity scores.

\begin{theorem}\label{prop:conmean}
Consider a sequence of real numbers $ \{a_1,\dots,a_n\}$. For any $t>0$,
$$
\pr\Big( \summ \pim ( \bar{a}_{[m]1} - \bar{a}_{[m]} ) \geqslant t \Big)
\leqslant
\exp\Big\{-\frac{nt^2}{4\sa}\Big\},
$$
where $ \sa = (1/n) \summ \sumim ( a_i - \bar{a}_{[m]} )^2 /  \emt^2 .$
\end{theorem}

\begin{proof}
For any $\lambda>0$ and $t >0$, by Markov's inequality, we have
\begin{align*}
\pr\Big( \summ \pim (\bar{a}_{[m]1} - \bar{a}_{[m]}) \geqslant t \Big)
& \leqslant \exp\{-\lambda t\} \cdot E \exp\Big\{\lambda \summ \pim (\bar{a}_{[m]1} - \bar{a}_{[m]})\Big\}\\
& = \exp\{-\lambda t\} \cdot \prod^M_{m=1}  E \exp\Big\{\lambda \pim (\bar{a}_{[m]1} - \bar{a}_{[m]})\Big\}.
\end{align*}
By Lemma~\ref{lem:expmean}, we have
\begin{align*}
\prod^M_{m=1}  E \exp\Big\{\lambda \pim (\bar{a}_{[m]1} - \bar{a}_{[m]})\Big\}
& \leqslant  \prod^M_{m=1} \exp\Big\{\frac{\lambda^2 \pim^2}{\emt^2\nm^2} \sumim ( a_i - \bar{a}_{[m]} )^2\Big\}\\
&=\exp\Big\{\frac{\lambda^2}{n^2}\sum^M_{m = 1} \sumim ( a_i - \bar{a}_{[m]} )^2/\emt^2 \Big\}\\
&=\exp\Big\{\frac{\lambda^2}{n}\sa\Big\}.
\end{align*}
Thus,
\begin{align*}
\pr\Big( \summ \pim (\bar{a}_{[m]1} - \bar{a}_{[m]}) \geqslant t \Big)
\leqslant \exp\Big\{-\lambda t+\frac{\lambda^2}{n}\sa\Big\}.
\end{align*}
The conclusion follows by taking $\lambda = nt/(2\sa)$.
\end{proof}

\begin{theorem}\label{prop:concov}
Consider two sequences of real numbers $ \{a_1,\dots,a_n\}$ and $\{b_1,\dots,b_n\}$. If $\nmt\geqslant2$ for all $m$, then, for any $t>0$,
$$
\pr\Big( \summ \pim ( s_{[m]ab} - S_{[m]ab} ) \geqslant t \Big)
\leqslant
\exp\Big\{-\frac{nt^2}{60(\fma\fmb)^{1/2}}\Big\},
$$
where $\fma = (1/n)\summ\sumim(a_i-\bar{a}_{[m]})^4/\emt^3$ and $\fmb = (1/n)\summ\sumim(b_i-\bar{b}_{[m]})^4/\emt^3$.
\end{theorem}

\begin{proof}
We denote
\begin{align*}
\smcov^2=&\Bigg\{\sqrt{\frac{1}{\emt^{2}\nm}  \sumim(a_i-\bar{a}_{[m]})^2(b_i-\bar{b}_{[m]})^2 } \\
&+\sqrt{\frac{8}{\emt^3\nm^2}\sumim(a_i- \bar{a}_{[m]})^2\sumim(b_i- \bar{b}_{[m]})^2 }\Bigg\}^2.
\end{align*}
For any $\lambda>0$ and $t >0$, by Markov's inequality and Lemma~\ref{lem:expcov}, we have
\begin{align}
\pr\Big( \summ \pim ( s_{[m]ab} - S_{[m]ab} ) \geqslant t \Big)
& \leqslant \exp\{-\lambda t\} \cdot E \exp\Big\{\lambda \summ \pim (s_{[m]ab} - S_{[m]ab})\Big\}\nonumber\\
& = \exp\{-\lambda t\} \cdot \prod^M_{m=1}  E \exp\Big\{\lambda \pim (s_{[m]ab} - S_{[m]ab})\Big\}\nonumber\\
& \leqslant \exp\{-\lambda t\} \cdot \prod^M_{m=1} \exp\Big\{\frac{\lambda^2 \pim^2}{\nm} \smcov^2\Big\}\nonumber\\
&=\exp\Big\{-\lambda t+\frac{\lambda^2}{n}\summ  \pim\smcov^2 \Big\}.\label{eq:blocksab}
\end{align}
By Minkowski's inequality, we have
\begin{align}
\summ\pim\smcov^2
=&\summ\Bigg\{\sqrt{\frac{\pim}{\emt^{2}\nm}  \sumim(a_i-\bar{a}_{[m]})^2(b_i-\bar{b}_{[m]})^2 } \nonumber\\
&+\sqrt{\frac{8\pim}{\emt^3\nm^2}\sumim(a_i-\bar{a}_{[m]})^2\sumim(b_i-\bar{b}_{[m]})^2 }\Bigg\}^2 \nonumber\\
\leqslant& \Bigg\{\sqrt{\frac{1}{n}\summ\sumim(a_i-\bar{a}_{[m]})^2(b_i-\bar{b}_{[m]})^2/\emt^{2} } \nonumber\\
&+ \sqrt{\frac{8}{n}\summ\sumim(a_i-\bar{a}_{[m]})^2\sumim(b_i-\bar{b}_{[m]})^2 /(\emt^3\nm) }\Bigg\}^2.\label{eq:scov0}
\end{align}
Then, we deal with the two terms in (\ref{eq:scov0}) separately. By Cauchy--Schwarz inequality,
\begin{align}
&\frac{1}{n}\summ  \sumim(a_i-\bar{a}_{[m]})^2(b_i-\bar{b}_{[m]})^2/\emt^{2}\nonumber\\
\leqslant&\Big\{\frac1n\summ\sumim(a_i-\bar{a}_{[m]})^4/\emt^2\Big\}^{1/2}\Big\{\frac1n\summ\sumim(b_i-\bar{b}_{[m]})^4/\emt^2\Big\}^{1/2}\nonumber\\
\leqslant&\Big\{\frac1n\summ\sumim(a_i-\bar{a}_{[m]})^4/\emt^3\Big\}^{1/2}\Big\{\frac1n\summ\sumim(b_i-\bar{b}_{[m]})^4/\emt^3\Big\}^{1/2}.\label{eq:scov1}
\end{align}
Applying Cauchy--Schwarz inequality twice, we have
\begin{align}
&\frac{8}{n}\summ\Big[\sumim(a_i-\bar{a}_{[m]})^2\sumim(b_i-\bar{b}_{[m]})^2 /(\emt^3\nm)\Big]\nonumber\\
\leqslant& \frac{8}{n} \Big\{
\summ\Big[\sumim(a_i-\bar{a}_{[m]})^2\Big]^2/(\emt^3\nm)
\Big\}^{1/2}
\Big\{
\summ\Big[\sumim(b_i-\bar{b}_{[m]})^2\Big]^2/(\emt^3\nm)
\Big\}^{1/2}\nonumber\\
\leqslant& 8 \Big\{
\frac{1}{n}
\summ\sumim(a_i-\bar{a}_{[m]})^4/\emt^3
\Big\}^{1/2}
\Big\{
\frac{1}{n}
\summ\sumim(b_i-\bar{b}_{[m]})^4/\emt^3
\Big\}^{1/2}.\label{eq:scov2}
\end{align}
Recall that
$$
\fma = \frac{1}{n}\summ\sumim(a_i-\bar{a}_{[m]})^4/\emt^3,\quad \fmb = \frac{1}{n}\summ\sumim(b_i-\bar{b}_{[m]})^4/\emt^3.
$$
Combining (\ref{eq:scov0}), (\ref{eq:scov1}), and (\ref{eq:scov2}), we have
\begin{equation}\label{eq:scov3}
    \summ\pim\smcov^2\leqslant(1+2\sqrt{2})^2(\fma\fmb)^{1/2}\leqslant 15(\fma\fmb)^{1/2}.
\end{equation}
Combining (\ref{eq:blocksab}) and (\ref{eq:scov3}), we have
$$
\pr\Big( \summ \pim ( s_{[m]ab} - S_{[m]ab} ) \geqslant t \Big)
\leqslant\exp\Big\{-\lambda t+\frac{15\lambda^2}{n}(\fma\fmb)^{1/2}\Big\}.
$$
The conclusion follows by taking $\lambda=nt/\{30(\fma\fmb)^{1/2}\}$.
\end{proof}

\begin{theorem}\label{th:Vm}
Consider two sequences of real numbers $ \{a_1,\dots,a_n\}$ and $\{b_1,\dots,b_n\}$. We denote $U_m=\left(\bar{a}_{[m] 1}-\bar{a}_{[m]}\right)\left(\bar{b}_{[m] 1}-\bar{b}_{[m]}\right)$. For any $t>0$,
$$
P\left(\sum_{m=1}^{M} \pi_{[m]}\left(U_m-E U_m\right) \geqslant  t\right) \leqslant
\exp \left\{-nt^2/(4\sigma^2_U)\right\},
$$
where $\sigma^2_U=\sum_{m=1}^M\pi_{[m]}(4/e^3_{[m]})(S^2_{[m]a}+S^2_{[m]b})^2$.
\end{theorem}

\begin{proof}
Let $A_m=A_m(z)=\bar{a}_{[m] 1}-\bar{a}_{[m]}$. By Cauchy–Schwarz inequality, we have
$$
\begin{aligned}
|A_m| \leq \frac{1}{n_{[m] 1}} \sum_{i \in[m]} Z_{i}\left|a_{i}-\bar{a}_{[m]}\right|\leqslant\sqrt{\frac{S^2_{[m]a}}{e_{[m]}}}.
\end{aligned}
$$
Let $A_m'=A_m(z^{i,j})=\bar{a}_{[m] 1}'-\bar{a}_{[m]}$. Similarly, we define $B_m$ and $B_m'$. We have
$$
\begin{aligned}
|\nabla U_m|^{2}
=&
\sum_{i:z_i=1} \sum_{j:z_j=0}\left\{A_mB_m-A_m'B_m'\right\}^2 \\
=&
\sum_{i:z_i=1} \sum_{j:z_j=0}\left\{A_m(B_m-B_m')- (A_m'-A_m)B_m'\right\}^2 \\
\leqslant&
\frac{S^2_{[m]a}+S^2_{[m]b}}{e_{[m]}} \sum_{i:z_i=1} \sum_{j:z_j=0}\left\{|B_m-B_m'|+ |A_m'-A_m|\right\}^2\\
=&
\frac{S^2_{[m]a}+S^2_{[m]b}}{e_{[m]}n_{[m]1}^2} \sum_{i:z_i=1} \sum_{j:z_j=0}\left(|a_i-a_j|+ |b_i-b_j|\right)^2 \\
\leqslant&
\frac{S^2_{[m]a}+S^2_{[m]b}}{e_{[m]}n_{[m]1}^2} \sum_{1\leqslant i<j\leqslant n_{[m]}}\left(|a_i-a_j|+ |b_i-b_j|\right)^2 \\
\leqslant&
\frac{S^2_{[m]a}+S^2_{[m]b}}{e_{[m]}n_{[m]1}^2} \left(\sqrt{\sum_{1\leqslant i<j\leqslant n_{[m]}}(a_i-a_j)^2}+ \sqrt{\sum_{1\leqslant i<j\leqslant n_{[m]}}(b_i-b_j)^2}\right)^2 \\
=&
\frac{S^2_{[m]a}+S^2_{[m]b}}{e_{[m]}n_{[m]1}^2} \left(\sqrt{n_{[m]}(n_{[m]}-1)S^2_{[m]a}}+ \sqrt{n_{[m]}(n_{[m]}-1)S^2_{[m]b}}\right)^2 \\
\leqslant& \frac{4}{e^3_{[m]}}(S^2_{[m]a}+S^2_{[m]b})^2=:\sigma^2_{[m]U}.
\end{aligned}
$$

For any $\lambda>0$ and $t>0$, by Markov's inequality and Bobkov's inequality, we have
$$
\begin{aligned}
P\left(\sum_{m=1}^{M} \pi_{[m]}\left(U_m-E U_m\right) \geqslant  t\right)
\leqslant&
e^{-\lambda t} \cdot \prod_{m=1}^{M} E \exp \left\{\lambda \pi_{[m]}\left(U_m-E U_m\right)\right\} \\
\leqslant&
e^{-\lambda t} \cdot \prod_{m=1}^{M} \exp \left\{ \lambda^2 \pi_{[m]}^{2} \frac{\sigma^2_{[m]U}}{n_{[m]}}\right\} \\
=&
\exp \left\{-\lambda t+\sigma^2_U\lambda^2 /n\right\},
\end{aligned}
$$
where the last equality is due to $\sigma^2_U=\sum_{m=1}^M\pi_{[m]}\sigma^2_{[m]U}$. Taking $\lambda=nt/(2\sigma^2_U)$, we have
$$
P\left(\sum_{m=1}^{M} \pi_{[m]}\left(U_m-E U_m\right) \geqslant  t\right) \leqslant \exp \left\{-nt^2/(4\sigma^2_U)\right\}.
$$
\end{proof}

\subsection{Asymptotic theory of $\htu$ under stratified randomization and stratified rerandomization}

In this section, we review some useful results on the asymptotic distributions of $\htu$ under stratified randomization \citep{Liu2019} and stratified rerandomization \citep{Wang2020rerandomization}, respectively. The maximum second moment condition (Conditions~\ref{cond:1} and \ref{cond:3}) used in this section is weaker than the bounded fourth moment condition (Condition~\ref{cond:moment-X}) used in the main text.

\begin{condition}\label{cond:1}
The maximum block-specific squared distance of the potential outcomes satisfies $n^{-1} \max_{m=1,\dots,M} \max_{i \in [m]}\left\{Y_i(z) - \bym{z}\right\}^2 \rightarrow 0,$ for $z = 0,1$.
\end{condition}

\begin{condition}\label{cond:2}
The weighted variances
$\sum^M_{m = 1} \pi_{[m]} \Smyz{1}/\emt$, $\sum^M_{m = 1} \pi_{[m]} \Smyz{0}/(1 - \emt)$, and $\sum^M_{m = 1} \pi_{[m]} \Smytmyc$ tend to finite limits, positive for the first two, and the limit of $\sum^M_{m = 1} \pi_{[m]} \big[ \Smyz{1}/\emt + \Smyz{0}/(1 - \emt) - \Smytmyc \big]$ is strictly positive.
\end{condition}

\begin{proposition}[\citet{Liu2019}]\label{prop:sr}
If Conditions~\ref{cond:propensity}, \ref{cond:1}, and \ref{cond:2} hold, then $\sqrt{n}(\htu - \tau) \stackrel{d}\longrightarrow N(0,\sigma^2_{\unadj})$. Moreover, if Conditions~\ref{cond:propensity} and \ref{cond:1} hold and $\nmz \geqslant2$ for $m=1,\dots,M$ and $z=0,1$, then
$$
\summ \pim \Big\{ \frac{s^2_{[m]Y(z)}}{\emt} \Big\} - \summ \pim \Big\{ \frac{\Smyz{z}}{\emt} \Big\} \stackrel{p}\longrightarrow  0, \quad z = 0, 1.
$$
\end{proposition}

By \citet[][Proposition 2]{Wang2020rerandomization}, the covariance of $\sqrt{n} ( \htu, \htauw ^\T )^\T$ under stratified randomization is
\begin{eqnarray}
V\{ Y, \wm  \} &=& \left( \begin{array}{cc}
    V_{Y Y} & V_{Y \wm}  \\
    V_{\wm Y}  & V_{\wm \wm}
\end{array} \right), \nonumber
\end{eqnarray}
where
$$
V_{Y Y}=\su^2 = \summ \pim \Big\{ \frac{\Smyz{1}}{\emt} + \frac{\Smyz{0}}{1 - \emt} - \Smytmyc \Big\},
$$
$$
V_{\wm \wm}=\summ \pim \Big\{\frac{\Smw }{\emt ( 1 - \emt)}\Big\},
$$
$$
V_{\wm Y}=V_{Y \wm}^\T=\summ \pim \Big\{\frac{ S_{[m]  \wm {Y}(1) }}{\emt} + \frac{ S_{[m]  \wm {Y}(0) }}{1 - \emt}\Big\}.
$$
The asymptotic distribution of $\htu$ under stratified rerandomization depends on the squared multiple correlation between $\htu$ and $\htauw$,
$$
\Ryw =\lim\limits_{n \rightarrow \infty} ( V_{Y \wm}  V_{\wm \wm}^{-1} V_{\wm Y} ) / V_{Y Y}.
$$
Since $V_{YY} = \sigma^2_{\unadj}$, we can conservatively estimate $V_{YY}$ by $\hat \sigma^2_{\unadj}$. In addition, we can consistently estimate $V_{ \wm Y}$ by $\hat V_{ \wm Y } = \summ \pim \left\{  s_{[m] \wm Y(1)} /\emt +  s_{[m] \wm Y(0)} /(1 - \emt) \right\}$ and directly calculate $V_{\wm \wm}$ based on $\xr_i$. Then, we can estimate $\Ryw$ by
$$
\hRyw = (\hat V_{ \wm Y }^\T V_{\wm \wm}^{-1} \hat V_{\wm Y} ) / \hat \sigma^2_{\unadj}.
$$

\begin{remark}
When $\nmz = 1$, we define
$$
s_{[m] \wm Y(z)} = \frac{\nm}{ \nmz ( \nm - 1 )} \sumim I(Z_i = z) ( \xr_{i, \mathcal{K}} - \bar{\xr}_{[m], \mathcal{K}} ) Y_i.
$$
\end{remark}

Recall that $\varepsilon_0$, $D_1,\dots,D_{k}$ are independent standard normal random variables and  $L_{k,a} \sim D_1 \mid \sum_{i=1}^{k} D_i^2 \leqslant a $. Let $v_{k,a} = \pr(\chi^2_{k+2} \leqslant a)/\pr( \chi^2_{k} \leqslant a) \in (0,1)$.
We can conservatively estimate the variance of $\htu$ under stratified rerandomization by
$$
\hat \sigma^2_{\unadjM} = \hat \sigma^2_{\unadj} \big\{ 1 - (1 - v_{k,a} )  \hRyw \big\}.
$$

\begin{condition}\label{cond:3}
The maximum block-specific squared distance of the covariates $\xr_i$ satisfies $n^{-1} \max_{m=1,\dots,M} \max_{i \in [m]} || \xr_i - \bar{\xr}_{[m]} ||_{\infty}^2 \rightarrow 0$.
\end{condition}

\begin{condition}\label{cond:4}
The weighted covariances $\summ \pim \Smw / \emt$, $\summ \pim \Smw /( 1 - \emt )$, $\summ \pim  S_{[m] \wm Y(1) } / \emt $, and $\summ \pim  S_{[m] \wm Y(0) } /( 1 - \emt )$ tend to finite limits, and the limit of $V_{ \wm \wm }$ is strictly positive definite.
\end{condition}

\begin{condition}\label{cond:5}
There exists a constant $C$ such that $n^{-1}\sumi Y^2_i(z)\leqslant C$, $z=0,1$.
\end{condition}

\begin{proposition}[\cite{Wang2020rerandomization}]
\label{prop:srr}
If Conditions~\ref{cond:propensity}, \ref{cond:1}--\ref{cond:4} hold, then, for  fixed $a>0$, $\pr(\mathcal{M}_a) \rightarrow \pr(\chi^2_{k} \leqslant a)$,
$$
\{ \sqrt{n} (\htu - \tau) \mid \mathcal{M}_a \} \stackrel{d}\longrightarrow  \sigma_{\unadj} \Big\{ \sqrt{1 - R_{Y, \wm}^2} \varepsilon_0 + \sqrt{R_{Y, \wm}^2} L_{k,a}  \Big\},
$$
and the asymptotic variance of $\sqrt{n} (\htu - \tau) \mid \mathcal{M}_a$  is
$$
\sigma^2_{\unadjM} = \sigma^2_{\unadj} \big\{ 1 - (1 - v_{k,a} )  \Ryw \big\} \leqslant \sigma^2_{\unadj}.
$$
Furthermore, if Condition~\ref{cond:5} holds, then
$$
\begin{aligned}
\hat \sigma^2_{\unadjM} \stackrel{p}\longrightarrow \sigma^2_{\unadjM} +
\lim\limits_{n \rightarrow \infty}  \Bigg\{&\sum_{m\in\mathcal{A}_c} \pim S^2_{[m]\{ Y(1) - Y(0) \}}+
\\&\left(\frac{n_f}{n}\right)^2\frac{n}{n_f+\sum_{m\in\mathcal{A}_f}\omega_{[m]}}\sum_{m\in\mathcal{A}_f}\omega_{[m]}({\tau}_{[m]}-{\tau}_{f})^2
\Bigg\}.
\end{aligned}
$$
\end{proposition}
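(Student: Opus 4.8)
The plan is to derive the whole statement from two ingredients: (i) a joint finite-population central limit theorem for $\sqrt{n}(\htu-\tau,\htauw^\T)^\T$ under stratified randomization, and (ii) elementary facts about a standard Gaussian vector conditioned to lie in a Euclidean ball. For (i), I would write $\sqrt{n}(\htu-\tau)=\summ\pim\sqrt{n}\{(\bymz{1}-\bym{1})-(\bymz{0}-\bym{0})\}$ and, analogously, $\sqrt{n}\htauw=\summ\pim\sqrt{n}(\bar{\wr}_{[m]1}-\bar{\wr}_{[m]0})$, so that the block contributions are independent across $m$, each a centred difference-in-means over a sampling-without-replacement design within the block. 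Condition~\ref{cond:propensity} keeps the propensity scores bounded away from $0$ and $1$; Conditions~\ref{cond:1} and \ref{cond:3} provide the maximal-second-moment control needed to bound the Lindeberg remainder; Conditions~\ref{cond:2} and \ref{cond:4} guarantee that the limiting covariance $V\{Y,\wm\}$ is finite and positive definite. By the Cram\'er--Wold device this reduces to a scalar Lindeberg CLT for a triangular array whose rows (the blocks) are independent, together with the exact within-block variance/covariance formulas for sampling without replacement (as used for Proposition~\ref{prop:sr} and its vector analogue). The verification must simultaneously cover the regimes of many small blocks ($M\to\infty$ with $\nm$ bounded), a few large blocks ($M$ fixed with $\nm\to\infty$), and hybrids (splitting the block index set into the two kinds); this step is essentially Proposition~2 of \cite{Wang2020rerandomization}, and I expect it to be the main obstacle.

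Granting the joint CLT, set $\Gamma_n=\cov(\sqrt{n}\htauw)=\summ\pim\Smw/\{\emt(1-\emt)\}$, which is deterministic and tends to $V_{\wm\wm}$ (positive definite) by Condition~\ref{cond:4}. Then $\tilde W_n:=\Gamma_n^{-1/2}\sqrt{n}\htauw\stackrel{d}\longrightarrow D\sim N(\mathbf{0},I_k)$ jointly with $\sqrt{n}(\htu-\tau)\stackrel{d}\longrightarrow T\sim N(0,\sigma^2_{\unadj})$, and $\maha(\Z,\wm)=\|\tilde W_n\|_2^2$. Because $\pr(\|D\|_2^2=a)=0$, the portmanteau theorem gives $\pr(\mathcal{M}_a)=\pr(\|\tilde W_n\|_2^2\leqslant a)\to\pr(\chi^2_k\leqslant a)>0$ and, for every bounded continuous $g$, $E[\,g(\sqrt{n}(\htu-\tau))\mathbf{1}_{\mathcal{M}_a}\,]\to E[\,g(T)\mathbf{1}\{\|D\|_2^2\leqslant a\}\,]$; dividing by $\pr(\mathcal{M}_a)$ yields $\{\sqrt{n}(\htu-\tau)\mid\mathcal{M}_a\}\stackrel{d}\longrightarrow\{T\mid\|D\|_2^2\leqslant a\}$. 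I would then decompose $T=\bbeta^\T D+e$ with $\bbeta$ the limit of $\Gamma_n^{-1/2}\cov(\sqrt{n}\htauw,\sqrt{n}(\htu-\tau))$, so that $e\perp D$, $\|\bbeta\|_2^2=\sigma^2_{\unadj}\Ryw$ (by the definition of $\Ryw$), $\var(e)=\sigma^2_{\unadj}(1-\Ryw)$, and hence $e\stackrel{d}=\sigma_{\unadj}\sqrt{1-\Ryw}\,\varepsilon_0$ with $\varepsilon_0\sim N(0,1)$ independent of $D$. Since the law of $D$ and the ball $\{\|D\|_2^2\leqslant a\}$ are both rotation invariant, rotating so that $\bbeta$ points along the first coordinate axis gives $\{\bbeta^\T D\mid\|D\|_2^2\leqslant a\}\stackrel{d}=\{\|\bbeta\|_2 D_1\mid\|D\|_2^2\leqslant a\}=\sigma_{\unadj}\sqrt{\Ryw}\,L_{k,a}$, which together with the independent residual is exactly the claimed limit $\sigma_{\unadj}(\sqrt{1-\Ryw}\,\varepsilon_0+\sqrt{\Ryw}\,L_{k,a})$.

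For the asymptotic variance, $\varepsilon_0$ and $L_{k,a}$ are independent and both have mean $0$ (the latter by symmetry of $D$), so the limiting variance equals $\sigma^2_{\unadj}(1-\Ryw)+\sigma^2_{\unadj}\Ryw\,\var(L_{k,a})$; and $\var(L_{k,a})=v_{k,a}$ because $\var(L_{k,a})=E[D_1^2\mathbf{1}\{\|D\|_2^2\leqslant a\}]/\pr(\|D\|_2^2\leqslant a)$ while, by exchangeability of the coordinates and the density identity $x f_{\chi^2_k}(x)=k f_{\chi^2_{k+2}}(x)$, $E[D_1^2\mathbf{1}\{\|D\|_2^2\leqslant a\}]=k^{-1}E[\chi^2_k\mathbf{1}\{\chi^2_k\leqslant a\}]=\pr(\chi^2_{k+2}\leqslant a)$. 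This gives the variance $\sigma^2_{\unadj}\{1-(1-v_{k,a})\Ryw\}=\sigma^2_{\unadjM}$, and $v_{k,a}\in(0,1)$ forces $\sigma^2_{\unadjM}\leqslant\sigma^2_{\unadj}$. Finally, for the variance estimator I would use that any statement $X_n\stackrel{p}\longrightarrow c$ holding under stratified randomization also holds under $\pr(\cdot\mid\mathcal{M}_a)$, because $\pr(|X_n-c|>\epsilon\mid\mathcal{M}_a)\leqslant\pr(|X_n-c|>\epsilon)/\pr(\mathcal{M}_a)$ and $\pr(\mathcal{M}_a)$ is bounded away from $0$. Applying this to Proposition~\ref{prop:sr} gives $\hat\sigma^2_{\unadj}\stackrel{p}\longrightarrow\sigma^2_{\unadj}+\Delta^2$ and $s_{[m]\wm Y(z)}\stackrel{p}\longrightarrow S_{[m]\wm Y(z)}$, where $\Delta^2:=\lim_n\summ\pim\Smytmyc$, whence $\hat V_{\wm Y}\stackrel{p}\longrightarrow V_{\wm Y}$. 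By the continuous mapping theorem (all denominators stay bounded away from $0$), $\hRyw=\hat V_{\wm Y}^\T V_{\wm\wm}^{-1}\hat V_{\wm Y}/\hat\sigma^2_{\unadj}\stackrel{p}\longrightarrow\sigma^2_{\unadj}\Ryw/(\sigma^2_{\unadj}+\Delta^2)$, so that $\hat\sigma^2_{\unadjM}=\hat\sigma^2_{\unadj}\{1-(1-v_{k,a})\hRyw\}\stackrel{p}\longrightarrow(\sigma^2_{\unadj}+\Delta^2)-(1-v_{k,a})\sigma^2_{\unadj}\Ryw=\sigma^2_{\unadjM}+\Delta^2$, which is the last assertion.
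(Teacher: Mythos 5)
The paper does not actually prove this proposition: it is quoted verbatim from \cite{Wang2020rerandomization} and used as a black box, so there is no in-paper argument to compare yours against. Judged on its own terms, your sketch is a faithful reconstruction of the standard proof of such results (in the style of Li, Ding and Rubin): a joint finite-population CLT for $\sqrt{n}(\htu-\tau,\htauw^\T)^\T$, the portmanteau/conditioning step (valid because the boundary $\{\|D\|_2^2=a\}$ is null under the nondegenerate Gaussian limit and $\pr(\mathcal{M}_a)$ stays bounded away from zero), the orthogonal decomposition plus rotation invariance giving the $\sqrt{1-\Ryw}\,\varepsilon_0+\sqrt{\Ryw}\,L_{k,a}$ form, and the identity $xf_{\chi^2_k}(x)=kf_{\chi^2_{k+2}}(x)$ yielding $\var(L_{k,a})=v_{k,a}$; all of these steps check out. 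Two caveats. First, the entire technical weight sits in the joint CLT over the three block regimes, which you explicitly defer to Proposition~2 of \cite{Wang2020rerandomization} rather than prove; since that is precisely the content being imported, your argument is a correct scaffold around the cited result rather than an independent proof. Second, Proposition~\ref{prop:sr} as stated only gives consistency of the weighted sample \emph{variances} of $Y(z)$, not of the cross-covariances $s_{[m]\wm Y(z)}$ entering $\hat V_{\wm Y}$; you need the covariance analogue (which the concentration inequality of Proposition~\ref{prop:concov}, or the corresponding result in the reference, supplies) before the continuous-mapping step for $\hRyw$ goes through.
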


Proposition~\ref{prop:srr} shows that the asymptotic distribution of $\htu$ under stratified rerandomization is a convolution of a normal distribution and a truncated normal distribution, and its asymptotic variance is less than or equal to that of $\htu$ under stratified randomization. Moreover, we can conservatively estimate the asymptotic variance.

\section{Proofs of Theorems}
\label{sec:proof-main}

\zhu{We first prove theoretical results under the special case of stratified randomization (Propositions~\ref{thm:sr2} and \ref{thm:sr2var}). Then we extend the proofs to the general scenario of stratified rerandomization (Theorems~\ref{thm:srr2} and \ref{thm:srr2var}).}

\begin{proposition}
\label{thm:sr2}
Suppose that Conditions~\ref{cond:propensity}--\ref{cond:tuning} hold. Under stratified randomization, we have $\sqrt{n} ( \htlt - \tau ) \stackrel{d}\longrightarrow N(0, \slt^2)$.
Furthermore, $\htlt$ is asymptotically more efficient than $\htu$, that is,
$$
\slt^2 - \su^2  = - \lim\limits_{n\rightarrow \infty} \g^\T \bigg\{ \summ \pim \frac{\Smx}{\emt ( 1 - \emt ) }  \bigg\} \g \leqslant 0.
$$
\end{proposition}

\begin{proposition}
\label{thm:sr2var}
Suppose that Conditions~\ref{cond:propensity}--\ref{cond:tuning} and \ref{cond:largest-RE} hold. Under stratified randomization, $\hslt^2$ converges in probability to
$$
\slt^2 + \lim\limits_{n \rightarrow \infty}  \left\{\sum_{m\in\mathcal{A}_c} \pim\Smestmesc +
\left(\frac{n_f}{n}\right)^2\frac{n}{n_f+\sum_{m\in\mathcal{A}_f}\omega_{[m]}}\sum_{m\in\mathcal{A}_f}\omega_{[m]}({\tau}_{[m]}-{\tau}_{f})^2
\right\},
$$
which is no less than $\slt^2$ and no greater than the probability limit of $\hat \sigma^2_{\unadj}$.
\end{proposition}

\subsection{Proof of Proposition~\ref{thm:sr2}}

\begin{proof}

We first prove the asymptotic normality of $\htlt$.
By definition, we have
\begin{align*}
\htlt - \tau
=&\htu - \tau - \htaux^\t \hg\\
=& \summ \pim \bigg[\big\{ \bymz{1} - \bym{1} -\big( \bxmz{1} - \bxm \big)^\t \hg \big\} \\
& \quad \quad - \big\{ \bymz{0} - \bym{0} -\big( \bxmz{0} - \bxm \big)^\t \hg  \big\} \bigg].
\end{align*}
Recall the decomposition of the potential outcomes,
\begin{equation}\label{eq thm 4 *A}
Y_i(z) = \bar Y_{[m]}(z) + \big( \bx_i - \bxm \big)^\t
\g + \e^*_i(z), \quad i \in [m],\quad z=0,1,
\end{equation}
we have
\begin{align*}
\htlt - \tau = & \summ \pim \big( \bar \e^*_{[m]1} - \bar \e^*_{[m]0} \big) + \summ \pim \big( \bxmz{1} - \bxm \big)^\t \big(\g - \hg \big)\\
& - \summ \pim \big( \bxmz{0} - \bxm \big)^\t \big(\g - \hg \big).
\end{align*}
Applying Proposition~\ref{prop:sr} to $\e^*_i(z)$, we have
\begin{equation}\label{eq thm 4 **1}
\sqrt{n} \summ \pim \big( \bar \e^*_{[m]1} - \bar \e^*_{[m]0}\big) \stackrel{d}\longrightarrow N(0,\slt^2).
\end{equation}
It suffices for the asymptotic normality of $\htlt$ to show that
\begin{equation}\label{eq thm 4 **2}
\sqrt{n}\summ \pim \big(\bxmz{1} - \bxm\big)^\t \big( \g - \hg \big) \stackrel{p}\longrightarrow 0,
\end{equation}
\begin{equation}\label{eq thm 4 **3}
\sqrt{n} \summ \pim \big(\bxmz{0} - \bxm\big)^\t \big( \g - \hg \big) \stackrel{p}\longrightarrow 0.
\end{equation}
By H$\ddot{o}$lder inequality,
\begin{align*}
\Big|\summ \pim \big(\bxmz{1} - \bxm\big)^\t \big( \g - \hg \big) \Big| & \leqslant \Big\|\summ \pim \big(\bxmz{1} - \bxm\big)\Big\|_\infty \cdot \Big\|\g - \hg\Big\|_1.
\end{align*}


To bound $\|\summ \pim \big(\bxmz{1} - \bxm\big)\|_\infty$ and $\|\g - \hg \|_1 $, we have the following two lemmas with their proofs given later.

\begin{lemma} \label{lem:concen-meanx}
If Conditions \ref{cond:propensity} and \ref{cond:moment-X} hold, then
$$
\Big\|\summ \pim \big(\bxmz{1} - \bxm\big)\Big\|_\infty = \Op\bigg(\sqrt{\dfrac{\log p}{n}}\bigg).
$$
\end{lemma}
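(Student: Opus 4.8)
The plan is to apply Proposition~\ref{prop:conmean} coordinatewise and then take a union bound over the $p$ covariates. Fix a coordinate $j\in\{1,\dots,p\}$ and set $a_i=\xe_{ij}$, treating the $j$th covariate as a fixed ``potential outcome'' unaffected by the treatment assignment, so that in the notation of Proposition~\ref{prop:conmean} one has $\bar a_{[m]1}=\bxmztj$ and $\bar a_{[m]}=\bxmj$. Applying the proposition to $(a_i)$ and to $(-a_i)$ and combining the two tail bounds gives, for every $t>0$,
$$
\pr\Big(\Big|\summ\pim(\bxmztj-\bxmj)\Big|\geqslant t\Big)\leqslant 2\exp\Big\{-\frac{nt^2}{4\sigma_{a,j}^2}\Big\},\quad \sigma_{a,j}^2=\frac1n\summ\sumim\frac{(\xe_{ij}-\bxmj)^2}{\emt^2}.
$$

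Next I would control $\sigma_{a,j}^2$ uniformly in $j$. Condition~\ref{cond:propensity} gives $\emt^{-2}\leqslant\propc^{-2}$, and Condition~\ref{cond:moment-X} together with the Cauchy--Schwarz inequality gives $n^{-1}\summ\sumim(\xe_{ij}-\bxmj)^2\leqslant\{n^{-1}\summ\sumim(\xe_{ij}-\bxmj)^4\}^{1/2}\leqslant L^{1/2}$, so that $\sigma_{a,j}^2\leqslant\sigma^2:=L^{1/2}/\propc^2$, a constant independent of $n$ and $j$. A union bound over $j=1,\dots,p$ then yields
$$
\pr\Big(\Big\|\summ\pim(\bxmz{1}-\bxm)\Big\|_\infty\geqslant t\Big)\leqslant 2p\exp\Big\{-\frac{nt^2}{4\sigma^2}\Big\}.
$$
Choosing $t=K_n\sqrt{\log p/n}$ with any $K_n\to\infty$ (e.g.\ $K_n=\sqrt{\log p}$, recalling $p\to\infty$) drives the right-hand side to zero, which is exactly the claim $\|\summ\pim(\bxmz{1}-\bxm)\|_\infty=\Op(\sqrt{\log p/n})$.

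Given Proposition~\ref{prop:conmean}, the argument is essentially routine; the substantive work is hidden in that proposition, which rests on applying Bobkov's inequality block by block. The two points that require a little care are (i) symmetrizing over $\pm a_i$ to pass from a one-sided bound to the absolute value, and (ii) the uniform-in-$j$ moment control in the second paragraph --- this is precisely why the bounded fourth-moment Condition~\ref{cond:moment-X} is imposed rather than merely a bounded second moment: the $p$-fold union bound needs the sub-Gaussian proxy variance $\sigma_{a,j}^2$ to be bounded uniformly over all coordinates, and the fourth-moment bound delivers exactly that. I would regard (ii) as the only genuine ``obstacle,'' and it is a mild one. An entirely parallel argument (applying Proposition~\ref{prop:conmean} to $\pm\e_i(z)$ and using the fourth-moment bound on the errors) handles the analogous term involving the projection errors needed elsewhere in the proof of Theorem~\ref{thm:sr1}.
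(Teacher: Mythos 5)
Your argument is the same as the paper's: apply Proposition~\ref{prop:conmean} to $\pm\xc_j$ for each coordinate, bound the proxy variance uniformly in $j$ via Condition~\ref{cond:propensity} and the Cauchy--Schwarz/fourth-moment step, and finish with a union bound over the $p$ coordinates. The moment control you single out as the one genuine point of care is exactly the step the paper performs.

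One small slip in your last step: choosing $t=K_n\sqrt{\log p/n}$ with $K_n\to\infty$ does \emph{not} establish the claim $\|\cdot\|_\infty=\Op(\sqrt{\log p/n})$ --- it only gives a bound with a diverging multiplicative factor, which is strictly weaker. What you need, and what your own displayed tail bound already delivers, is a \emph{fixed} constant: with $t=K\sqrt{\log p/n}$ the right-hand side is $2p^{1-K^2/(4\sigma^2)}$, which tends to zero (since $p\to\infty$) as soon as $K^2>4\sigma^2$. The paper takes $K=\sqrt{8L^{1/2}/\propc^2}$, making the exponent $-\log p$. So the fix is one line, but as written the final sentence proves the wrong statement.
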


\begin{lemma} \label{lem:ghat}
If Conditions \ref{cond:propensity}, \ref{cond:moment-X}, \ref{cond:re}, and \ref{cond:tuning} hold, then $\big\|\gz{z} - \hgz{z}\big\|_1 = \Op(s \lambda_z)$, $z=0,1$.
\end{lemma}

By Lemma~\ref{lem:ghat}, $\big\|\g - \hg \big\|_1 = \Op(s \lambda_1 + s \lambda_0)$. Therefore,
\begin{align*}
& \sqrt{n} \Big|\summ \pim \big(\bxmz{1} - \bxm\big)^\t \big( \g - \hg \big) \Big| \\
& \leqslant  \sqrt{n} \Big\|\summ \pim \big(\bxmz{1} - \bxm\big)\Big\|_\infty \cdot \Big\|\g - \hg\Big\|_1\\
& = \Op\Big( \sqrt{n} \cdot \sqrt{ \dfrac{\log p}{n} } \Big) \cdot \Op(s  \lambda_1 + s \lambda_0 ) \\
& = \op(1),
\end{align*}
where the last equality is because of Condition~\ref{cond:tuning}. Thus, Statement~\eqref{eq thm 4 **2} holds. Similarly, Statement~\eqref{eq thm 4 **3} holds. Combining \eqref{eq thm 4 **1}--\eqref{eq thm 4 **3}, we obtain the asymptotic normality of $\htlt$.

Next, we compare the asymptotic variances $\su^2$ and $\slt^2$. By the decomposition in \eqref{eq thm 4 *A}, we have
\begin{equation}\label{eq thm 4 5.1}
\Smesz{1} = \Smyz{1} + \g^\t \Smx \g - 2 \Smxyz{1}^\t\g.
\end{equation}
Similarly, we have
\begin{equation}\label{eq thm 4 5.2}
\Smesz{0} = \Smyz{0} + \g^\t \Smx \g - 2 \Smxyz{0}^\t\g,
\end{equation}
and
$$
\begin{aligned}
\Smestmesc =& \Smytmyc+ (\g-\g)^\t \Smx (\g-\g) \\
& - 2 \{\Smxyz{1}-\Smxyz{0}\}^\t(\g-\g)\\
=& \Smytmyc.
\end{aligned}
$$
Thus,
\begin{align*}
\su^2  = & \lim_{n \rightarrow \infty} \summ \pim \Big[ \dfrac{\Smyz{1}}{\emt} + \dfrac{\Smyz{0}}{1 - \emt }  - \Smytmyc\Big]\\
 =&  \slt^2 - \lim_{n \rightarrow \infty} \Bigg[ \g^\t \Big\{ \summ \pim  \dfrac{\Smx}{ \emt (1 -\emt)} \Big\} \g\\
& + 2 \summ \pim \Big\{ \dfrac{\Smxyz{1}}{\emt} + \dfrac{\Smxyz{0}}{1 - \emt} \Big\}^\t \g \Bigg]\\
 = & \slt^2 + \lim_{n \rightarrow \infty} \g^\t \Big\{ \summ \pim \dfrac{\Smx}{ \emt ( 1 - \emt) } \Big\} \g,
\end{align*}
where the last equality is because of the definition of $\g$:
$$
\gs =  \bigg\{ \summ \pim \frac{\Smxs}{\emt ( 1 - \emt) } \bigg\}^{-1} \bigg\{ \summ \pim \frac{\Smxsyz{1}}{\emt } + \summ \pim \frac{\Smxsyz{0}}{ 1 - \emt } \bigg\},
$$
and $\gsc = \bb{0}$.
\end{proof}

\subsection{Proof of Proposition~\ref{thm:sr2var}}

\begin{proof}

We first introduce a lemma that bounds the number of the covariates selected by Lasso and will be proved in Section \ref{sec:proof-shat-gamma}.

\begin{lemma}\label{lem:shat-gamma}
If Conditions~\ref{cond:propensity}, \ref{cond:moment-X}, and \ref{cond:re}--\ref{cond:largest-RE}
hold, then there exists a constant $C$ independent of $n$, such that the following holds with probability tending to one,
\[\left\|\hgz{1}\right\|_0 \leqslant C  s,~~  \left\|\hgz{0}\right\|_0  \leqslant C   s.\]
\end{lemma}
By Lemma \ref{lem:shat-gamma}, we have
$$
\hat s  = \left\|\hgz{1}+\hgz{0}\right\|_0 = \Op(s).
$$
Then, $n/(n-\hat{s}) \stackrel{p}\longrightarrow 1 $. Thus, we only need to derive the probability limit of
\begin{equation}\label{eq:var_fine}
\begin{aligned}
\sum_{m\in\mathcal{A}_c} \pim \Big\{ \frac{ s^2_{[m]\tY(1)}}{\emt}  + \frac{ s^2_{[m]\tY(0)}}{1 - \emt} \Big\}+
\left(\frac{n_f}{n}\right)^2\frac{n}{n_f+\sum_{m\in\mathcal{A}_f}\omega_{[m]}}\sum_{m\in\mathcal{A}_f}\omega_{[m]}(\hat{\tau}_{\tY,[m]}-\hat{\tau}_{\tY,f})^2.
\end{aligned}
\end{equation}


\textbf{Step 1 (Coarse blocks)}: We derive the probability limit of the first term in \eqref{eq:var_fine}. By the definition of $\tY_i=Y_i- (\xr_i-\bxm)^\T \hg$ and the decomposition \eqref{eq thm 4 *A}, we have
\begin{align*}
s^2_{[m]\tY(1)} & = \dfrac{1}{\nmt -1 } \sumim Z_i \big\{ Y_i(1) - \bymz{1} - (\xr_i - \bx_{[m]1})^\t \hg \big\}^2\\
& = s^2_{[m]\e^*(1)} + \big( \g - \hg \big)^\t \smx \big( \g - \hg \big) + 2 \smxesz{1}^\t \big( \g - \hg \big),
\end{align*}
where $\smx=s^2_{[m]\xm(1)}$ stands for the sample covariance of $\xm$ under treatment. In the following, we will use this simplified notation if there is no exceptional clarity.
Therefore,
\begin{align}\label{eqn::sme1}
\sum_{m\in\mathcal{A}_c} \pim  \dfrac{s^2_{[m]\tY(1)}}{\emt} = \sum_{m\in\mathcal{A}_c} \pim \dfrac{s^2_{[m]\e^*(1)}}{\emt} & + \big( \g - \hg \big)^\t \bigg( \sum_{m\in\mathcal{A}_c} \pim  \dfrac{\smx}{\emt} \bigg) \big( \g - \hg \big) \nonumber \\
& + 2 \sum_{m\in\mathcal{A}_c}\pim  \dfrac{\smxesz{1}^\t}{\emt} \big( \g - \hg \big).
\end{align}

For the first term on the right hand of \eqref{eqn::sme1}, applying Proposition~\ref{prop:sr} to $\e^*_i(1)$, we have
\[ \sum_{m\in\mathcal{A}_c} \pim \dfrac{s^2_{[m]\e^*(1)}}{\emt} - \sum_{m\in\mathcal{A}_c} \pim \dfrac{S^2_{[m]\e^*(1)}}{\emt} \stackrel{p}\longrightarrow 0. \]

For the second term on the right-hand side of \eqref{eqn::sme1}, by Condition~\ref{cond:propensity}, we have
\begin{align*}
& \big( \g - \hg \big)^\t \bigg( \sum_{m\in\mathcal{A}_c} \pim  \dfrac{\smx}{\emt} \bigg) \big( \g - \hg \big) \\
\leqslant &\dfrac{1}{\propc} \big( \g - \hg \big)^\t \bigg( \sum_{m\in\mathcal{A}_c} \pim  \smx \bigg) \big( \g - \hg \big)\\
\leqslant &\dfrac{1}{\propc} \| \g - \hg \|^2_1 \cdot \big \| \sum_{m\in\mathcal{A}_c} \pim  \smx \big \|_\infty,
\end{align*}
where $\| H\|_\infty=\max _{i j}\left|h_{i j}\right|$ denotes the max norm of matrix $H$.
By the fourth moment condition of the covariates (see Condition~\ref{cond:moment-X}), we have $\| \sum_{m\in\mathcal{A}_c} \pim  \Smx \|_\infty \leqslant C$ for some constant $C$. Similar to the proof of Lemma~\ref{lem:concen-covx-gamma}, we can show that
$$
\big \| \sum_{m\in\mathcal{A}_c} \pim  \smx  - \sum_{m\in\mathcal{A}_c} \pim  \Smx \big\|_\infty = \Op ( \sqrt{(\log p)/n} ).
$$
Thus,
$
\big \| \sum_{m\in\mathcal{A}_c} \pim  \smx \big \|_\infty =  \Op(1).
$
By Lemma~\ref{lem:ghat} and Condition~\ref{cond:tuning}, we have
$ \| \g - \hg \|_1  = \op(1).$
Therefore,
\begin{align*}
& \big( \g - \hg \big)^\t \bigg( \sum_{m\in\mathcal{A}_c} \pim  \dfrac{\smx}{\emt} \bigg) \big( \g - \hg \big) \stackrel{p}\longrightarrow 0.
\end{align*}


The third term on the right-hand side of \eqref{eqn::sme1} tends to zero in probability by Cauchy-Schwarz inequality. Therefore,
$$
\sum_{m\in\mathcal{A}_c} \pim  \dfrac{s^2_{[m]\tY(1)}}{\emt} - \sum_{m\in\mathcal{A}_c} \pim \dfrac{S^2_{[m]\e^*(1)}}{\emt} \stackrel{p}\longrightarrow 0.
$$
Similarly,
$$
\sum_{m\in\mathcal{A}_c} \pim  \dfrac{s^2_{[m]\tY(0)}}{\emt} - \sum_{m\in\mathcal{A}_c} \pim \dfrac{S^2_{[m]\e^*(0)}}{\emt} \stackrel{p}\longrightarrow 0.
$$

\textbf{Step 2 (Fine blocks)}: We derive the probability limit of the second term in \eqref{eq:var_fine}. By the definition of $\tY_i=Y_i- (\xr_i-\bxm)^\T \hg$ and the decomposition \eqref{eq thm 4 *A}, we have
$$
\begin{aligned}
\frac{1}{\nmz}\sum_{i\in[m], Z_i=z} R_i=&\frac{1}{\nmz}\sum_{i\in[m], Z_i=z} \{Y_i- (\xr_i-\bxm)^\T \hg\}\\
=&\frac{1}{\nmz}\sum_{i\in[m], Z_i=z} \{\bym{z} + (\xr_i-\bxm)^\T (\g-\hg)+\e^*_i(z)\}\\
=&\bym{z} +(\bxmz{z}-\bxm)^\T(\g-\hg)+\bar{\e}^*_{[m]z}.
\end{aligned}
$$
Then, we obtain a key decomposition:
$$
\begin{aligned}
\hat{\tau}_{\tY,[m]}=&\left[\left\{\bar{\e}^*_{[m]1}+\bym{1}\right\} -\left\{\bar{\e}^*_{[m]0}+\bym{0}\right\}\right] +(\bxmz{1}-\bxmz{0})^\T(\g-\hg)\\
=&(\bar{\tY}^*_{[m]1} -\bar{\tY}^*_{[m]0}) +(\bxmz{1}-\bxmz{0})^\T(\g-\hg),
\end{aligned}
$$
where $\tY_i^*(z)=\e^*_i(z)+\bym{z}$.
We denote
$$
\begin{aligned}
\phi_{[m]}=&\frac{1}{\pim} \left(\frac{n_f}{n}\right)^2\frac{n}{n_f+\sum_{m\in\mathcal{A}_f}\omega_{[m]}}\omega_{[m]}\\
=&\frac{1}{\pim}\left(\frac{n_f}{n}\right)^2\frac{n}{n_f+\sum_{m\in\mathcal{A}_f}\{\nm^2/(n_f-2\nm)\}}\cdot\frac{\nm^2}{(n_f-2\nm)}.
\end{aligned}
$$
By the above decomposition and definition, we have
$$
\begin{aligned}
&\left(\frac{n_f}{n}\right)^2\frac{n}{n_f+\sum_{m\in\mathcal{A}_f}\omega_{[m]}}\sum_{m\in\mathcal{A}_f}\omega_{[m]}(\hat{\tau}_{\tY,[m]}-\hat{\tau}_{\tY,f})^2\\
=&\sum_{m\in\mathcal{A}_f}\phi_{[m]}\pim(\hat{\tau}_{\tY,[m]}-\hat{\tau}_{\tY,f})^2\\
=&\sum_{m\in\mathcal{A}_f}\phi_{[m]}\pim(\hat{\tau}_{\tY^*,[m]}-\hat{\tau}_{\tY^*,f})^2+\sum_{m\in\mathcal{A}_f}\phi_{[m]}\pim\Big\{(\bxmz{1}-\bxmz{0})^\T(\g-\hg)\\
&\quad\quad\quad\quad\quad\quad\quad\quad\quad\quad\quad\quad\quad\quad\quad\quad\quad\quad-\sum_{m\in\mathcal{A}_f}\frac{\nm}{n_f}(\bxmz{1}-\bxmz{0})^\T(\g-\hg)\Big\}^2\\
&+2\sum_{m\in\mathcal{A}_f}\phi_{[m]}\pim(\hat{\tau}_{\tY^*,[m]}-\hat{\tau}_{\tY^*,f})\Big\{(\bxmz{1}-\bxmz{0})^\T(\g-\hg)\\
&\quad\quad\quad\quad\quad\quad\quad\quad\quad\quad\quad\quad\quad\quad\quad\quad\quad\quad-\sum_{m\in\mathcal{A}_f}\frac{\nm}{n_f}(\bxmz{1}-\bxmz{0})^\T(\g-\hg)\Big\}.
\end{aligned}
$$
Since $\bar{\e}^*_{[m]}(z)=0$, we have ${\tau}_{\tY^*,[m]}={\tau}_{[m]}$ and $\tau_{\tY^*,f}={\tau}_{f}$. By replacing $Y_i(z)$ with $\tY_i^*(z)$, we apply Proposition \ref{prop:srr} with $a=\infty$ and obtain
$$
\begin{aligned}
\sum_{m\in\mathcal{A}_f}\phi_{[m]}\pim(\hat{\tau}_{\tY^*,[m]}-\hat{\tau}_{\tY^*,f})^2\stackrel{p}\longrightarrow
\lim\limits_{n \rightarrow \infty}\Bigg[
&\sum_{m\in\mathcal{A}_f} \pim \Big\{ \frac{\Smesz{1}}{\emt} + \frac{\Smesz{0}}{1 - \emt} - \Smestmesc  \Big\}\\
&+\left(\frac{n_f}{n}\right)^2\frac{n}{n_f+\sum_{m\in\mathcal{A}_f}\omega_{[m]}}\sum_{m\in\mathcal{A}_f}\omega_{[m]}({\tau}_{[m]}-{\tau}_{f})^2
\Bigg].
\end{aligned}
$$
Next, we show that the following term converges to zero in probability:
$$
\begin{aligned}
&\sum_{m\in\mathcal{A}_f}\phi_{[m]}\pim\Big\{(\bxmz{1}-\bxmz{0})^\T(\g-\hg)-\sum_{m\in\mathcal{A}_f}\frac{\nm}{n_f}(\bxmz{1}-\bxmz{0})^\T(\g-\hg)\Big\}^2\\
=&\sum_{m\in\mathcal{A}_f}\phi_{[m]}\pim\Big\{(\bxmz{1}-\bxmz{0})^\T(\g-\hg)\Big\}^2\\
&+\sum_{m\in\mathcal{A}_f}\phi_{[m]}\pim\Big\{\sum_{m\in\mathcal{A}_f}\frac{\nm}{n_f}(\bxmz{1}-\bxmz{0})^\T(\g-\hg)\Big\}^2\\
&-2 \sum_{m\in\mathcal{A}_f}\phi_{[m]}\pim\Big\{(\bxmz{1}-\bxmz{0})^\T(\g-\hg)\Big\} \Big\{\sum_{m\in\mathcal{A}_f}\frac{\nm}{n_f}(\bxmz{1}-\bxmz{0})^\T(\g-\hg)\Big\}\\
&\equiv A_1+A_2+A_{12}.
\end{aligned}
$$
By Cauchy--Schwarz inequality, we only need to show that $A_1$ and $A_2$ converge to zero in probability.
Note that $(\bxmz{1}-\bxmz{0})=(\bxmz{1}-\bxm)/(1-\emt)$. For $m\in\mathcal{A}_f$, by Condition \ref{cond:propensity},  $\nm$ is bounded, which leads to that $\phi_{[m]}$ has the same order as a constant.
By Lemma~\ref{lem:ghat}, Conditions \ref{cond:propensity}, \ref{cond:moment-X}, and \ref{cond:tuning}, we have
$$
\begin{aligned}
A_1=&(\g-\hg)^\T\left\{
\sum_{m\in\mathcal{A}_f}\phi_{[m]}\pim(\bxmz{1}-\bxmz{0})(\bxmz{1}-\bxmz{0})^\T
\right\}(\g-\hg)\\
=&(\g-\hg)^\T\left\{
\sum_{m\in\mathcal{A}_f}\frac{\phi_{[m]}}{(1-\emt)^2}\pim(\bxmz{1}-\bxm)(\bxmz{1}-\bxm)^\T
\right\}(\g-\hg)\\
=&o_p(1).
\end{aligned}
$$
By Lemmas~\ref{lem:concen-meanx} and \ref{lem:ghat}, Conditions \ref{cond:propensity}, \ref{cond:moment-X}, and \ref{cond:tuning}, we have
$$
\begin{aligned}
A_2=&\Big(\sum_{m\in\mathcal{A}_f}\phi_{[m]}\pim\Big)\cdot\Big\{\sum_{m\in\mathcal{A}_f}\frac{\nm}{n_f(1-\emt)}(\bxmz{1}-\bxm)^\T(\g-\hg)\Big\}^2
=&o_p(1).
\end{aligned}
$$

Combining Step 1 and Step 2, \eqref{eq:var_fine} converges in probability to
$$
\slt^2 + \lim\limits_{n \rightarrow \infty}  \left\{\sum_{m\in\mathcal{A}_c} \pim\Smestmesc +
\left(\frac{n_f}{n}\right)^2\frac{n}{n_f+\sum_{m\in\mathcal{A}_f}\omega_{[m]}}\sum_{m\in\mathcal{A}_f}\omega_{[m]}({\tau}_{[m]}-{\tau}_{f})^2
\right\}.
$$

\textbf{Step 3 ($\hslt^2$ is conservative)}: We compare the limits of $\hslt^2$ and $ \slt^2$. By definition and the above proof, it is easy to see that
$$
\begin{aligned}
&\lim_{n \rightarrow \infty} ( \hslt^2 - \slt^2 )\\ = &\lim\limits_{n \rightarrow \infty}  \left\{\sum_{m\in\mathcal{A}_c} \pim\Smestmesc +
\left(\frac{n_f}{n}\right)^2\frac{n}{n_f+\sum_{m\in\mathcal{A}_f}\omega_{[m]}}\sum_{m\in\mathcal{A}_f}\omega_{[m]}({\tau}_{[m]}-{\tau}_{f})^2
\right\}\\
\geqslant &0.
\end{aligned}
$$

\textbf{Step 4 (Improved efficiency)}: We compare the limits of $\hslt^2$ and $\hsu^2$. By Proposition~\ref{prop:srr} with $a = \infty$,
$$
\begin{aligned}
\hat \sigma^2_{\unadj} \stackrel{p}\longrightarrow \sigma^2_{\unadj} +
\lim\limits_{n \rightarrow \infty}  \Bigg\{&\sum_{m\in\mathcal{A}_c} \pim S^2_{[m]\{ Y(1) - Y(0) \}}+
\\&\left(\frac{n_f}{n}\right)^2\frac{n}{n_f+\sum_{m\in\mathcal{A}_f}\omega_{[m]}}\sum_{m\in\mathcal{A}_f}\omega_{[m]}({\tau}_{[m]}-{\tau}_{f})^2
\Bigg\}.
\end{aligned}
$$
Sine $ S^2_{[m]\{ Y(1) - Y(0) \}} = \Smestmesc$, then
$$
\lim_{n \rightarrow \infty} (\hslt^2 - \hsu^2) = \lim_{n \rightarrow \infty} \Big\{ \summ \pim \dfrac{\Smesz{1}-\Smyz{1}}{\emt} +\summ \pim \dfrac{\Smesz{0}-\Smyz{0}}{1 - \emt} \Big\}.
$$
We have shown in \eqref{eq thm 4 5.1} and \eqref{eq thm 4 5.2} that
\[ \Smesz{z} = \Smyz{z} + \g^\t \Smx \g - 2 \Smxyz{z}^\t \g, ~~ z= 0,1. \]
Therefore,
\begin{align*}
& \lim_{n \rightarrow \infty} (\hslt^2 - \hsu^2) \\
= &\lim_{n \rightarrow \infty} \Big[  \g^\t \summ \pim \cdot \dfrac{\Smx}{\emt(1-\emt)} \g - 2 \summ \pim \Big\{ \dfrac{\Smxyz{1}}{\emt} + \dfrac{\Smxyz{0}}{1-\emt} \Big\}^\t \g \Big]\\
=& - \lim_{n \rightarrow \infty} \g^\t \Big\{ \summ \pim \dfrac{\Smx}{\emt(1-\emt)} \Big\}\g \leqslant 0,
\end{align*}
where the second equality is because of the definition of $\g$.

\end{proof}

\subsection{Proof of Theorem~\ref{thm:srr2}}

\begin{proof}

First, we prove the result on the asymptotic distribution of $\htlt$ under stratified rerandomization.
In the proof of Proposition~\ref{thm:sr2}, we have shown that
\begin{align*}
\htlt - \tau = & \summ \pim \big( \bar \e^*_{[m]1} - \bar \e^*_{[m]0} \big) + \summ \pim \big( \bxmz{1} - \bxm \big)^\t \big(\g - \hg \big)\\
& - \summ \pim \big( \bxmz{0} - \bxm \big)^\t \big(\g - \hg \big).
\end{align*}
Applying Proposition~\ref{prop:srr} to $\e^*_i(z)$, we have
\[ \Big\{ \sqrt{n} \summ \pim  \big( \e^*_{[m]1} - \e^*_{[m]0} \big) \mid \mathcal{M}_a \Big\} \stackrel{d}\longrightarrow  \slt \Big( \sqrt{1 - \Resw} \varepsilon_0 + \sqrt{\Resw} L_{k,a}  \Big), \]
where
$$ \Resw = \lim_{n \rightarrow \infty} \Big( V_{\e^* \wm}  V_{\wm \wm} ^{-1} V_{ \wm \e^*} \Big)/V_{\e^* \e^*},
$$
$$
V_{ \wm \e^*} = \summ \pim  \Big( \dfrac{\Smwesz{1}}{\emt} + \dfrac{\Smwesz{0}}{1 - \emt} \Big).
$$
Recall the definition of $\g$:
$$
\gs =\bigg\{ \summ \pim \frac{\Smxs}{\emt ( 1 - \emt) } \bigg\}^{-1} \bigg\{ \summ \pim \frac{\Smxsyz{1}}{\emt } + \summ \pim \frac{\Smxsyz{0}}{ 1 - \emt } \bigg\},
$$
$\gsc = \bb{0}$, and the decomposition $Y_i(z) = \bym{z} + (\xr_i - \bxm)^\t \g + \e^*_i(z)$, we have
$$
\summ \pim \Big\{ \dfrac{\Smxsesz{1}}{\emt} + \dfrac{\Smxsesz{0}}{1- \emt}\Big\} = \bb{0}.
$$
Since $\mathcal{K} \subset \mathcal{S}$, we have
\[ \summ \pim \Big\{ \dfrac{\Smwesz{1}}{\emt} + \dfrac{\Smwesz{0}}{1- \emt}\Big\} = \bb{0}. \]
Therefore, $V_{\wm \e^*} = 0$. Then, $\Resw = 0$ and
\[ \sqrt{n} \summ \pim \big( \bar{\e}^*_{[m]1} - \bar{\e}^*_{[m]0}\big) \mid \mathcal{M}_a \stackrel{d}\longrightarrow N(0, \slt^2). \]
It suffices for the asymptotic normality of $\htlt$ to show that,
\begin{equation}\label{eq thm 8 *}
\sqrt{n} \summ \pim \big( \bxmz{z} - \bxm \big)^\t \big(\g-\hg \big) \mid \mathcal{M}_a  \stackrel{p}\longrightarrow 0, ~~ z = 0,1.
\end{equation}
By Proposition~\ref{prop:srr}, $\pr(\mathcal{M}_a) \longrightarrow \pr(\chi^2_{k} < a) >0$.
Thus, it suffices for \eqref{eq thm 8 *} to show that, under stratified randomization,
\[ \summ \pim \big( \bxmz{z} - \bxm \big)^\t \big(\g - \hg \big)  \stackrel{p}\longrightarrow 0, \quad z = 0, 1,\]
which hold as shown in the proof of Proposition~\ref{thm:sr2}.

Next, we compare the asymptotic variances. By Proposition~\ref{prop:srr}, $\suM^2 \leqslant \su^2$. Thus, it suffices to show $\slt^2 \leqslant \suM^2$. By Proposition~\ref{prop:srr}, we have
\begin{align*}
\suM^2 & = \su^2 \Big[ 1 - \big\{ 1 - v_{k,a} \big\} \Ryw  \Big]\\
& \geqslant \su^2 (1 - \Ryw)\\
& = \lim_{n \rightarrow \infty} \Big[ \var\big( \htu \big) -  \var\big\{ \proj \big(\htu \mid \htauw \big) \big\} \Big] \\
& = \lim_{n \rightarrow \infty} \var\big\{ \htu - \proj \big(\htu\mid \htauw \big) \big\},
\end{align*}
where $\proj \big(\htu \mid \htauw \big)$ denote the projection (minimizing the variance) of $\htu$ onto $ \htauw$,  and the last but one equality holds because (see \citet{Li2020}):
\[ \Ryw  = \lim_{n \rightarrow \infty} \dfrac{\var\big\{ \proj \big( \htu \mid \htauw \big) \big\}}{\var\big( \htu \big)} = \lim_{n \rightarrow \infty} \dfrac{\var\big\{ \proj \big( \htu \mid \htauw \big) \big\}}{ \su^2 }.\]

By definition and the above proof for the asymptotic normality, we have
\begin{align*}
\slt^2 & = \lim_{n \rightarrow \infty} \summ \pim \Big\{ \dfrac{\Smesz{1}}{\emt} + \dfrac{\Smesz{0}}{1- \emt}  - \Smestmesc\Big\}\\
& = \lim_{n \rightarrow \infty} E \Big\{ \htu - \tau - \htaux^\t \g \Big\}^2 \\
& =  \lim_{n \rightarrow \infty} \var \Big\{ \htu -  \proj \big(\htu\mid \htauxs \big)  \Big\}\\
& \leqslant  \lim_{n \rightarrow \infty} \var \big\{ \htu - \proj \big(\htu\mid \htauw \big) \big\} \\
& \leqslant \suM^2,
\end{align*}
where the first inequality is due to $\mathcal{K} \subset \mathcal{S}$.

\end{proof}

\subsection{Proof of Theorem~\ref{thm:srr2var}}

\begin{proof}

By Proposition~\ref{thm:sr2var}, under stratified randomization,
\begin{align*}
  \hslt^2  \stackrel{p}\longrightarrow \slt^2 +  \lim\limits_{n \rightarrow \infty}  \bigg\{ & \sum_{m\in\mathcal{A}_c} \pim\Smestmesc + \\
 &
\left(\frac{n_f}{n}\right)^2\frac{n}{n_f+\sum_{m\in\mathcal{A}_f}\omega_{[m]}}\sum_{m\in\mathcal{A}_f}\omega_{[m]}({\tau}_{[m]}-{\tau}_{f})^2
\bigg\}.
\end{align*}
Since $\pr(\mathcal{M}_a) \rightarrow \pr(\chi^2_{k} \leqslant a) > 0$, then the above statement also holds under stratified rerandomization, i.e., conditional on $\mathcal{M}_a$.

Next, we show that $\hsuM^2 \leqslant \hsu^2$ holds in probability under stratified rerandomization. By Proposition~\ref{prop:srr},
\begin{align*}
    & \hsu^2 \\
    \stackrel{p}\longrightarrow & \sigma^2 _\unadj +  \lim\limits_{n \rightarrow \infty}  \Bigg\{\sum_{m\in\mathcal{A}_c} \pim S^2_{[m]\{ Y(1) - Y(0) \}}+\left(\frac{n_f}{n}\right)^2\frac{n}{n_f+\sum_{m\in\mathcal{A}_f}\omega_{[m]}}\sum_{m\in\mathcal{A}_f}\omega_{[m]}({\tau}_{[m]}-{\tau}_{f})^2
\Bigg\},
\end{align*}
\begin{eqnarray}
&& \hsuM^2 \nonumber \\
& \stackrel{p}\longrightarrow & \suM^2 +  \nonumber \\
& & \lim\limits_{n \rightarrow \infty}  \Bigg\{\sum_{m\in\mathcal{A}_c} \pim S^2_{[m]\{ Y(1) - Y(0) \}}+\left(\frac{n_f}{n}\right)^2\frac{n}{n_f+\sum_{m\in\mathcal{A}_f}\omega_{[m]}}\sum_{m\in\mathcal{A}_f}\omega_{[m]}({\tau}_{[m]}-{\tau}_{f})^2
\Bigg\} \nonumber \\
& = & \sigma^2 _\unadj \big[ 1 - \{ 1 - v_{k,a} \} \Ryw  \big] + \nonumber \\
&&   \lim\limits_{n \rightarrow \infty}  \Bigg\{\sum_{m\in\mathcal{A}_c} \pim S^2_{[m]\{ Y(1) - Y(0) \}}+\left(\frac{n_f}{n}\right)^2\frac{n}{n_f+\sum_{m\in\mathcal{A}_f}\omega_{[m]}}\sum_{m\in\mathcal{A}_f}\omega_{[m]}({\tau}_{[m]}-{\tau}_{f})^2
\Bigg\}  \nonumber \\
& = & \sigma^2 _\unadj  - \big\{ 1- v_{k,a} \big\} \sigma^2_{\unadj} \Ryw +  \nonumber \\
&&  \lim\limits_{n \rightarrow \infty}  \Bigg\{\sum_{m\in\mathcal{A}_c} \pim S^2_{[m]\{ Y(1) - Y(0) \}}+\left(\frac{n_f}{n}\right)^2\frac{n}{n_f+\sum_{m\in\mathcal{A}_f}\omega_{[m]}}\sum_{m\in\mathcal{A}_f}\omega_{[m]}({\tau}_{[m]}-{\tau}_{f})^2
\Bigg\}. \nonumber
\end{eqnarray}
Therefore, with probability tending to one, $\hsuM^2 \leqslant \hsu^2$.


Finally, we show that $\hslt^2 \leqslant \hsuM^2$ holds in probability under stratified rerandomization. Under stratified rerandomization, the probability limit of $\hsuM^2$ satisfies:
\begin{align*}
&\lim_{n \rightarrow \infty} \hsuM^2 \\
= &\sigma^2_{\unadjM} +
\lim\limits_{n \rightarrow \infty}  \Bigg\{\sum_{m\in\mathcal{A}_c} \pim S^2_{[m]\{ Y(1) - Y(0) \}}+\left(\frac{n_f}{n}\right)^2\frac{n}{n_f+\sum_{m\in\mathcal{A}_f}\omega_{[m]}}\sum_{m\in\mathcal{A}_f}\omega_{[m]}({\tau}_{[m]}-{\tau}_{f})^2
\Bigg\}. \\
\geqslant &\slt^2 +  \lim\limits_{n \rightarrow \infty}  \left\{\sum_{m\in\mathcal{A}_c} \pim\Smytmyc +
\left(\frac{n_f}{n}\right)^2\frac{n}{n_f+\sum_{m\in\mathcal{A}_f}\omega_{[m]}}\sum_{m\in\mathcal{A}_f}\omega_{[m]}({\tau}_{[m]}-{\tau}_{f})^2
\right\}.
\end{align*}
In the proof of Proposition~\ref{thm:sr2}, we have shown that $\Smestmesc = \Smytmyc$.
Therefore, $\hslt^2 \leqslant \hsuM^2$ holds in probability under stratified rerandomization.

\end{proof}

\subsection{Proof of Lemma~\ref{lem:concen-meanx}}
\label{sec:proof-concen-meanx}

\begin{proof}
For any $t > 0$, we have
\begin{align*}
& \pr\Big(\Big\|\summ \pim \big(\bxmz{1} - \bxm\big)\Big\|_\infty \geqslant t \Big) \leqslant p \cdot \max_{1\leqslant j \leqslant p} \pr\Big( \Big| \summ \pim \big(\bxmztj - \bxmj \big) \Big| \geqslant t \Big).
\end{align*}
Applying Theorem~\ref{prop:conmean} to the $j$th covariate $\xc_j$ (and $-\xc_j$), we have
\begin{align*}
& \pr\Big(\Big\|\summ \pim \big(\bxmz{1} - \bxm\big)\Big\|_\infty \geqslant t \Big)  \leqslant 2 p \cdot \max_{1 \leqslant j \leqslant p} \exp\Big\{ -\dfrac{nt^2}{4 \sigma^2_{\xe,j}}\Big\},
\end{align*}
where $\sigma^2_{\xe,j} = (1/n) \summ \sumim ( \xe_{ij} - \bxmj)^2/ \emt^2 $. By Conditions~\ref{cond:propensity}--\ref{cond:moment-X} and Cauchy-Schwarz inequality, we have,
$$
\sigma^2_{\xe,j}
\leqslant \dfrac{1}{\propc^2} \cdot \dfrac{1}{n} \sum^M_{m = 1} \sum_{i \in [m]} \Big(\xe_{ij} - \bxmj \Big)^2
\leqslant \dfrac{1}{\propc^2} \cdot \bigg\{ \dfrac{1}{n} \sum^M_{m = 1} \sum_{i \in [m]} \Big(\xe_{ij} - \bxmj\Big)^4 \bigg\}^{1/2}
\leqslant \dfrac{ L^{1/2} }{\propc^2}.
$$
Therefore,
\[ \pr\Big(\Big\|\summ \pim \big(\bxmz{1} - \bxm\big)\Big\|_\infty \geqslant t \Big) \leqslant 2 \exp\Big(\log p-\dfrac{\propc^2 n t^2}{4  L^{1/2}}\Big). \]
Taking $t =\sqrt{8 L^{1/2} / \propc^2}\cdot \sqrt{ (\log p) /n}$ gives the result.
\end{proof}

\subsection{Proof of Lemma~\ref{lem:ghat}}
\label{sec:proof-ghat}

Before proving Lemma~\ref{lem:ghat}, we first prove two useful lemmas. Define
$$\ew_i(z) =\yw_i(z) - \bar{Y}^\omega(z)-\{\xwr_i(z)-\bar{\xr}^\omega(z)\}^\T \gz{z}, \quad z=0,1,$$
\begin{align*}
& \Sxw = \summ\pim\frac{\Smx}{\emt(1-\emt)}, \quad \Sxewtw = \summ (\pim - n^{-1}) \emt \Smxwewz{1},
\end{align*}
$$
\hSxw=\frac{1}{n}\summ\sum_{i\in[m]}Z_i(\xwr_i-\bxwz{1})(\xwr_i-\bxwz{1})^\T, \quad
\hSxewtw=\frac{1}{n}\summ\sum_{i\in[m]}Z_i(\xwr_i-\bxwz{1})\{\ew_i(1)-\bewz{1}\}.
$$

\begin{lemma}\label{lem:concen-covx-gamma}
Suppose that Conditions \ref{cond:propensity} and \ref{cond:moment-X} hold, then there exists a constant $C$, such that for the event
$$
\setXwXw = \left\{ \left\|\hSxw - \Sxw\right\|_\infty \leqslant C\sqrt{(\log p)/n} \right\},
$$
we have $\pr(\setXwXw)\rightarrow 1$.
\end{lemma}

\begin{proof}
Since
$$
\hSxw=\frac{1}{n}\summ\sum_{i\in[m]}Z_i(\xwr_i-\bxwz{1})(\xwr_i-\bxwz{1})^\T=\frac{1}{n}\summ\sum_{i\in[m]}Z_i(\xwr_i)(\xwr_i)^\T-\frac{\nt}{n}(\bxwz{1})(\bxwz{1})^\T,
$$
we have
\begin{equation}\label{eq:xxw}
\begin{aligned}
\|\hSxw-\Sxw\|_\infty\leqslant\|\frac{1}{n}\summ\sum_{i\in[m]}Z_i(\xwr_i)(\xwr_i)^\T-\Sxw\|_\infty+
\|\frac{\nt}{n}(\bxwz{1})(\bxwz{1})^\T\|_\infty.
\end{aligned}
\end{equation}
Thus, we only need to bound the above two terms.

To bound the first term in \eqref{eq:xxw}, we have
$$
\begin{aligned}
&E\left\{\frac{1}{n}\summ\sum_{i\in[m]}Z_i(\xwr_i)(\xwr_i)^\T\right\}\\
=&E \left\{\frac{1}{n}\summ\sum_{i\in[m]}Z_i \wnz{1} \wxz{1}  \big\{ \xr_{i} - \bar{\xr}_{[m]} \big\} \big\{ \xr_{i} - \bar{\xr}_{[m]} \big\}^\T\right\}\\
=& \frac{1}{n}\summ\sum_{i\in[m]} \frac{\nmt}{\nm} \frac{\nm^2}{ \emt \nmt (\nm-1) }  \frac{1}{1 - \emt}  \big\{ \xr_{i} - \bar{\xr}_{[m]} \big\} \big\{ \xr_{i} - \bar{\xr}_{[m]} \big\}^\T\\
=&\summ\pim\frac{\Smx}{\emt(1-\emt)}\\
=&\Sxw .
\end{aligned}
$$
For any $t>0$, applying Theorem~\ref{prop:conmean} to $a_i=(\nmt/\nm)\xwe_{ij}\xwe_{il}$, we have
$$
\begin{aligned}
&\pr\bigg( \big\| \frac{1}{n}\summ\sum_{i\in[m]}Z_i(\xwr_i)(\xwr_i)^\T - \Sxw \big\|_\infty \geqslant t \bigg) \\
=&\pr\bigg( \big\|  \summ\pim\frac{1}{\nmt}\sum_{i\in[m]}Z_i\left\{\frac{\nmt}{\nm}(\xwr_i)(\xwr_i)^\T\right\} - \Sxw \big\|_\infty \geqslant t \bigg) \\
\leqslant& p^2 \max_{1 \leqslant j,l \leqslant p} \pr\bigg( \big| \summ\pim\frac{1}{\nmt}\sum_{i\in[m]}Z_i\left(\frac{\nmt}{\nm}\xwe_{ij}\xwe_{il}\right) - \summ\pim\frac{S_{[m]X_jX_l}}{\emt(1-\emt)} \big|\geqslant t \bigg)\\
\leqslant& 2p^2 \exp\Big\{-\frac{c^6 nt^2}{16L}\Big\},
\end{aligned}
$$
where the last inequality is due to Condition~\ref{cond:moment-X}.

To bound the second term in \eqref{eq:xxw}, we have
$$
\begin{aligned}
E\left\{\bxwz{1}\right\}
=E\left\{\summ\pim\frac{1}{\nmt}\sum_{i\in[m]}Z_i\left(\frac{n}{n_1}\frac{\nmt}{\nm}\xwr_i\right)\right\}=\summ\pim\frac{1}{\nmt}\sum_{i\in[m]}\left(\frac{n}{n_1}\frac{\nmt^2}{\nm^2}\xwr_i\right)=\bb{0},
\end{aligned}
$$
where the last equality is due to $\sum_{i\in[m]}\xwr_i=\bb{0}$. Applying Theorem~\ref{prop:conmean} to $a_i=(n/n_1)(\nmt/\nm)\xwe_{ij}$, we have
$\|\bxwz{1}\|_\infty=O_p(\sqrt{ ( \log{p} ) /n})$. Thus,  $\|(\nt/n)(\bxwz{1})(\bxwz{1})^\T\|_\infty=o_p(\sqrt{ (\log p) /n})$.

The conclusion follows from \eqref{eq:xxw} and the above bounds for the two terms in \eqref{eq:xxw}.
\end{proof}

\begin{lemma}\label{lem:concen-covxe-gamma}
Suppose that Conditions \ref{cond:propensity}, \ref{cond:moment-X}, \ref{cond:re}, and \ref{cond:tuning} hold, then for the event
$$
\setXwe = \left\{ \|\hSxewtw\|_\infty  \leqslant \eta \lambda_1 \right\},
$$
we have $\pr(\setXwe)\rightarrow 1$.
\end{lemma}

\begin{proof}
Since
$$
\begin{aligned}
\hSxewtw=&\frac{1}{n}\summ\sum_{i\in[m]}Z_i(\xwr_i-\bxwz{1})\{\ew_i(1)-\bewz{1}\} \\
=&\frac{1}{n}\summ\sum_{i\in[m]}Z_i\xwr_i\ew_i(1)-\frac{n_1}{n}\bxwz{1}\bewz{1},
\end{aligned}
$$
we have
\begin{equation}\label{eq:xew}
\|\hSxewtw\|_\infty\leqslant
\|\frac{1}{n}\summ\sum_{i\in[m]}Z_i\xwr_i\ew_i(1)\|_\infty +\|\frac{n_1}{n}\bxwz{1}\bewz{1}\|_\infty.
\end{equation}
We bound the above two terms separately.

To bound the first term in \eqref{eq:xew}, we have
\begin{eqnarray*}
E\left\{\frac{1}{n}\summ\sum_{i\in[m]}Z_i\xwr_i\ew_i(1)\right\} &=& \frac{1}{n} \summ \sum_{i\in[m]} \emt \xwr_i(1)\ew_i(1)
= \summ (\pim - n^{-1}) \emt \Smxwewz{1} \\
&=& \Sxewtw .
\end{eqnarray*}
For any $t >0$, by triangle inequality, we have
\begin{align*}
&\pr\big(\|\frac{1}{n}\summ\sum_{i\in[m]}Z_i\xwr_i\ew_i(1)\|_\infty \geqslant t \big) \\
=& \pr\big( \| \frac{1}{n}\summ\sum_{i\in[m]}Z_i \xwr_i\ew_i(1) - \Sxewtw + \Sxewtw \|_\infty \geqslant t \big)\\
\leqslant& \pr\big( \big\| \frac{1}{n}\summ\sum_{i\in[m]}Z_i\xwr_i\ew_i(1) - \Sxewtw \big\|_\infty \geqslant t - \big\| \Sxewtw \big\|_\infty\big)\\
\leqslant &\pr\big( \big\| \frac{1}{n}\summ\sum_{i\in[m]}Z_i\xwr_i\ew_i(1) - \Sxewtw \big\|_\infty \geqslant t - \delta
_n \big)\\
\leqslant & p \cdot \max_{1 \leqslant j \leqslant p} \pr\big( \big |\frac{1}{n}\summ\sum_{i\in[m]}Z_i\xwe_{ij}\ew_i(1) - \Sxewtw \big | \geqslant t - \delta_n \big) .
\end{align*}
Applying Theorem~\ref{prop:conmean} to $a_i = (\nmt/\nm)\xwe_{ij}\ew_i(1)$, we have
\begin{equation*}
\begin{aligned}
\pr\left( \big |\frac{1}{n}\summ\sum_{i\in[m]}Z_i\xwe_{ij}\ew_i(1) - \Sigma_{X_j\ew(1)}^\omega \big | \geqslant t - \delta_n \right)
\leqslant
2\exp\left\{-\frac{c^3 n(t-\delta_n)^2}{8L}\right\},
\end{aligned}
\end{equation*}
where the inequality is due to Condition \ref{cond:moment-X}. Let $t= 4 \sqrt{L (\log{p}) / (c^3n)}+\delta_n$, we have
\begin{equation}\label{eq:Sxew1}
\begin{aligned}
\pr\big(\|\frac{1}{n}\summ\sum_{i\in[m]}Z_i\xwr_i\ew_i(1)\|_\infty \geqslant 4 \sqrt{L (\log{p}) / (c^3n)}+\delta_n \big)\leqslant\frac{2}{p}\rightarrow0.
\end{aligned}
\end{equation}

To bound the second term in \eqref{eq:xew}, we have
$$
\begin{aligned}
E\left\{\bxwz{1}\right\}
=E\left\{\summ\pim\frac{1}{\nmt}\sum_{i\in[m]}Z_i\left(\frac{n}{n_1}\frac{\nmt}{\nm}\xwr_i\right)\right\}=\summ\pim\frac{1}{\nmt}\sum_{i\in[m]}\left(\frac{n}{n_1}\frac{\nmt^2}{\nm^2}\xwr_i\right)=\bb{0},
\end{aligned}
$$
where the last equality is due to $\sum_{i\in[m]}\xwr_i=\bb{0}$. Applying Theorem~\ref{prop:conmean} to $a_i=(n/n_1)(\nmt/\nm)\xwe_{ij}$, we have
$$
\pr ( \|\bxwz{1}\|_\infty \leq C_1 \sqrt{ ( \log{p} ) /n} ) \rightarrow 1
$$
for some constant $C_1 >0$. Moreover, by Condition~\ref{cond:moment-X},
\begin{eqnarray*}
| \bewz{1} | \leq \frac{1}{\nt} \sumi | \ew_i(1) | \leq  \frac{n}{\nt} L^{1/4}.
\end{eqnarray*}
Therefore,
\begin{equation}\label{eq:Sxew2}
 \pr \Big( \|\frac{n_1}{n}\bxwz{1}\bewz{1}\|_\infty \leq  L^{1/4} C_1 \sqrt{ ( \log{p} ) /n} \Big) \rightarrow 1.
\end{equation}

Combining \eqref{eq:xew}, \eqref{eq:Sxew1}, and \eqref{eq:Sxew2}, there exists a constant $C>0$, such that for $\eta \lambda_1\geqslant C\sqrt{(\log p)/n}+\delta_n$, we have $\pr\Big( \|\hSxewtw\|_\infty \leqslant \eta \lambda_1   \Big) \rightarrow 1$.
\end{proof}

Now, we can prove Lemma~\ref{lem:ghat}.

\begin{proof}[Proof of Lemma~\ref{lem:ghat}]


We will only prove the result for $z=1$, as the proof for $z=0$ is similar. Define $\h = \hgz{1} - \gz{1}$. By KKT condition, we have
\begin{equation}\label{KKTgamma}
\frac{1}{\nt}\sum_{i:Z_i = 1}   ( \xwr_{i} - \bxwz{1} )[  \yw_i - \bywz{1}-( \xwr_{i} - \bxwz{1} )^\T \hgz{1}  ] =\lambda_1  \subg,
\end{equation}
where $\subg$ is the sub-gradient of $\|\bgamma\|_1$ taking value at $\bgamma = \hgz{1}$, i.e.,
\begin{align*}
\subg \in \partial \|\bgamma\|_1 \bigg|_{\bgamma = \hgz{1}} ~~\text{with}~~ \left\{
\begin{array}{*{20}{l}}
\subg_j = \sign{(\hgz{1})_j}~\text{for}~j~\text{such that}~(\hgz{1})_j \neq 0\\
\subg_j \in [-1,1], ~~\text{otherwise.}
\end{array}
\right.
\end{align*}
Define
$$
\ew_i(1) = \yw_i(1) - \bar{Y}^\omega(1)-(\xwr_i-\bar{\xr}^\omega)^\T \gz{1}.
$$
Then,
$$
\yw_i(1)-\bywz{1}=\{\xwr_i-\bxwz{1}\}^\T \gz{1}+\{\ew_i(1) - \bewz{1}\}.
$$
Plugging in to \eqref{KKTgamma}, we have
\begin{equation}\label{eq:KKT2}
    \hSxw \big\{ \gz{1} - \hgz{1} \big\} + \hSxewtw = \lambda_1 \subg.
\end{equation}
Multiplying both sides of \eqref{eq:KKT2}  by $ - \h^\T = \{ \gz{1} - \hgz{1}  \}^\T$, we have
\[ \h^\t \hSxw \h - \h^\t \hSxewtw  =  \lambda_1 (-\h)^\t \subg \leqslant \lambda_1 \big( \|\gz{1}\|_1 - \|\hgz{1}\|_1 \big), \]
where the last inequality holds because
\[ \{\gz{1} \}^\T \subg \leqslant \|\gz{1}\|_1\|\subg\|_\infty \leqslant \|\gz{1}\|_1 ~~\text{and}~~ \hgz{1}^\t \subg = \|\hgz{1}\|_1. \]
Rearranging and by H$\ddot{o}$lder inequality, we have
\begin{equation}\label{eqn::rearrang-g}
   \h^\t \hSxw \h \leqslant \lambda_1 \big\{  \|\gz{1}\|_1 - \|\hgz{1}\|_1 \big\} + \|\h\|_1 \cdot \|\hSxewtw\|_\infty.
\end{equation}
By the definition of $\h$ and several applications of the triangle inequality, we have
$$
\begin{aligned}
\|\gz{1}\|_1 - \|\hgz{1}\|_1 =&
\|\gzs{1}\|_1 - \|\hgzs{1}\|_1 + \|\gzsc{1}\|_1 - \|\hgzsc{1}\|_1\\
\leqslant& \|\h_{\mathcal{S}}\|_1 + \|\gzsc{1}\|_1 - \{\|\h_{\mathcal{S}^c}\|_1 -\|\gzsc{1}\|_1\} \\
=& \|\h_{\mathcal{S}}\|_1 - \|\h_{\mathcal{S}^c}\|_1 + 2 \|\gzsc{1}\|_1.
\end{aligned}
$$
Therefore, conditional on $\setXwe$ with $\pr(\setXwe) \rightarrow 1$ (Lemma~\ref{lem:concen-covxe-gamma}), we have
\begin{align*}
0 \leqslant \h^\t \hSxw \h & \leqslant \lambda_1\Big( \|\h_{\mathcal{S}}\|_1 - \|\h_{\mathcal{S}^c}\|_1 + 2 \|\gzsc{1}\|_1 + \eta \|\h\|_1 \Big)\\
& \leqslant  \lambda_1\Big\{ (\eta - 1)\|\h_{\mathcal{S}^c}\|_1 + (1 + \eta) \|\h_{\mathcal{S}}\|_1 + 2 \|\gzsc{1}\|_1 \Big\}.
\end{align*}
Then,
\begin{equation} \label{eq *3-g}
(1 - \eta) \|\h_{\mathcal{S}^c}\|_1 \leqslant  (1 + \eta) \|\h_{\mathcal{S}}\|_1 + 2 \|\gzsc{1}\|_1 \leqslant (1 + \eta) \|\h_{\mathcal{S}}\|_1 + 2 s  \lambda_1.
\end{equation}
Recall that $\xi>1$ and $0<\eta<(\xi-1)/(\xi+1)<1$, thus $(1-\eta)\xi-(1+\eta)>0$. To proceed, consider the following two cases:

(1) If $\|\h_{\mathcal{S}}\|_1\leqslant 2s\lambda_1/\{(1-\eta)\xi - (1+\eta)\}$, then by \eqref{eq *3-g}:
$$
\begin{aligned}
\|\h\|_1 =& \|\h_{\mathcal{S}}\|_1 + \|\h_{\mathcal{S}^c}\|_1 \\
\leqslant& \|\h_{\mathcal{S}}\|_1+ \dfrac{1+\eta}{1 - \eta}\|\h_{\mathcal{S}}\|_1 + \dfrac{2s\lambda_1}{1 - \eta} \\
=& \dfrac{1}{1 - \eta} \Big\{ 2\|\h_{\mathcal{S}}\|_1 +2s\lambda_1 \Big\}\\
\leqslant & \dfrac{2s\lambda_1}{1 - \eta} \Big\{ \dfrac{2}{(1-\eta)\xi - (1+\eta)} +1 \Big\}.
\end{aligned}
$$
Then,
\[\|\h\|_1 = \|\hgz{1} - \gz{1}\|_1 = \Op ( s \lambda_1 ).\]

(2) If $2s\lambda_1<\{(1-\eta)\xi - (1+\eta)\}\|\h_{\mathcal{S}}\|_1$, then by \eqref{eq *3-g}, we have
$$
\|\h_{\mathcal{S}^c}\|_1 \leqslant \frac{1+\eta}{1-\eta}\|\h_{\mathcal{S}}\|_1 +\frac{(1-\eta)\xi - (1+\eta)}{1-\eta}\|\h_{\mathcal{S}}\|_1=\xi\|\h_{\mathcal{S}}\|_1.
$$
By Condition~\ref{cond:re}, we have
\begin{equation}\label{eq *5-g}
\|\h\|_1 = \|\h_{\mathcal{S}}\|_1 + \|\h_{\mathcal{S}^c}\|_1 \leqslant (1+\xi) \|\h_{\mathcal{S}}\|_1 \leqslant (1 + \xi)  C  s  \| \Sxw \h \|_\infty.
\end{equation}
Taking the $l_\infty$-norm on both sides of the KKT condition \eqref{eq:KKT2}, we have, conditional on the event $\setXwe$,
\begin{equation}\label{eq *4-g}
\|\hSxw \h\|_\infty \leqslant \lambda_1 + \|\hSxewtw\|_\infty \leqslant (1+\eta) \lambda_1.
\end{equation}
By triangle inequality and H$\ddot{o}$lder inequality, and conditional on the event $\setXwe \cap \setXwXw$, we have
\begin{align*}
s\|  \Sxw \h\|_\infty &
\leqslant s\|\hSxw - \Sxw\|_\infty\|\h\|_1+s\|\hSxw \h\|_\infty\\
&\leqslant  s C\sqrt{(\log p)/n} \|\h\|_1 + s\|\hSxw \h\|_\infty\\
& \leqslant  s C\sqrt{(\log p)/n}\|\h\|_1 + s(1+ \eta) \lambda_1 \\
& = o(1) \| \h\|_1 + s(1+ \eta) \lambda_1,
\end{align*}
where the third inequality is because of inequality \eqref{eq *4-g} and the last equality is because of Condition \ref{cond:tuning}. Combining above inequality and \eqref{eq *5-g}, we obtain
$$
\|\h\|_1 = \|\hgz{1} - \gz{1}\|_1 = \Op(s \lambda_1).
$$
\end{proof}





\subsection{Proof of Lemma~\ref{lem:shat-gamma}}
\label{sec:proof-shat-gamma}

\begin{proof}

We will only prove that $\|\hgz{1}\|_0 \leqslant {C} s$ holds in probability for some constant ${C}$; the proof for $\|\hgz{0}\|_0 \leqslant {C}s$ is similar.
First, we bound $\|\hSxw \big\{ \gz{1} - \hgz{1} \big\}\|_2^2$ from below.
Specifically, we consider the $j$th element of $\hSxw \big\{ \gz{1} - \hgz{1} \big\}$; that is, $\bb{e}^\t_j \hSxw \big\{ \gz{1} - \hgz{1} \big\}$, where $\bb{e}_j$ is a $p$-dimensional vector with one on its $j$th entry and zero on other entries.
By KKT condition, we have shown that (see \eqref{eq:KKT2})
\[ \hSxw \big\{ \gz{1} - \hgz{1} \big\} + \hSxewtw = \lambda_1 \subg.\]
For any $j \in \{1,2,\dots,p\}$ such that $\big( \hgz{1} \big)_j \neq 0$, we have
\[ \big|\bb{e}^\t_j \hSxw \big\{ \gz{1} - \hgz{1} \big\} + \bb{e}^\t_j \hSxewtw \big|= \lambda_1.\]
Conditional on $\setXwe = \Big\{ \|\hSxewtw\|_\infty \leqslant \eta  \lambda_1 \Big\}$ with $\pr(\setXwe) \rightarrow 1$ (Lemma \ref{lem:concen-covxe-gamma}), and by triangle inequality, we have,
\begin{align*}
\big|  \bb{e}^\t_j \hSxw \big\{ \gz{1} - \hgz{1} \big\} \big| \geqslant \lambda_1 - \big| \bb{e}^\t_j \hSxewtw \big| \geqslant \lambda_1 - \|\hSxewtw \|_\infty \geqslant (1 - \eta) \lambda_1.
\end{align*}
Then, summing up the square of the elements of $\hSxw \big\{ \gz{1} - \hgz{1} \big\}$, we have
\begin{align}
\|\hSxw \big\{ \gz{1} - \hgz{1} \big\}\|_2^2=&
\sum_{j=1}^p\big|  \bb{e}^\t_j \hSxw \big\{ \gz{1} - \hgz{1} \big\} \big|^2\nonumber\\
\geqslant& \sum_{j: (\hgz{1})_j \neq 0}\big|  \bb{e}^\t_j \hSxw \big\{ \gz{1} - \hgz{1} \big\} \big|^2\nonumber\\
\geqslant &(1 - \eta)^2 \lambda^2_1 \|\hgz{1}\|_0.\label{eq:bbelow-g}
\end{align}

Second, we bound $\|\hSxw \big\{ \gz{1} - \hgz{1} \big\}\|_2^2$ from above:
\begin{align}
 \|\hSxw \big\{ \gz{1} - \hgz{1} \big\}\|_2^2
\leqslant &\Lambda_{\max} \big( \hSxw \big) \cdot \big\| (\hSxw)^{1/2} \big\{ \gz{1} - \hgz{1} \big\} \big\|^2_2\nonumber\\
= &\Lambda_{\max} \big( \hSxw \big) \cdot  \big\{ \gz{1} - \hgz{1} \big\}^\t \hSxw \big\{ \gz{1} - \hgz{1} \big\}\nonumber\\
= &\Lambda_{\max}  \big( \hSxw \big) \cdot \h^\t \hSxw \h.\label{eq:babove0-g}
\end{align}
We deal with $\h^\t \hSxw \h$ and $\Lambda_{\max}( \hSxw )$ separately.
To bound $\h^\t \hSxw \h$, we have shown that (see \eqref{eqn::rearrang-g}),
$$ \h^\t \hSxw \h \leqslant \lambda_1 \big\{  \|\gz{1}\|_1 - \|\hgz{1}\|_1 \big\} + \|\h\|_1  \|\hSxewtw\|_\infty.
$$
Conditional on $\setXwe = \Big\{ \|\hSxewtw\|_\infty \leqslant \eta  \lambda_1 \Big\}$ with $\pr(\setXwe) \rightarrow 1$ (Lemma \ref{lem:concen-covxe-gamma}), and by triangle inequality, we have
\begin{equation*}
\h^\t \hSxw \h \leqslant \lambda_1(1 + \eta)  \|\h\|_1.
\end{equation*}
According to the proof of Lemma~\ref{lem:ghat}, with probability tending to one, there exists a constant $C$, such that
$$ \|\h\|_1 = \big\| \gz{1} - \hgz{1} \big\|_1  \leqslant C \lambda_1 s.
$$
Therefore, we have
\begin{equation}\label{eq:hSh-g}
\h^\t \hSxw \h \leqslant C  (1+\eta) \lambda^2_1 s.
\end{equation}
To bound $\Lambda_{\max} \big( \hSxw\big)$, we expand its expression as follows:
\begin{equation}\label{LS0-g}
\Lambda_{\max} \big( \hSxw\big)  = \max_{\bb{u}: \|\bb{u}\|_2 = 1 } \bb{u}^\t \hSxw \bb{u}.
\end{equation}
By expanding $\hSxw$, we have
\begin{align*}
\bb{u}^\t \hSxw \bb{u}
\leqslant& \frac{1}{n}\summ\sum_{i\in[m]}Z_i\bb{u}^\T(\xwr_i)(\xwr_i)^\T \bb{u}\\
\leqslant&\frac{1}{n}\summ\bb{u}^\T(\xwr_i)(\xwr_i)^\T \bb{u}\\
=&\bb{u}^\T\left\{\summ\pim\frac{\Smx}{\emt^2(1-\emt)} \right\}\bb{u}\\
\leqslant&\frac{1}{\propc^3}\bb{u}^\T\left\{\summ\pim\Smx\right\}\bb{u},
\end{align*}
where the last inequality is due to Condition \ref{cond:propensity}.
Plugging above inequality into \eqref{LS0-g}, and by Condition \ref{cond:largest-RE}, we have
\begin{equation}\label{eq:LS-g}
\Lambda_{\max} \big( \hSxw\big)
\leqslant \frac{1}{\propc^3}\max_{\bb{u}: \|\bb{u}\|_2 = 1}  \bb{u}^\t\summ \pim \Smx \bb{u}
\leqslant \frac{1}{\propc^3} \Lambda_{\max} \big( \Sx\big)
\leqslant \frac{C}{\propc^3}.
\end{equation}
Combining \eqref{eq:babove0-g}, \eqref{eq:hSh-g}, and \eqref{eq:LS-g}, the following inequality holds in probability:
\begin{equation}\label{eq:babove-g}
 \|\hSxw \big\{ \gz{1} - \hgz{1} \big\}\|_2^2 \leqslant
  C^2  (1+\eta)  \lambda^2_1 s/\propc^3.
\end{equation}
Finally, combining \eqref{eq:bbelow-g} and \eqref{eq:babove-g}, we have, with probability tending to one,
$$
\|\hgz{1}\|_0 \leqslant  \dfrac{C^2(1+\eta)}{\propc^3(1 - \eta)^2}  s=:\widetilde{C}s.
$$
\end{proof}

\section{Additional simulation results}
\label{sec:addsim}

\subsection{Comparison of methods}

\subsubsection{Lasso, ridge, and elastic net}

In this section, we examine the performance of the proposed methods in scenarios with highly correlated covariates and a dense coefficient vector. We set $s=200$, $p=400$, and $n=300$.
Using a setting similar to that in \citet{Yue2018}, we generate $\xr_i$ as follows:
$$
\begin{array}{lll}
\xr_{i t}=Z_1+\varepsilon_{i t}^x, & Z_1 \sim N(0,1), & t=1, 2, 3 \\
\xr_{i t}=Z_2+\varepsilon_{i t}^x, & Z_2 \sim N(0,1), & t=4, 5, 6, \\
...&&\\
\xr_{i t}=Z_{66}+\varepsilon_{i t}^x, & Z_{66} \sim N(0,1), & t=196, 197, 198,\\
\xr_{i t} \stackrel{i.i.d.}\sim N(0,1) && t=199, \ldots, 400,
\end{array}
$$
where $\varepsilon_{i t}^x \sim N(0,0.1)$.
We generate potential outcomes by $Y_i(1)=Y_i(0)=B_i/M+\xr_i^\T \boldsymbol{\beta}-2\xr_{bc,i}^\T \boldsymbol{\beta} + \e_i$, where the first $s$ elements of $\boldsymbol{\beta}$ are generated from $U[-0.1,0.1]$ and the remaining elements are zero. The setups for blocking are similar to those in Section \ref{sec:sim}.

We also consider ridge \citep{hoerl1970ridge} and elastic net \citep{zou2005regularization} methods for estimating $\gz{z}$. Therefore, there are four estimators, denoted as unadj, lasso, ridge, and enet. We employ 10-fold cross-validation to select the tuning parameters.

Tables \ref{tab:sim_corr_1} and \ref{tab:sim_corr_2} display the simulation results.
First, in most cases, all methods demonstrate negligible biases and desirable coverage probabilities, except for three instances where the elastic net produces approximately 87\% coverage. Second, when compared to the unadjusted estimator, the Lasso exhibits reductions ranging from 3\%--28\% in standard deviation (SD) and 4\%--25\% in interval length. Additionally, compared to the Lasso, the elastic net can achieve further reductions of 2\%--8\% in SD and 8\%--27\% in interval length while selecting more covariates for the working model.
Overall, in this simulation setting, while the proposed Lasso-adjusted ATE estimator may not exhibit a significant efficiency improvement, it continues to deliver reasonable performance.
Compared to the Lasso method, the ridge method does not exhibit any advantages, while the elastic net shows the potential for further enhancing precision.

\begin{table}[H]

\caption{\label{tab:sim_corr_1}Simulation results for highly correlated covariates.}
\centering
\begin{threeparttable}
\begin{tabular}[t]{>{\centering\arraybackslash}p{1.4cm}ccrrrrrr}
\toprule
\multicolumn{1}{c}{Scenario} & \multicolumn{1}{c}{Rerand.} & \multicolumn{1}{c}{Est.} & \multicolumn{1}{c}{Bias} & \multicolumn{1}{c}{SD} & \multicolumn{1}{c}{RMSE} & \multicolumn{1}{c}{CP} & \multicolumn{1}{c}{Length} & \multicolumn{1}{c}{$\hat{s}$}\\
\midrule
 & no & unadj & 0.2 (0.3) & 10.1 (0.2) & 10.1 (0.2) & 95.2 (0.7) & 40.0 (0.0) & 0\\

 & yes & unadj & 0.0 (0.3) & 9.1 (0.2) & 9.1 (0.2) & 95.6 (0.6) & 36.9 (0.0) & 0\\

 & no & lasso & 0.2 (0.3) & 7.7 (0.2) & 7.7 (0.2) & 94.6 (0.7) & 30.2 (0.0) & 82\\

 & yes & lasso & -0.1 (0.2) & 7.3 (0.2) & 7.3 (0.2) & 96.3 (0.6) & 30.1 (0.0) & 82\\

 & no & ridge & 0.2 (0.3) & 8.3 (0.2) & 8.3 (0.2) & 92.4 (0.8) & 28.9 (0.0) & 400\\

 & yes & ridge & -0.1 (0.2) & 7.6 (0.2) & 7.6 (0.2) & 94.5 (0.8) & 28.8 (0.0) & 400\\

 & no & enet & 0.2 (0.2) & 7.2 (0.2) & 7.2 (0.2) & 87.8 (1.0) & 22.1 (0.1) & 322\\

\multirow{-8}{1.4cm}{\centering\arraybackslash No block} & yes & enet & -0.2 (0.2) & 6.7 (0.1) & 6.7 (0.1) & 90.2 (0.9) & 22.0 (0.1) & 324\\
\midrule

 & no & unadj & -0.4 (0.3) & 10.7 (0.2) & 10.7 (0.2) & 93.8 (0.8) & 39.8 (0.0) & 0\\

 & yes & unadj & -0.2 (0.3) & 10.3 (0.2) & 10.3 (0.2) & 94.2 (0.7) & 39.3 (0.0) & 0\\

 & no & lasso & -0.3 (0.3) & 9.1 (0.2) & 9.1 (0.2) & 93.1 (0.8) & 33.6 (0.1) & 65\\

 & yes & lasso & -0.2 (0.3) & 8.6 (0.2) & 8.6 (0.2) & 95.2 (0.6) & 33.7 (0.1) & 64\\

 & no & ridge & -0.3 (0.3) & 8.8 (0.2) & 8.8 (0.2) & 90.1 (0.9) & 29.0 (0.0) & 400\\

 & yes & ridge & -0.1 (0.3) & 8.4 (0.2) & 8.4 (0.2) & 90.9 (0.9) & 28.9 (0.0) & 400\\

 & no & enet & -0.2 (0.2) & 8.3 (0.2) & 8.3 (0.2) & 88.4 (1.0) & 26.5 (0.1) & 331\\

\multirow{-8}{1.4cm}{\centering\arraybackslash Large, equal} & yes & enet & -0.1 (0.3) & 8.0 (0.2) & 8.0 (0.2) & 89.7 (1.0) & 26.5 (0.1) & 331\\
\midrule

 & no & unadj & 0.8 (0.3) & 10.1 (0.2) & 10.1 (0.2) & 95.7 (0.6) & 40.9 (0.0) & 0\\

 & yes & unadj & 0.4 (0.3) & 9.8 (0.2) & 9.8 (0.2) & 94.3 (0.7) & 39.0 (0.0) & 0\\

 & no & lasso & 0.1 (0.3) & 9.8 (0.2) & 9.8 (0.2) & 96.2 (0.6) & 39.4 (0.1) & 19\\

 & yes & lasso & -0.4 (0.3) & 9.6 (0.2) & 9.6 (0.2) & 95.3 (0.7) & 39.5 (0.1) & 21\\

 & no & ridge & 0.2 (0.3) & 9.6 (0.2) & 9.6 (0.2) & 95.8 (0.6) & 38.2 (0.1) & 400\\

 & yes & ridge & -0.2 (0.3) & 9.4 (0.2) & 9.4 (0.2) & 94.9 (0.7) & 38.3 (0.1) & 400\\

 & no & enet & -0.4 (0.3) & 9.4 (0.2) & 9.4 (0.2) & 94.2 (0.7) & 36.1 (0.1) & 353\\

\multirow{-8}{1.4cm}{\centering\arraybackslash Large, unequal} & yes & enet & -0.8 (0.3) & 9.2 (0.2) & 9.2 (0.2) & 93.9 (0.8) & 36.1 (0.1) & 348\\
\midrule

 & no & unadj & -0.7 (0.4) & 11.6 (0.3) & 11.6 (0.3) & 94.5 (0.7) & 45.1 (0.0) & 0\\

 & yes & unadj & 0.1 (0.3) & 11.0 (0.2) & 11.0 (0.2) & 95.7 (0.6) & 44.6 (0.0) & 0\\

 & no & lasso & -0.6 (0.3) & 10.0 (0.2) & 10.0 (0.2) & 94.9 (0.7) & 38.8 (0.1) & 45\\

 & yes & lasso & 0.1 (0.3) & 9.7 (0.2) & 9.7 (0.2) & 95.6 (0.6) & 38.8 (0.1) & 45\\

 & no & ridge & -0.5 (0.3) & 10.0 (0.2) & 10.0 (0.2) & 91.2 (0.8) & 34.4 (0.0) & 400\\

 & yes & ridge & 0.1 (0.3) & 9.5 (0.2) & 9.5 (0.2) & 92.6 (0.9) & 34.5 (0.0) & 400\\

 & no & enet & -0.5 (0.3) & 9.5 (0.2) & 9.5 (0.2) & 90.6 (0.9) & 32.1 (0.1) & 278\\

\multirow{-8}{1.4cm}{\centering\arraybackslash Small, equal} & yes & enet & 0.1 (0.3) & 9.2 (0.2) & 9.2 (0.2) & 91.1 (0.9) & 32.2 (0.1) & 278\\
\midrule

 & no & unadj & -0.1 (0.3) & 9.4 (0.2) & 9.4 (0.2) & 94.6 (0.7) & 36.3 (0.0) & 0\\

 & yes & unadj & 0.1 (0.3) & 9.2 (0.2) & 9.2 (0.2) & 94.9 (0.7) & 35.9 (0.0) & 0\\

 & no & lasso & 0.5 (0.3) & 8.7 (0.2) & 8.7 (0.2) & 94.8 (0.7) & 33.3 (0.1) & 24\\

 & yes & lasso & 0.5 (0.3) & 8.4 (0.2) & 8.4 (0.2) & 95.4 (0.7) & 33.4 (0.1) & 22\\

 & no & ridge & 0.5 (0.3) & 8.7 (0.2) & 8.8 (0.2) & 93.7 (0.8) & 32.3 (0.1) & 400\\

 & yes & ridge & 0.5 (0.3) & 8.5 (0.2) & 8.5 (0.2) & 94.3 (0.8) & 32.4 (0.0) & 400\\

 & no & enet & 0.8 (0.3) & 8.4 (0.2) & 8.5 (0.2) & 93.1 (0.8) & 30.5 (0.1) & 312\\

\multirow{-8}{1.4cm}{\centering\arraybackslash Small, unequal} & yes & enet & 0.8 (0.3) & 8.1 (0.2) & 8.2 (0.2) & 94.0 (0.8) & 30.7 (0.1) & 321\\
\bottomrule
\end{tabular}
\begin{tablenotes}
\item Note: The numbers in brackets are the corresponding standard errors estimated using the bootstrap with 500 replications. Bias, SD, RMSE, CP, Length, and their standard errors are multiplied by 100.
\end{tablenotes}
\end{threeparttable}
\end{table}

\begin{table}[H]

\caption{\label{tab:sim_corr_2}Simulation results for highly correlated covariates.}
\centering
\begin{threeparttable}
\begin{tabular}[t]{>{\centering\arraybackslash}p{1.4cm}ccrrrrrr}
\toprule
\multicolumn{1}{c}{Scenario} & \multicolumn{1}{c}{Rerand.} & \multicolumn{1}{c}{Est.} & \multicolumn{1}{c}{Bias} & \multicolumn{1}{c}{SD} & \multicolumn{1}{c}{RMSE} & \multicolumn{1}{c}{CP} & \multicolumn{1}{c}{Length} & \multicolumn{1}{c}{$\hat{s}$}\\
\midrule
 & no & unadj & 0.4 (0.3) & 10.4 (0.2) & 10.4 (0.2) & 94.5 (0.7) & 39.6 (0.0) & 0\\

 & yes & unadj & -0.1 (0.3) & 10.0 (0.2) & 10.0 (0.2) & 94.6 (0.7) & 39.0 (0.0) & 0\\

 & no & lasso & 0.2 (0.3) & 9.0 (0.2) & 9.0 (0.2) & 94.9 (0.7) & 34.6 (0.1) & 45\\

 & yes & lasso & -0.2 (0.3) & 8.8 (0.2) & 8.8 (0.2) & 95.7 (0.6) & 34.6 (0.1) & 45\\

 & no & ridge & 0.3 (0.3) & 8.9 (0.2) & 8.9 (0.2) & 91.7 (0.9) & 30.5 (0.0) & 400\\

 & yes & ridge & -0.1 (0.3) & 8.6 (0.2) & 8.6 (0.2) & 92.7 (0.8) & 30.5 (0.0) & 400\\

 & no & enet & 0.2 (0.3) & 8.5 (0.2) & 8.5 (0.2) & 90.4 (0.9) & 28.4 (0.1) & 308\\

\multirow{-8}{1.4cm}{\centering\arraybackslash Hybrid, equal} & yes & enet & -0.2 (0.3) & 8.2 (0.2) & 8.2 (0.2) & 91.3 (0.9) & 28.4 (0.1) & 306\\
\midrule

 & no & unadj & 0.0 (0.3) & 10.6 (0.2) & 10.6 (0.2) & 93.6 (0.7) & 40.5 (0.0) & 0\\

 & yes & unadj & 0.6 (0.3) & 10.1 (0.2) & 10.1 (0.2) & 94.4 (0.8) & 38.8 (0.0) & 0\\

 & no & lasso & 0.4 (0.3) & 10.1 (0.2) & 10.1 (0.2) & 93.5 (0.8) & 38.7 (0.1) & 25\\

 & yes & lasso & 1.0 (0.3) & 9.6 (0.2) & 9.7 (0.2) & 95.1 (0.7) & 38.7 (0.1) & 25\\

 & no & ridge & 0.4 (0.3) & 10.2 (0.2) & 10.2 (0.2) & 92.7 (0.9) & 37.7 (0.1) & 400\\

 & yes & ridge & 1.0 (0.3) & 9.7 (0.2) & 9.8 (0.2) & 94.8 (0.7) & 37.6 (0.1) & 400\\

 & no & enet & 0.7 (0.3) & 9.9 (0.2) & 9.9 (0.2) & 91.4 (0.9) & 35.6 (0.1) & 310\\

\multirow{-8}{1.4cm}{\centering\arraybackslash Hybrid, unequal} & yes & enet & 1.3 (0.3) & 9.4 (0.2) & 9.5 (0.2) & 93.3 (0.8) & 35.5 (0.1) & 317\\
\midrule

 & no & unadj & -0.3 (0.3) & 11.2 (0.2) & 11.2 (0.2) & 96.1 (0.6) & 44.7 (0.1) & 0\\

 & yes & unadj & 0.2 (0.3) & 10.8 (0.2) & 10.8 (0.2) & 93.8 (0.7) & 42.0 (0.1) & 0\\

 & no & lasso & -0.3 (0.3) & 10.2 (0.2) & 10.2 (0.2) & 94.3 (0.8) & 39.7 (0.1) & 28\\

 & yes & lasso & 0.3 (0.3) & 10.2 (0.2) & 10.2 (0.2) & 93.9 (0.8) & 39.7 (0.1) & 28\\

 & no & ridge & -0.3 (0.3) & 9.9 (0.2) & 9.9 (0.2) & 91.2 (0.9) & 35.2 (0.1) & 400\\

 & yes & ridge & 0.2 (0.3) & 9.7 (0.2) & 9.7 (0.2) & 91.8 (0.8) & 35.3 (0.1) & 400\\

 & no & enet & -0.2 (0.3) & 9.7 (0.2) & 9.7 (0.2) & 90.5 (0.9) & 33.1 (0.1) & 298\\

\multirow{-8}{1.4cm}{\centering\arraybackslash Triplet, equal} & yes & enet & 0.3 (0.3) & 9.6 (0.2) & 9.6 (0.2) & 90.6 (0.9) & 32.9 (0.2) & 296\\
\midrule

 & no & unadj & -0.0 (0.3) & 11.1 (0.3) & 11.1 (0.3) & 94.8 (0.7) & 43.7 (0.1) & 0\\

 & yes & unadj & -0.7 (0.3) & 10.8 (0.3) & 10.8 (0.3) & 93.5 (0.8) & 42.1 (0.1) & 0\\

 & no & lasso & 1.0 (0.3) & 10.6 (0.3) & 10.6 (0.3) & 94.0 (0.8) & 40.3 (0.1) & 18\\

 & yes & lasso & 0.2 (0.3) & 10.4 (0.3) & 10.4 (0.3) & 94.3 (0.8) & 40.3 (0.1) & 17\\

 & no & ridge & 0.6 (0.3) & 10.5 (0.3) & 10.5 (0.3) & 93.7 (0.8) & 39.9 (0.1) & 400\\

 & yes & ridge & -0.0 (0.3) & 10.3 (0.2) & 10.3 (0.2) & 94.4 (0.7) & 39.8 (0.1) & 400\\

 & no & enet & 1.3 (0.3) & 10.2 (0.2) & 10.3 (0.2) & 92.7 (0.8) & 36.8 (0.1) & 321\\

\multirow{-8}{1.4cm}{\centering\arraybackslash Triplet, unequal} & yes & enet & 0.6 (0.3) & 10.0 (0.2) & 10.0 (0.2) & 92.9 (0.8) & 36.7 (0.1) & 311\\
\bottomrule
\end{tabular}
\begin{tablenotes}
\item Note: The numbers in brackets are the corresponding standard errors estimated using the bootstrap with 500 replications. Bias, SD, RMSE, CP, Length, and their standard errors are multiplied by 100.
\end{tablenotes}
\end{threeparttable}
\end{table}

\subsubsection{OLS, Lasso, and Lasso + OLS}

In this section, we explore the bias of the Lasso-adjusted ATE estimator in dense scenarios. We set $s=40,200,400$, $p=400$, and $n=300$. We generate potential outcomes by $Y_i(1)=Y_i(0)=B_i/M+\xr_i^\T \boldsymbol{\beta}-2\xr_{bc,i}^\T \boldsymbol{\beta} + \e_i$, where the first $s$ elements of $\boldsymbol{\beta}$ are generated from $U[-0.1,0.1]$ and the remaining elements are zero. The other setups are similar to those in Section \ref{sec:sim}.

Furthermore, we use OLS and Lasso + OLS methods for estimating $\gz{z}$. For OLS, we adjust solely for the first 5 covariates. For Lasso + OLS, we employ Lasso for covariate selection and then refit the model using OLS. This simulation is repeated 20,000 times.

Figure \ref{fig:sim_bias} illustrates the bias of all methods in various scenarios. Tables \ref{tab:sim_bias_40_1} to \ref{tab:sim_bias_400_2} show the detailed summary statistics. First, the biases of all methods are negligible compared to the standard deviation. In cases where propensity scores are equal across blocks, the biases of all methods are similar. However, when propensity scores differ across blocks, the biases of Lasso and Lasso + OLS are larger than those arising from unpenalized regression. Additionally, these biases tend to increase with the value of sparsity $s$. Second, the coverage probabilities of all methods are close to the confidence level. Third, Lasso + OLS achieves the best efficiency with reductions of up to 43\% in standard deviation and up to 50\% in mean confidence interval length compared to the unadjusted estimator.

\begin{figure}[p]
\centering
\includegraphics[width=1\textwidth]{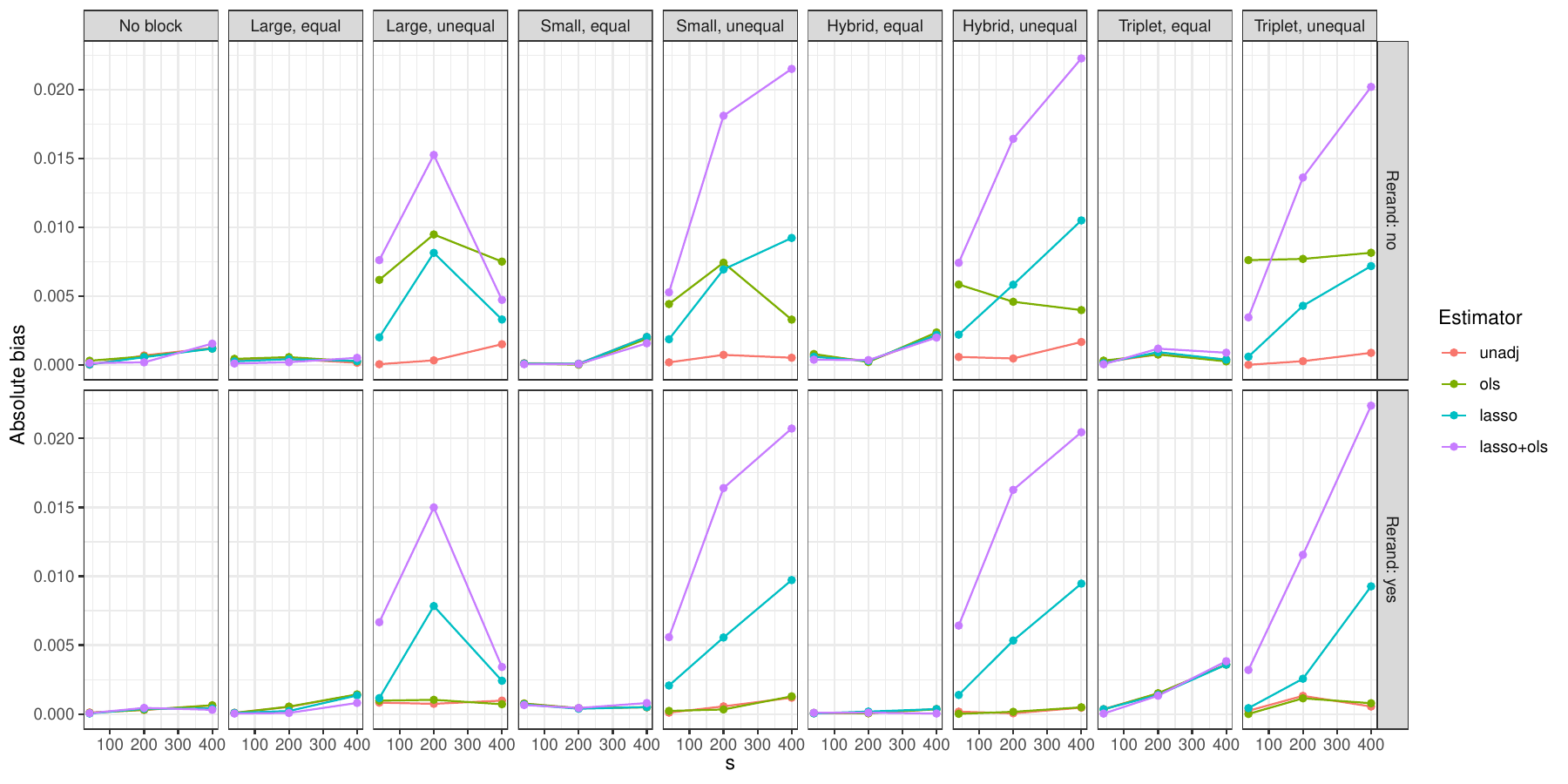}
\caption{Absolute bias of unadjusted, OLS, Lasso, and Lasso + OLS estimator, when $s=40,200,400$.}
\label{fig:sim_bias}
\end{figure}

\clearpage

\begin{table}[p]

\caption{\label{tab:sim_bias_40_1}Simulation results for OLS, Lasso, and Lasso + OLS, when $s=40$.}
\centering
\begin{threeparttable}
\begin{tabular}[t]{>{\centering\arraybackslash}p{1.4cm}ccrrrrr}
\toprule
\multicolumn{1}{c}{Scenario} & \multicolumn{1}{c}{Rerand.} & \multicolumn{1}{c}{Est.} & \multicolumn{1}{c}{Bias} & \multicolumn{1}{c}{SD} & \multicolumn{1}{c}{RMSE} & \multicolumn{1}{c}{CP} & \multicolumn{1}{c}{Length}\\
\midrule
 & no & unadj & -0.002 (0.036) & 5.1 (0.0) & 5.1 (0.0) & 94.9 (0.2) & 20.0 (0.0)\\

 & yes & unadj & -0.010 (0.036) & 4.9 (0.0) & 4.9 (0.0) & 94.7 (0.2) & 18.9 (0.0)\\

 & no & ols & -0.030 (0.035) & 4.9 (0.0) & 4.9 (0.0) & 94.6 (0.2) & 19.0 (0.0)\\

 & yes & ols & -0.009 (0.034) & 4.9 (0.0) & 4.9 (0.0) & 94.9 (0.2) & 19.0 (0.0)\\

 & no & lasso & 0.004 (0.023) & 3.3 (0.0) & 3.3 (0.0) & 94.4 (0.2) & 12.7 (0.0)\\

 & yes & lasso & -0.004 (0.022) & 3.2 (0.0) & 3.2 (0.0) & 95.1 (0.1) & 12.7 (0.0)\\

 & no & lasso+ols & 0.014 (0.020) & 3.0 (0.0) & 3.0 (0.0) & 90.8 (0.2) & 10.1 (0.0)\\

\multirow{-8}{1.4cm}{\centering\arraybackslash No block} & yes & lasso+ols & 0.006 (0.021) & 2.9 (0.0) & 2.9 (0.0) & 91.4 (0.2) & 10.1 (0.0)\\
\midrule

 & no & unadj & -0.036 (0.031) & 4.4 (0.0) & 4.4 (0.0) & 95.0 (0.1) & 17.4 (0.0)\\

 & yes & unadj & 0.009 (0.029) & 4.1 (0.0) & 4.1 (0.0) & 94.6 (0.2) & 15.9 (0.0)\\

 & no & ols & -0.044 (0.027) & 4.2 (0.0) & 4.2 (0.0) & 94.4 (0.2) & 16.1 (0.0)\\

 & yes & ols & 0.007 (0.029) & 4.1 (0.0) & 4.1 (0.0) & 94.8 (0.2) & 16.1 (0.0)\\

 & no & lasso & -0.025 (0.027) & 3.8 (0.0) & 3.8 (0.0) & 94.9 (0.2) & 14.8 (0.0)\\

 & yes & lasso & 0.008 (0.024) & 3.5 (0.0) & 3.5 (0.0) & 96.2 (0.1) & 14.8 (0.0)\\

 & no & lasso+ols & -0.010 (0.022) & 3.3 (0.0) & 3.3 (0.0) & 93.6 (0.2) & 12.1 (0.0)\\

\multirow{-8}{1.4cm}{\centering\arraybackslash Large, equal} & yes & lasso+ols & 0.003 (0.022) & 3.1 (0.0) & 3.1 (0.0) & 94.5 (0.1) & 12.1 (0.0)\\
\midrule

 & no & unadj & 0.005 (0.035) & 4.7 (0.0) & 4.7 (0.0) & 95.1 (0.1) & 18.5 (0.0)\\

 & yes & unadj & 0.082 (0.033) & 4.6 (0.0) & 4.6 (0.0) & 94.5 (0.2) & 17.9 (0.0)\\

 & no & ols & 0.617 (0.033) & 4.7 (0.0) & 4.8 (0.0) & 94.6 (0.2) & 18.4 (0.0)\\

 & yes & ols & 0.097 (0.034) & 4.6 (0.0) & 4.6 (0.0) & 95.3 (0.2) & 18.4 (0.0)\\

 & no & lasso & -0.200 (0.033) & 4.5 (0.0) & 4.5 (0.0) & 95.0 (0.1) & 17.7 (0.0)\\

 & yes & lasso & -0.115 (0.033) & 4.5 (0.0) & 4.5 (0.0) & 95.2 (0.1) & 17.8 (0.0)\\

 & no & lasso+ols & -0.761 (0.031) & 4.3 (0.0) & 4.4 (0.0) & 93.1 (0.2) & 16.2 (0.0)\\

\multirow{-8}{1.4cm}{\centering\arraybackslash Large, unequal} & yes & lasso+ols & -0.666 (0.032) & 4.2 (0.0) & 4.3 (0.0) & 93.7 (0.2) & 16.2 (0.0)\\
\midrule

 & no & unadj & -0.013 (0.036) & 4.8 (0.0) & 4.8 (0.0) & 95.1 (0.2) & 18.8 (0.0)\\

 & yes & unadj & 0.076 (0.033) & 4.7 (0.0) & 4.7 (0.0) & 94.6 (0.2) & 18.3 (0.0)\\

 & no & ols & -0.010 (0.034) & 4.7 (0.0) & 4.7 (0.0) & 94.8 (0.2) & 18.4 (0.0)\\

 & yes & ols & 0.076 (0.034) & 4.7 (0.0) & 4.7 (0.0) & 94.9 (0.2) & 18.4 (0.0)\\

 & no & lasso & -0.010 (0.030) & 4.4 (0.0) & 4.4 (0.0) & 95.1 (0.2) & 17.1 (0.0)\\

 & yes & lasso & 0.069 (0.028) & 4.3 (0.0) & 4.3 (0.0) & 95.6 (0.1) & 17.1 (0.0)\\

 & no & lasso+ols & 0.005 (0.026) & 3.9 (0.0) & 3.9 (0.0) & 93.8 (0.2) & 14.4 (0.0)\\

\multirow{-8}{1.4cm}{\centering\arraybackslash Small, equal} & yes & lasso+ols & 0.065 (0.026) & 3.7 (0.0) & 3.7 (0.0) & 94.8 (0.2) & 14.4 (0.0)\\
\midrule

 & no & unadj & -0.018 (0.030) & 4.0 (0.0) & 4.0 (0.0) & 95.0 (0.2) & 16.0 (0.0)\\

 & yes & unadj & 0.010 (0.029) & 4.0 (0.0) & 4.0 (0.0) & 94.3 (0.2) & 15.5 (0.0)\\

 & no & ols & 0.442 (0.028) & 4.1 (0.0) & 4.1 (0.0) & 94.6 (0.2) & 15.8 (0.0)\\

 & yes & ols & 0.022 (0.028) & 4.0 (0.0) & 4.0 (0.0) & 94.9 (0.2) & 15.8 (0.0)\\

 & no & lasso & 0.186 (0.027) & 3.8 (0.0) & 3.8 (0.0) & 94.8 (0.2) & 14.8 (0.0)\\

 & yes & lasso & 0.207 (0.027) & 3.7 (0.0) & 3.8 (0.0) & 95.0 (0.1) & 14.8 (0.0)\\

 & no & lasso+ols & 0.527 (0.025) & 3.5 (0.0) & 3.6 (0.0) & 93.2 (0.2) & 13.1 (0.0)\\

\multirow{-8}{1.4cm}{\centering\arraybackslash Small, unequal} & yes & lasso+ols & 0.558 (0.025) & 3.5 (0.0) & 3.5 (0.0) & 93.3 (0.2) & 13.1 (0.0)\\
\bottomrule
\end{tabular}
\begin{tablenotes}
\item Note: The numbers in brackets are the corresponding standard errors estimated using the bootstrap with 500 replications. Bias, SD, RMSE, CP, Length, and their standard errors are multiplied by 100.
\end{tablenotes}
\end{threeparttable}
\end{table}

\begin{table}[p]

\caption{\label{tab:sim_bias_40_2}Simulation results for OLS, Lasso, and Lasso + OLS, when $s=40$.}
\centering
\begin{threeparttable}
\begin{tabular}[t]{>{\centering\arraybackslash}p{1.4cm}ccrrrrr}
\toprule
\multicolumn{1}{c}{Scenario} & \multicolumn{1}{c}{Rerand.} & \multicolumn{1}{c}{Est.} & \multicolumn{1}{c}{Bias} & \multicolumn{1}{c}{SD} & \multicolumn{1}{c}{RMSE} & \multicolumn{1}{c}{CP} & \multicolumn{1}{c}{Length}\\
\midrule
 & no & unadj & -0.070 (0.032) & 4.7 (0.0) & 4.7 (0.0) & 94.9 (0.2) & 18.3 (0.0)\\

 & yes & unadj & -0.005 (0.031) & 4.4 (0.0) & 4.4 (0.0) & 94.6 (0.2) & 16.9 (0.0)\\

 & no & ols & -0.079 (0.032) & 4.4 (0.0) & 4.4 (0.0) & 94.6 (0.2) & 17.0 (0.0)\\

 & yes & ols & -0.008 (0.032) & 4.4 (0.0) & 4.4 (0.0) & 94.8 (0.2) & 17.0 (0.0)\\

 & no & lasso & -0.059 (0.027) & 4.0 (0.0) & 4.0 (0.0) & 94.8 (0.1) & 15.6 (0.0)\\

 & yes & lasso & -0.004 (0.025) & 3.8 (0.0) & 3.8 (0.0) & 96.0 (0.1) & 15.6 (0.0)\\

 & no & lasso+ols & -0.038 (0.025) & 3.3 (0.0) & 3.3 (0.0) & 93.1 (0.2) & 12.0 (0.0)\\

\multirow{-8}{1.4cm}{\centering\arraybackslash Hybrid, equal} & yes & lasso+ols & -0.009 (0.023) & 3.2 (0.0) & 3.2 (0.0) & 93.8 (0.2) & 12.0 (0.0)\\
\midrule

 & no & unadj & 0.058 (0.033) & 4.6 (0.0) & 4.6 (0.0) & 94.8 (0.2) & 18.0 (0.0)\\

 & yes & unadj & -0.017 (0.031) & 4.4 (0.0) & 4.4 (0.0) & 94.8 (0.1) & 17.2 (0.0)\\

 & no & ols & 0.584 (0.032) & 4.5 (0.0) & 4.6 (0.0) & 94.4 (0.2) & 17.5 (0.0)\\

 & yes & ols & -0.001 (0.032) & 4.4 (0.0) & 4.4 (0.0) & 95.2 (0.2) & 17.5 (0.0)\\

 & no & lasso & 0.220 (0.029) & 4.1 (0.0) & 4.1 (0.0) & 94.9 (0.2) & 16.2 (0.0)\\

 & yes & lasso & 0.138 (0.029) & 3.9 (0.0) & 3.9 (0.0) & 96.1 (0.1) & 16.2 (0.0)\\

 & no & lasso+ols & 0.741 (0.026) & 3.7 (0.0) & 3.8 (0.0) & 92.8 (0.2) & 13.9 (0.0)\\

\multirow{-8}{1.4cm}{\centering\arraybackslash Hybrid, unequal} & yes & lasso+ols & 0.642 (0.025) & 3.5 (0.0) & 3.6 (0.0) & 94.3 (0.2) & 13.9 (0.0)\\
\midrule

 & no & unadj & -0.019 (0.033) & 4.4 (0.0) & 4.4 (0.0) & 94.6 (0.2) & 17.3 (0.0)\\

 & yes & unadj & 0.035 (0.028) & 4.2 (0.0) & 4.2 (0.0) & 93.6 (0.2) & 15.7 (0.0)\\

 & no & ols & 0.031 (0.031) & 4.3 (0.0) & 4.3 (0.0) & 93.8 (0.2) & 16.3 (0.0)\\

 & yes & ols & 0.032 (0.030) & 4.2 (0.0) & 4.2 (0.0) & 94.7 (0.2) & 16.3 (0.0)\\

 & no & lasso & -0.013 (0.030) & 4.2 (0.0) & 4.2 (0.0) & 94.2 (0.2) & 16.1 (0.0)\\

 & yes & lasso & 0.035 (0.026) & 3.9 (0.0) & 3.9 (0.0) & 95.8 (0.1) & 16.1 (0.0)\\

 & no & lasso+ols & 0.005 (0.028) & 3.9 (0.0) & 3.9 (0.0) & 92.5 (0.2) & 14.2 (0.0)\\

\multirow{-8}{1.4cm}{\centering\arraybackslash Triplet, equal} & yes & lasso+ols & 0.003 (0.027) & 3.6 (0.0) & 3.6 (0.0) & 94.5 (0.2) & 14.2 (0.0)\\
\midrule

 & no & unadj & 0.001 (0.034) & 4.8 (0.0) & 4.8 (0.0) & 94.7 (0.2) & 19.0 (0.0)\\

 & yes & unadj & -0.023 (0.032) & 4.3 (0.0) & 4.3 (0.0) & 92.2 (0.2) & 15.7 (0.0)\\

 & no & ols & 0.761 (0.030) & 4.4 (0.0) & 4.4 (0.0) & 94.1 (0.2) & 16.9 (0.0)\\

 & yes & ols & -0.000 (0.029) & 4.3 (0.0) & 4.3 (0.0) & 94.7 (0.2) & 16.8 (0.0)\\

 & no & lasso & 0.060 (0.032) & 4.6 (0.0) & 4.6 (0.0) & 94.6 (0.2) & 18.0 (0.0)\\

 & yes & lasso & 0.042 (0.030) & 4.1 (0.0) & 4.1 (0.0) & 96.7 (0.1) & 18.0 (0.0)\\

 & no & lasso+ols & 0.345 (0.033) & 4.6 (0.0) & 4.6 (0.0) & 92.5 (0.2) & 16.6 (0.0)\\

\multirow{-8}{1.4cm}{\centering\arraybackslash Triplet, unequal} & yes & lasso+ols & 0.319 (0.032) & 4.3 (0.0) & 4.3 (0.0) & 94.5 (0.2) & 16.6 (0.0)\\
\bottomrule
\end{tabular}
\begin{tablenotes}
\item Note: The numbers in brackets are the corresponding standard errors estimated using the bootstrap with 500 replications. Bias, SD, RMSE, CP, Length, and their standard errors are multiplied by 100.
\end{tablenotes}
\end{threeparttable}
\end{table}

\begin{table}[p]

\caption{\label{tab:sim_bias_200_1}Simulation results for OLS, Lasso, and Lasso + OLS, when $s=200$.}
\centering
\begin{threeparttable}
\begin{tabular}[t]{>{\centering\arraybackslash}p{1.4cm}ccrrrrr}
\toprule
\multicolumn{1}{c}{Scenario} & \multicolumn{1}{c}{Rerand.} & \multicolumn{1}{c}{Est.} & \multicolumn{1}{c}{Bias} & \multicolumn{1}{c}{SD} & \multicolumn{1}{c}{RMSE} & \multicolumn{1}{c}{CP} & \multicolumn{1}{c}{Length}\\
\midrule
 & no & unadj & 0.069 (0.079) & 11.3 (0.1) & 11.3 (0.1) & 94.8 (0.2) & 44.0 (0.0)\\

 & yes & unadj & -0.029 (0.080) & 11.2 (0.1) & 11.2 (0.1) & 94.7 (0.2) & 43.7 (0.0)\\

 & no & ols & 0.062 (0.084) & 11.5 (0.1) & 11.5 (0.1) & 94.4 (0.2) & 44.1 (0.0)\\

 & yes & ols & -0.030 (0.083) & 11.2 (0.1) & 11.2 (0.1) & 94.9 (0.2) & 44.1 (0.0)\\

 & no & lasso & 0.056 (0.066) & 9.3 (0.0) & 9.3 (0.0) & 94.6 (0.2) & 35.8 (0.0)\\

 & yes & lasso & -0.038 (0.064) & 9.2 (0.0) & 9.2 (0.0) & 94.8 (0.2) & 35.8 (0.0)\\

 & no & lasso+ols & 0.019 (0.058) & 8.6 (0.0) & 8.6 (0.0) & 90.3 (0.2) & 28.8 (0.0)\\

\multirow{-8}{1.4cm}{\centering\arraybackslash No block} & yes & lasso+ols & -0.044 (0.062) & 8.7 (0.0) & 8.7 (0.0) & 90.1 (0.2) & 28.8 (0.0)\\
\midrule

 & no & unadj & -0.043 (0.078) & 11.1 (0.1) & 11.1 (0.1) & 95.0 (0.2) & 43.8 (0.0)\\

 & yes & unadj & -0.052 (0.074) & 11.1 (0.1) & 11.1 (0.1) & 94.6 (0.2) & 43.1 (0.0)\\

 & no & ols & -0.056 (0.082) & 11.1 (0.1) & 11.1 (0.1) & 94.9 (0.2) & 43.5 (0.0)\\

 & yes & ols & -0.053 (0.081) & 11.1 (0.1) & 11.1 (0.1) & 94.8 (0.2) & 43.5 (0.0)\\

 & no & lasso & -0.042 (0.067) & 9.4 (0.0) & 9.4 (0.0) & 94.8 (0.2) & 37.0 (0.0)\\

 & yes & lasso & -0.022 (0.067) & 9.5 (0.0) & 9.5 (0.0) & 94.9 (0.1) & 37.0 (0.0)\\

 & no & lasso+ols & -0.020 (0.063) & 8.6 (0.0) & 8.6 (0.0) & 91.0 (0.2) & 29.4 (0.0)\\

\multirow{-8}{1.4cm}{\centering\arraybackslash Large, equal} & yes & lasso+ols & 0.008 (0.062) & 8.7 (0.0) & 8.7 (0.0) & 90.7 (0.2) & 29.4 (0.0)\\
\midrule

 & no & unadj & 0.033 (0.085) & 12.7 (0.1) & 12.7 (0.1) & 94.8 (0.2) & 49.7 (0.0)\\

 & yes & unadj & 0.074 (0.089) & 12.5 (0.1) & 12.5 (0.1) & 94.5 (0.2) & 48.4 (0.0)\\

 & no & ols & 0.947 (0.092) & 12.6 (0.1) & 12.6 (0.1) & 94.5 (0.2) & 48.9 (0.0)\\

 & yes & ols & 0.103 (0.096) & 12.5 (0.1) & 12.5 (0.1) & 94.8 (0.2) & 48.9 (0.0)\\

 & no & lasso & -0.814 (0.084) & 11.9 (0.1) & 12.0 (0.1) & 94.8 (0.2) & 46.8 (0.0)\\

 & yes & lasso & -0.784 (0.085) & 11.8 (0.1) & 11.8 (0.1) & 95.2 (0.2) & 46.8 (0.0)\\

 & no & lasso+ols & -1.524 (0.085) & 11.6 (0.1) & 11.7 (0.1) & 93.5 (0.2) & 43.4 (0.0)\\

\multirow{-8}{1.4cm}{\centering\arraybackslash Large, unequal} & yes & lasso+ols & -1.500 (0.081) & 11.4 (0.1) & 11.5 (0.1) & 94.0 (0.2) & 43.4 (0.0)\\
\midrule

 & no & unadj & 0.009 (0.079) & 10.6 (0.1) & 10.6 (0.1) & 95.0 (0.2) & 41.8 (0.0)\\

 & yes & unadj & 0.041 (0.079) & 10.7 (0.1) & 10.7 (0.1) & 94.8 (0.2) & 41.5 (0.0)\\

 & no & ols & -0.002 (0.080) & 10.8 (0.1) & 10.8 (0.1) & 94.8 (0.2) & 41.9 (0.0)\\

 & yes & ols & 0.043 (0.077) & 10.7 (0.0) & 10.7 (0.0) & 95.1 (0.1) & 41.9 (0.0)\\

 & no & lasso & 0.008 (0.071) & 10.3 (0.1) & 10.3 (0.1) & 95.0 (0.2) & 40.5 (0.0)\\

 & yes & lasso & 0.038 (0.070) & 10.4 (0.0) & 10.4 (0.0) & 94.9 (0.2) & 40.5 (0.0)\\

 & no & lasso+ols & 0.006 (0.069) & 10.0 (0.0) & 10.0 (0.0) & 93.5 (0.2) & 37.2 (0.0)\\

\multirow{-8}{1.4cm}{\centering\arraybackslash Small, equal} & yes & lasso+ols & 0.043 (0.074) & 10.0 (0.0) & 10.0 (0.0) & 93.3 (0.2) & 37.1 (0.0)\\
\midrule

 & no & unadj & 0.073 (0.067) & 10.2 (0.0) & 10.2 (0.0) & 94.9 (0.2) & 40.0 (0.0)\\

 & yes & unadj & -0.055 (0.070) & 10.1 (0.0) & 10.1 (0.0) & 94.5 (0.2) & 38.9 (0.0)\\

 & no & ols & 0.742 (0.073) & 10.2 (0.0) & 10.2 (0.0) & 94.6 (0.2) & 39.5 (0.0)\\

 & yes & ols & -0.033 (0.070) & 10.1 (0.0) & 10.1 (0.0) & 94.8 (0.2) & 39.4 (0.0)\\

 & no & lasso & 0.693 (0.069) & 10.0 (0.0) & 10.0 (0.0) & 94.8 (0.2) & 38.7 (0.0)\\

 & yes & lasso & 0.555 (0.070) & 9.9 (0.0) & 9.9 (0.0) & 95.0 (0.2) & 38.7 (0.0)\\

 & no & lasso+ols & 1.809 (0.074) & 9.9 (0.0) & 10.1 (0.1) & 92.8 (0.2) & 36.4 (0.0)\\

\multirow{-8}{1.4cm}{\centering\arraybackslash Small, unequal} & yes & lasso+ols & 1.640 (0.072) & 9.8 (0.1) & 9.9 (0.1) & 92.8 (0.2) & 36.3 (0.0)\\
\bottomrule
\end{tabular}
\begin{tablenotes}
\item Note: The numbers in brackets are the corresponding standard errors estimated using the bootstrap with 500 replications. Bias, SD, RMSE, CP, Length, and their standard errors are multiplied by 100.
\end{tablenotes}
\end{threeparttable}
\end{table}

\begin{table}[p]

\caption{\label{tab:sim_bias_200_2}Simulation results for OLS, Lasso, and Lasso + OLS, when $s=200$.}
\centering
\begin{threeparttable}
\begin{tabular}[t]{>{\centering\arraybackslash}p{1.4cm}ccrrrrr}
\toprule
\multicolumn{1}{c}{Scenario} & \multicolumn{1}{c}{Rerand.} & \multicolumn{1}{c}{Est.} & \multicolumn{1}{c}{Bias} & \multicolumn{1}{c}{SD} & \multicolumn{1}{c}{RMSE} & \multicolumn{1}{c}{CP} & \multicolumn{1}{c}{Length}\\
\midrule
 & no & unadj & 0.027 (0.068) & 10.0 (0.0) & 10.0 (0.0) & 94.9 (0.1) & 39.1 (0.0)\\

 & yes & unadj & -0.007 (0.073) & 9.8 (0.1) & 9.8 (0.1) & 94.5 (0.2) & 37.7 (0.0)\\

 & no & ols & 0.021 (0.071) & 9.8 (0.0) & 9.8 (0.0) & 94.7 (0.2) & 38.0 (0.0)\\

 & yes & ols & -0.005 (0.067) & 9.8 (0.0) & 9.8 (0.0) & 94.8 (0.2) & 38.0 (0.0)\\

 & no & lasso & 0.030 (0.065) & 9.4 (0.0) & 9.4 (0.0) & 94.9 (0.2) & 36.7 (0.0)\\

 & yes & lasso & -0.018 (0.066) & 9.3 (0.0) & 9.3 (0.0) & 95.0 (0.2) & 36.7 (0.0)\\

 & no & lasso+ols & 0.035 (0.063) & 9.1 (0.0) & 9.1 (0.0) & 91.9 (0.2) & 31.8 (0.0)\\

\multirow{-8}{1.4cm}{\centering\arraybackslash Hybrid, equal} & yes & lasso+ols & -0.010 (0.065) & 9.1 (0.0) & 9.1 (0.0) & 91.8 (0.2) & 31.7 (0.0)\\
\midrule

 & no & unadj & 0.047 (0.077) & 10.9 (0.1) & 10.9 (0.1) & 94.7 (0.2) & 42.6 (0.0)\\

 & yes & unadj & 0.005 (0.075) & 10.8 (0.1) & 10.8 (0.1) & 94.6 (0.2) & 42.1 (0.0)\\

 & no & ols & 0.458 (0.081) & 11.1 (0.1) & 11.1 (0.1) & 94.5 (0.2) & 42.5 (0.0)\\

 & yes & ols & 0.016 (0.075) & 10.8 (0.1) & 10.8 (0.1) & 94.8 (0.2) & 42.5 (0.0)\\

 & no & lasso & 0.582 (0.069) & 10.2 (0.0) & 10.2 (0.0) & 95.2 (0.1) & 40.9 (0.0)\\

 & yes & lasso & 0.533 (0.071) & 10.1 (0.1) & 10.1 (0.1) & 95.6 (0.1) & 40.9 (0.0)\\

 & no & lasso+ols & 1.641 (0.071) & 9.9 (0.0) & 10.0 (0.0) & 92.5 (0.2) & 37.0 (0.0)\\

\multirow{-8}{1.4cm}{\centering\arraybackslash Hybrid, unequal} & yes & lasso+ols & 1.627 (0.069) & 9.8 (0.1) & 10.0 (0.1) & 92.9 (0.2) & 36.9 (0.0)\\
\midrule

 & no & unadj & -0.077 (0.074) & 10.7 (0.1) & 10.7 (0.1) & 94.6 (0.2) & 41.5 (0.0)\\

 & yes & unadj & 0.149 (0.072) & 10.6 (0.0) & 10.6 (0.0) & 94.2 (0.2) & 40.3 (0.0)\\

 & no & ols & -0.076 (0.077) & 10.8 (0.1) & 10.8 (0.1) & 93.8 (0.2) & 40.9 (0.0)\\

 & yes & ols & 0.151 (0.074) & 10.6 (0.1) & 10.6 (0.1) & 94.6 (0.2) & 40.8 (0.0)\\

 & no & lasso & -0.093 (0.075) & 10.5 (0.1) & 10.5 (0.1) & 94.0 (0.2) & 40.3 (0.0)\\

 & yes & lasso & 0.138 (0.078) & 10.5 (0.1) & 10.5 (0.1) & 94.4 (0.2) & 40.2 (0.0)\\

 & no & lasso+ols & -0.118 (0.073) & 10.6 (0.1) & 10.6 (0.1) & 92.0 (0.2) & 38.3 (0.0)\\

\multirow{-8}{1.4cm}{\centering\arraybackslash Triplet, equal} & yes & lasso+ols & 0.133 (0.075) & 10.6 (0.1) & 10.6 (0.1) & 92.2 (0.2) & 38.3 (0.0)\\
\midrule

 & no & unadj & 0.027 (0.074) & 10.7 (0.1) & 10.7 (0.1) & 94.7 (0.2) & 41.9 (0.0)\\

 & yes & unadj & -0.132 (0.070) & 10.2 (0.0) & 10.2 (0.0) & 94.1 (0.2) & 38.8 (0.0)\\

 & no & ols & 0.769 (0.071) & 10.3 (0.0) & 10.3 (0.0) & 94.2 (0.2) & 39.6 (0.0)\\

 & yes & ols & -0.114 (0.069) & 10.1 (0.1) & 10.1 (0.1) & 94.6 (0.2) & 39.6 (0.0)\\

 & no & lasso & 0.429 (0.076) & 10.5 (0.1) & 10.5 (0.1) & 94.3 (0.2) & 40.9 (0.0)\\

 & yes & lasso & 0.256 (0.071) & 10.0 (0.1) & 10.0 (0.1) & 95.7 (0.1) & 40.9 (0.0)\\

 & no & lasso+ols & 1.360 (0.077) & 10.6 (0.1) & 10.7 (0.1) & 92.6 (0.2) & 39.2 (0.0)\\

\multirow{-8}{1.4cm}{\centering\arraybackslash Triplet, unequal} & yes & lasso+ols & 1.155 (0.071) & 10.2 (0.1) & 10.3 (0.1) & 93.7 (0.2) & 39.2 (0.0)\\
\bottomrule
\end{tabular}
\begin{tablenotes}
\item Note: The numbers in brackets are the corresponding standard errors estimated using the bootstrap with 500 replications. Bias, SD, RMSE, CP, Length, and their standard errors are multiplied by 100.
\end{tablenotes}
\end{threeparttable}
\end{table}

\begin{table}[p]

\caption{\label{tab:sim_bias_400_1}Simulation results for OLS, Lasso, and Lasso + OLS, when $s=400$.}
\centering
\begin{threeparttable}
\begin{tabular}[t]{>{\centering\arraybackslash}p{1.4cm}ccrrrrr}
\toprule
\multicolumn{1}{c}{Scenario} & \multicolumn{1}{c}{Rerand.} & \multicolumn{1}{c}{Est.} & \multicolumn{1}{c}{Bias} & \multicolumn{1}{c}{SD} & \multicolumn{1}{c}{RMSE} & \multicolumn{1}{c}{CP} & \multicolumn{1}{c}{Length}\\
\midrule
 & no & unadj & -0.123 (0.113) & 16.2 (0.1) & 16.2 (0.1) & 95.2 (0.1) & 63.6 (0.0)\\

 & yes & unadj & 0.062 (0.117) & 16.3 (0.1) & 16.3 (0.1) & 94.6 (0.2) & 63.3 (0.0)\\

 & no & ols & -0.118 (0.111) & 16.5 (0.1) & 16.5 (0.1) & 95.0 (0.2) & 63.9 (0.0)\\

 & yes & ols & 0.063 (0.116) & 16.4 (0.1) & 16.4 (0.1) & 94.8 (0.2) & 63.9 (0.0)\\

 & no & lasso & -0.118 (0.096) & 14.6 (0.1) & 14.6 (0.1) & 95.4 (0.1) & 57.6 (0.0)\\

 & yes & lasso & 0.040 (0.107) & 14.7 (0.1) & 14.7 (0.1) & 95.0 (0.1) & 57.6 (0.0)\\

 & no & lasso+ols & -0.155 (0.100) & 14.2 (0.1) & 14.2 (0.1) & 91.4 (0.2) & 49.0 (0.0)\\

\multirow{-8}{1.4cm}{\centering\arraybackslash No block} & yes & lasso+ols & 0.030 (0.097) & 14.3 (0.1) & 14.3 (0.1) & 91.3 (0.2) & 49.0 (0.0)\\
\midrule

 & no & unadj & -0.016 (0.094) & 13.5 (0.1) & 13.5 (0.1) & 94.8 (0.2) & 52.8 (0.0)\\

 & yes & unadj & 0.140 (0.094) & 13.3 (0.1) & 13.3 (0.1) & 94.8 (0.2) & 51.8 (0.0)\\

 & no & ols & -0.026 (0.100) & 13.5 (0.1) & 13.5 (0.1) & 94.6 (0.2) & 52.3 (0.0)\\

 & yes & ols & 0.142 (0.093) & 13.3 (0.1) & 13.3 (0.1) & 95.0 (0.1) & 52.3 (0.0)\\

 & no & lasso & -0.029 (0.093) & 13.0 (0.1) & 13.0 (0.1) & 94.9 (0.2) & 50.8 (0.0)\\

 & yes & lasso & 0.135 (0.085) & 12.8 (0.1) & 12.8 (0.1) & 95.0 (0.2) & 50.8 (0.0)\\

 & no & lasso+ols & -0.051 (0.093) & 12.8 (0.1) & 12.8 (0.1) & 92.3 (0.2) & 45.5 (0.0)\\

\multirow{-8}{1.4cm}{\centering\arraybackslash Large, equal} & yes & lasso+ols & 0.080 (0.088) & 12.6 (0.1) & 12.6 (0.1) & 92.6 (0.2) & 45.5 (0.0)\\
\midrule

 & no & unadj & -0.150 (0.111) & 15.5 (0.1) & 15.5 (0.1) & 94.9 (0.2) & 61.0 (0.0)\\

 & yes & unadj & -0.097 (0.108) & 15.5 (0.1) & 15.5 (0.1) & 94.3 (0.2) & 59.6 (0.0)\\

 & no & ols & 0.750 (0.114) & 15.5 (0.1) & 15.5 (0.1) & 94.6 (0.2) & 60.3 (0.0)\\

 & yes & ols & -0.071 (0.115) & 15.5 (0.1) & 15.5 (0.1) & 94.6 (0.2) & 60.3 (0.0)\\

 & no & lasso & -0.330 (0.108) & 15.1 (0.1) & 15.1 (0.1) & 94.9 (0.2) & 59.2 (0.0)\\

 & yes & lasso & -0.241 (0.106) & 15.0 (0.1) & 15.0 (0.1) & 94.9 (0.2) & 59.2 (0.0)\\

 & no & lasso+ols & -0.472 (0.108) & 14.9 (0.1) & 14.9 (0.1) & 94.0 (0.2) & 56.7 (0.0)\\

\multirow{-8}{1.4cm}{\centering\arraybackslash Large, unequal} & yes & lasso+ols & -0.341 (0.103) & 14.9 (0.1) & 14.9 (0.1) & 94.0 (0.2) & 56.7 (0.0)\\
\midrule

 & no & unadj & -0.204 (0.112) & 15.8 (0.1) & 15.8 (0.1) & 95.0 (0.1) & 62.7 (0.0)\\

 & yes & unadj & 0.048 (0.118) & 15.9 (0.1) & 15.9 (0.1) & 94.7 (0.2) & 61.9 (0.0)\\

 & no & ols & -0.190 (0.110) & 15.9 (0.1) & 15.9 (0.1) & 94.8 (0.2) & 62.5 (0.0)\\

 & yes & ols & 0.049 (0.117) & 15.9 (0.1) & 15.9 (0.1) & 94.9 (0.2) & 62.5 (0.0)\\

 & no & lasso & -0.203 (0.110) & 14.9 (0.1) & 14.9 (0.1) & 94.9 (0.2) & 58.8 (0.0)\\

 & yes & lasso & 0.050 (0.102) & 15.1 (0.1) & 15.1 (0.1) & 95.0 (0.2) & 58.8 (0.0)\\

 & no & lasso+ols & -0.157 (0.104) & 14.6 (0.1) & 14.6 (0.1) & 91.5 (0.2) & 50.7 (0.0)\\

\multirow{-8}{1.4cm}{\centering\arraybackslash Small, equal} & yes & lasso+ols & 0.079 (0.110) & 14.6 (0.1) & 14.6 (0.1) & 91.5 (0.2) & 50.8 (0.0)\\
\midrule

 & no & unadj & 0.052 (0.114) & 16.3 (0.1) & 16.3 (0.1) & 94.9 (0.2) & 64.0 (0.0)\\

 & yes & unadj & 0.118 (0.116) & 16.4 (0.1) & 16.4 (0.1) & 94.2 (0.2) & 63.2 (0.0)\\

 & no & ols & 0.329 (0.118) & 16.5 (0.1) & 16.5 (0.1) & 94.8 (0.2) & 64.0 (0.0)\\

 & yes & ols & 0.128 (0.118) & 16.4 (0.1) & 16.4 (0.1) & 94.5 (0.2) & 64.0 (0.0)\\

 & no & lasso & 0.922 (0.110) & 15.4 (0.1) & 15.5 (0.1) & 95.0 (0.2) & 60.4 (0.0)\\

 & yes & lasso & 0.972 (0.109) & 15.6 (0.1) & 15.6 (0.1) & 94.3 (0.2) & 60.4 (0.0)\\

 & no & lasso+ols & 2.149 (0.104) & 15.2 (0.1) & 15.3 (0.1) & 92.3 (0.2) & 55.1 (0.0)\\

\multirow{-8}{1.4cm}{\centering\arraybackslash Small, unequal} & yes & lasso+ols & 2.072 (0.111) & 15.4 (0.1) & 15.5 (0.1) & 92.0 (0.2) & 55.1 (0.0)\\
\bottomrule
\end{tabular}
\begin{tablenotes}
\item Note: The numbers in brackets are the corresponding standard errors estimated using the bootstrap with 500 replications. Bias, SD, RMSE, CP, Length, and their standard errors are multiplied by 100.
\end{tablenotes}
\end{threeparttable}
\end{table}

\begin{table}[p]

\caption{\label{tab:sim_bias_400_2}Simulation results for OLS, Lasso, and Lasso + OLS, when $s=400$.}
\centering
\begin{threeparttable}
\begin{tabular}[t]{>{\centering\arraybackslash}p{1.4cm}ccrrrrr}
\toprule
\multicolumn{1}{c}{Scenario} & \multicolumn{1}{c}{Rerand.} & \multicolumn{1}{c}{Est.} & \multicolumn{1}{c}{Bias} & \multicolumn{1}{c}{SD} & \multicolumn{1}{c}{RMSE} & \multicolumn{1}{c}{CP} & \multicolumn{1}{c}{Length}\\
\midrule
 & no & unadj & -0.210 (0.097) & 14.0 (0.1) & 14.0 (0.1) & 94.7 (0.2) & 54.9 (0.0)\\

 & yes & unadj & -0.036 (0.096) & 14.1 (0.1) & 14.1 (0.1) & 94.5 (0.2) & 54.1 (0.0)\\

 & no & ols & -0.236 (0.100) & 14.1 (0.1) & 14.1 (0.1) & 94.4 (0.2) & 54.6 (0.0)\\

 & yes & ols & -0.035 (0.098) & 14.1 (0.1) & 14.1 (0.1) & 94.7 (0.2) & 54.6 (0.0)\\

 & no & lasso & -0.218 (0.102) & 13.7 (0.1) & 13.7 (0.1) & 94.7 (0.2) & 53.3 (0.0)\\

 & yes & lasso & -0.036 (0.101) & 13.7 (0.1) & 13.7 (0.1) & 94.8 (0.2) & 53.3 (0.0)\\

 & no & lasso+ols & -0.199 (0.098) & 13.6 (0.1) & 13.6 (0.1) & 92.6 (0.2) & 49.0 (0.0)\\

\multirow{-8}{1.4cm}{\centering\arraybackslash Hybrid, equal} & yes & lasso+ols & -0.003 (0.093) & 13.5 (0.1) & 13.5 (0.1) & 92.7 (0.2) & 49.0 (0.0)\\
\midrule

 & no & unadj & 0.167 (0.119) & 17.3 (0.1) & 17.3 (0.1) & 94.9 (0.2) & 67.5 (0.0)\\

 & yes & unadj & 0.046 (0.117) & 16.9 (0.1) & 16.9 (0.1) & 94.5 (0.2) & 65.6 (0.0)\\

 & no & ols & 0.398 (0.128) & 17.2 (0.1) & 17.2 (0.1) & 94.6 (0.2) & 66.4 (0.0)\\

 & yes & ols & 0.049 (0.121) & 16.9 (0.1) & 16.9 (0.1) & 94.7 (0.1) & 66.3 (0.0)\\

 & no & lasso & 1.049 (0.120) & 16.6 (0.1) & 16.6 (0.1) & 94.2 (0.2) & 63.9 (0.0)\\

 & yes & lasso & 0.946 (0.112) & 16.3 (0.1) & 16.3 (0.1) & 94.5 (0.2) & 63.8 (0.0)\\

 & no & lasso+ols & 2.225 (0.118) & 16.5 (0.1) & 16.7 (0.1) & 91.0 (0.2) & 57.8 (0.1)\\

\multirow{-8}{1.4cm}{\centering\arraybackslash Hybrid, unequal} & yes & lasso+ols & 2.045 (0.115) & 16.2 (0.1) & 16.3 (0.1) & 91.3 (0.2) & 57.8 (0.1)\\
\midrule

 & no & unadj & -0.040 (0.104) & 15.4 (0.1) & 15.4 (0.1) & 94.8 (0.2) & 59.9 (0.0)\\

 & yes & unadj & -0.359 (0.100) & 15.1 (0.1) & 15.1 (0.1) & 94.1 (0.2) & 57.6 (0.0)\\

 & no & ols & 0.026 (0.110) & 15.5 (0.1) & 15.5 (0.1) & 94.0 (0.2) & 58.5 (0.0)\\

 & yes & ols & -0.359 (0.111) & 15.1 (0.1) & 15.1 (0.1) & 94.6 (0.2) & 58.5 (0.0)\\

 & no & lasso & -0.038 (0.111) & 15.3 (0.1) & 15.3 (0.1) & 94.3 (0.2) & 58.7 (0.0)\\

 & yes & lasso & -0.358 (0.097) & 15.0 (0.1) & 15.0 (0.1) & 94.7 (0.2) & 58.7 (0.0)\\

 & no & lasso+ols & -0.088 (0.115) & 15.5 (0.1) & 15.5 (0.1) & 93.0 (0.2) & 57.1 (0.0)\\

\multirow{-8}{1.4cm}{\centering\arraybackslash Triplet, equal} & yes & lasso+ols & -0.383 (0.110) & 15.2 (0.1) & 15.2 (0.1) & 93.2 (0.2) & 57.1 (0.0)\\
\midrule

 & no & unadj & -0.088 (0.118) & 16.6 (0.1) & 16.6 (0.1) & 94.5 (0.2) & 64.9 (0.0)\\

 & yes & unadj & 0.054 (0.119) & 16.6 (0.1) & 16.6 (0.1) & 94.0 (0.2) & 63.5 (0.0)\\

 & no & ols & 0.814 (0.118) & 16.8 (0.1) & 16.8 (0.1) & 94.2 (0.2) & 64.3 (0.0)\\

 & yes & ols & 0.078 (0.122) & 16.6 (0.1) & 16.6 (0.1) & 94.3 (0.2) & 64.3 (0.0)\\

 & no & lasso & 0.718 (0.112) & 16.3 (0.1) & 16.3 (0.1) & 93.9 (0.2) & 62.2 (0.0)\\

 & yes & lasso & 0.927 (0.112) & 16.2 (0.1) & 16.3 (0.1) & 94.0 (0.2) & 62.1 (0.0)\\

 & no & lasso+ols & 2.018 (0.117) & 16.3 (0.1) & 16.4 (0.1) & 91.6 (0.2) & 58.6 (0.1)\\

\multirow{-8}{1.4cm}{\centering\arraybackslash Triplet, unequal} & yes & lasso+ols & 2.238 (0.116) & 16.3 (0.1) & 16.5 (0.1) & 91.3 (0.2) & 58.7 (0.1)\\
\bottomrule
\end{tabular}
\begin{tablenotes}
\item Note: The numbers in brackets are the corresponding standard errors estimated using the bootstrap with 500 replications. Bias, SD, RMSE, CP, Length, and their standard errors are multiplied by 100.
\end{tablenotes}
\end{threeparttable}
\end{table}

\subsubsection{Choice of tuning parameter}

In this section, we examine the performance of the Lasso-adjusted ATE estimator for different choices of the tuning parameter $\lambda$. We consider a scenario where $s=40$ is not very small. We also set $p=400$ and $n=300$. We generate potential outcomes by $Y_i(1)=Y_i(0)=B_i/M+\xr_i^\T \boldsymbol{\beta}-2\xr_{bc,i}^\T \boldsymbol{\beta} + \e_i$, where the first $s$ elements of $\boldsymbol{\beta}$ are generated from $U[-0.1,0.1]$ and the remaining elements are zero. The other setups closely resemble those described in Section \ref{sec:sim}.

For each data set, we utilize 100 values of $\lambda$ determined by the R function glmnet() as tuning parameters for the Lasso. Additionally, we employ the ``1se'' criterion in the R function cv.glmnet() to select the optimal tuning parameter for the Lasso. For each $\lambda$, we conduct 1000 simulations, perform inference, and calculate the corresponding RMSE, interval length, and coverage probability.

Figure \ref{fig:sim_lambda_40} displays the results. The RMSE and coverage probability are not sensitive to the choice of the tuning parameter $\lambda$ around the value selected by the ``1se'' criterion, while the interval length is more sensitive to the choice of $\lambda$.  Using the ``1se'' criterion to select $\lambda$ could ensure nominal coverage and reasonable efficiency.

\begin{figure}[p]
\centering
\includegraphics[width=1\textwidth]{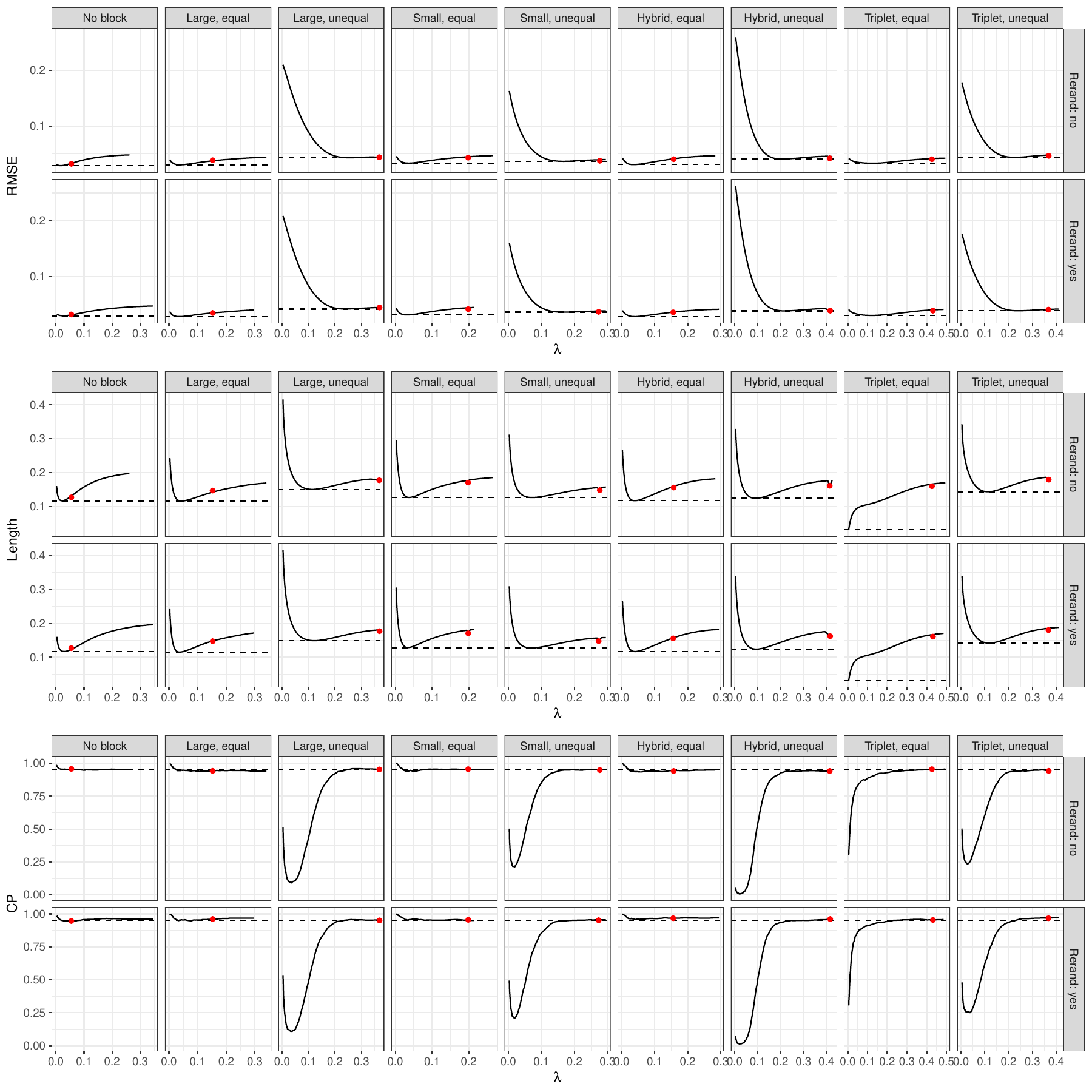}
\caption{Root Mean Squared Error (RMSE) of the Lasso-adjusted ATE estimator for various $\lambda$ values. The red point denotes the $\lambda$ chosen by the "1se" criterion.}
\label{fig:sim_lambda_40}
\end{figure}

\clearpage

\subsubsection{Lasso with forced adjustment for covariates in $\mathcal{K}$}

\zk{In this section, we consider the same settings as those in Section \ref{sec:sim}, while examining another approach to improve the finite sample performance of Lasso.
Specifically, we force Lasso to adjust all covariates in $\mathcal{K}$ by setting their corresponding $l_1$ penalties as 0.
Table~\ref{tab:sim_force} shows the results. 
The Lasso with forced adjustment does reduce 3\%--10\% RMSE compared to the Lasso alone. However, it is not as efficient as the Lasso with rerandomization in the design stage, which reduces 9\%--33\% RMSE compared to the Lasso alone.
}

\begin{table}[H]
\centering
\caption{\label{tab:sim_force}Simulation results for forcing Lasso to adjust all covariates in $\mathcal{K}$ by setting their corresponding $l_1$ penalties as 0.}
\centering
\begin{threeparttable}
\begin{tabular}[t]{>{\centering\arraybackslash}p{1.4cm}ccrrrrr}
\toprule
\multicolumn{1}{c}{Scenario} & \multicolumn{1}{c}{Rerand.} & \multicolumn{1}{c}{Est.} & \multicolumn{1}{c}{Bias} & \multicolumn{1}{c}{SD} & \multicolumn{1}{c}{RMSE} & \multicolumn{1}{c}{CP} & \multicolumn{1}{c}{Length}\\
\midrule
 & no & lasso & 0.1 (0.4) & 11.5 (0.2) & 11.5 (0.2) & 99.8 (0.1) & 65.9 (0.1)\\

 & yes & lasso & -0.1 (0.3) & 10.4 (0.2) & 10.4 (0.2) & 99.6 (0.2) & 66.0 (0.1)\\

 & no & lasso (force) & 0.2 (0.3) & 11.0 (0.2) & 11.0 (0.2) & 99.5 (0.2) & 64.3 (0.1)\\

\multirow{-4}{1.4cm}{\centering\arraybackslash No block} & yes & lasso (force) & -0.2 (0.3) & 11.1 (0.3) & 11.1 (0.3) & 99.4 (0.2) & 64.4 (0.1)\\
\midrule

 & no & lasso & 0.7 (0.5) & 16.4 (0.4) & 16.5 (0.4) & 98.2 (0.4) & 81.6 (0.1)\\

 & yes & lasso & 0.5 (0.5) & 14.4 (0.3) & 14.4 (0.3) & 99.9 (0.1) & 81.7 (0.1)\\

 & no & lasso (force) & 0.5 (0.5) & 15.5 (0.3) & 15.5 (0.3) & 99.0 (0.3) & 79.1 (0.1)\\

\multirow{-4}{1.4cm}{\centering\arraybackslash Large, equal} & yes & lasso (force) & 0.6 (0.5) & 15.2 (0.3) & 15.2 (0.3) & 99.5 (0.2) & 79.2 (0.1)\\
\midrule

 & no & lasso & 1.3 (0.8) & 25.1 (0.6) & 25.1 (0.6) & 97.1 (0.5) & 115.2 (0.3)\\

 & yes & lasso & 0.1 (0.6) & 20.1 (0.5) & 20.1 (0.5) & 99.4 (0.2) & 114.8 (0.2)\\

 & no & lasso (force) & 0.9 (0.7) & 23.1 (0.5) & 23.1 (0.5) & 98.1 (0.4) & 109.9 (0.2)\\

\multirow{-4}{1.4cm}{\centering\arraybackslash Large, unequal} & yes & lasso (force) & 0.1 (0.7) & 22.9 (0.5) & 22.9 (0.5) & 97.7 (0.5) & 109.4 (0.2)\\
\midrule

 & no & lasso & 0.2 (0.4) & 14.4 (0.3) & 14.4 (0.3) & 98.8 (0.3) & 73.0 (0.1)\\

 & yes & lasso & 0.5 (0.4) & 11.4 (0.2) & 11.4 (0.2) & 100.0 (0.0) & 73.2 (0.1)\\

 & no & lasso (force) & 0.1 (0.4) & 14.0 (0.3) & 14.0 (0.3) & 98.9 (0.3) & 72.0 (0.1)\\

\multirow{-4}{1.4cm}{\centering\arraybackslash Small, equal} & yes & lasso (force) & 0.4 (0.4) & 13.6 (0.3) & 13.6 (0.3) & 99.1 (0.3) & 72.1 (0.1)\\
\midrule

 & no & lasso & 0.9 (0.5) & 15.1 (0.3) & 15.1 (0.3) & 97.2 (0.5) & 66.7 (0.1)\\

 & yes & lasso & 0.8 (0.4) & 11.8 (0.3) & 11.8 (0.3) & 99.3 (0.3) & 66.8 (0.1)\\

 & no & lasso (force) & 0.7 (0.4) & 13.9 (0.3) & 13.9 (0.3) & 97.3 (0.5) & 63.2 (0.1)\\

\multirow{-4}{1.4cm}{\centering\arraybackslash Small, unequal} & yes & lasso (force) & 0.4 (0.5) & 13.7 (0.3) & 13.7 (0.3) & 96.9 (0.6) & 63.2 (0.1)\\
\midrule

 & no & lasso & 0.3 (0.4) & 13.5 (0.3) & 13.5 (0.3) & 99.7 (0.2) & 81.9 (0.1)\\

 & yes & lasso & 0.2 (0.3) & 11.4 (0.2) & 11.4 (0.2) & 100.0 (0.0) & 81.7 (0.1)\\

 & no & lasso (force) & 0.1 (0.4) & 12.8 (0.3) & 12.8 (0.3) & 99.9 (0.1) & 81.3 (0.1)\\

\multirow{-4}{1.4cm}{\centering\arraybackslash Hybrid, equal} & yes & lasso (force) & 0.2 (0.4) & 12.8 (0.3) & 12.8 (0.3) & 99.7 (0.2) & 81.2 (0.1)\\
\midrule

 & no & lasso & 2.6 (0.5) & 16.9 (0.4) & 17.1 (0.4) & 97.0 (0.5) & 79.4 (0.1)\\

 & yes & lasso & 1.1 (0.4) & 13.3 (0.3) & 13.4 (0.3) & 99.5 (0.2) & 79.4 (0.1)\\

 & no & lasso (force) & 2.7 (0.5) & 15.9 (0.4) & 16.2 (0.3) & 98.3 (0.4) & 77.6 (0.2)\\

\multirow{-4}{1.4cm}{\centering\arraybackslash Hybrid, unequal} & yes & lasso (force) & 0.8 (0.5) & 15.4 (0.4) & 15.4 (0.4) & 98.4 (0.4) & 77.6 (0.1)\\
\midrule

 & no & lasso & 0.9 (0.4) & 13.3 (0.4) & 13.3 (0.4) & 97.7 (0.5) & 59.8 (0.2)\\

 & yes & lasso & 0.4 (0.3) & 9.6 (0.3) & 9.6 (0.3) & 99.9 (0.1) & 59.7 (0.2)\\

 & no & lasso (force) & 1.1 (0.4) & 13.0 (0.3) & 13.0 (0.3) & 97.0 (0.5) & 56.0 (0.2)\\

\multirow{-4}{1.4cm}{\centering\arraybackslash Triplet, equal} & yes & lasso (force) & 0.5 (0.4) & 12.5 (0.3) & 12.5 (0.3) & 97.6 (0.5) & 55.7 (0.2)\\
\midrule

 & no & lasso & 1.7 (0.5) & 16.5 (0.3) & 16.5 (0.3) & 96.0 (0.6) & 66.8 (0.2)\\

 & yes & lasso & 1.2 (0.4) & 11.1 (0.3) & 11.1 (0.3) & 99.7 (0.2) & 67.3 (0.2)\\

 & no & lasso (force) & 1.4 (0.5) & 14.8 (0.3) & 14.9 (0.3) & 95.6 (0.7) & 60.8 (0.2)\\

\multirow{-4}{1.4cm}{\centering\arraybackslash Triplet, unequal} & yes & lasso (force) & 0.4 (0.5) & 14.9 (0.3) & 14.9 (0.3) & 95.3 (0.6) & 60.9 (0.2)\\
\bottomrule
\end{tabular}
\begin{tablenotes}
\item Note: The numbers in brackets are the corresponding standard errors estimated using the bootstrap with 500 replications. Bias, SD, RMSE, CP, Length, and their standard errors are multiplied by 100.
\end{tablenotes}
\end{threeparttable}
\end{table}

\subsection{Performance under various setups}

\subsubsection{Block-specific coefficients}

In this section, we consider block-specific coefficients for the relationship between potential outcomes and covariates in the data-generating process. Specifically, the potential outcomes are generated by the following equation: $Y_i(z)=(B_i/M)^{2z+1}+\xr_i^\T \boldsymbol{\beta}_{[m]}(z)-2\xr_{bc,i}^\T \boldsymbol{\beta}_{[m]}(z) + \e_i(z)$, $i=1,...,n$, $z = 0, 1$, $i\in [m]$. Here, the first $s$ elements of $\boldsymbol{\beta}_{[m]}(z)$ are generated from $U(0,1)$ and the remaining elements are zero. We set $n=300$. The remaining configurations are the same as those detailed in Section \ref{sec:sim}. Table \ref{tab:sim_heter} shows the simulation results. All methods yield consistent and rational results, supporting the same conclusion presented in the main text.

\begin{table}[p]

\caption{\label{tab:sim_heter}Simulation results for block-specific coefficients.}
\centering
\begin{threeparttable}
\begin{tabular}[t]{>{\centering\arraybackslash}p{1.4cm}ccrrrrr}
\toprule
\multicolumn{1}{c}{Scenario} & \multicolumn{1}{c}{Rerand.} & \multicolumn{1}{c}{Est.} & \multicolumn{1}{c}{Bias} & \multicolumn{1}{c}{SD} & \multicolumn{1}{c}{RMSE} & \multicolumn{1}{c}{CP} & \multicolumn{1}{c}{Length}\\
\midrule
 & no & unadj & -0.5 (1.0) & 31.6 (0.8) & 31.6 (0.8) & 96.6 (0.6) & 135.6 (0.1)\\

 & yes & unadj & -0.1 (0.5) & 16.8 (0.4) & 16.8 (0.4) & 98.4 (0.4) & 82.9 (0.1)\\

 & no & lasso & 0.2 (0.4) & 12.5 (0.3) & 12.5 (0.3) & 99.6 (0.2) & 73.7 (0.1)\\

\multirow{-4}{1.4cm}{\centering\arraybackslash No block} & yes & lasso & -0.4 (0.3) & 11.1 (0.3) & 11.2 (0.3) & 99.9 (0.1) & 74.0 (0.1)\\
\midrule

 & no & unadj & -1.4 (0.8) & 27.9 (0.6) & 27.9 (0.6) & 96.0 (0.6) & 118.1 (0.1)\\

 & yes & unadj & 0.7 (0.7) & 20.7 (0.5) & 20.7 (0.5) & 97.5 (0.5) & 89.1 (0.1)\\

 & no & lasso & -1.3 (0.4) & 11.9 (0.3) & 12.0 (0.3) & 99.0 (0.3) & 62.6 (0.1)\\

\multirow{-4}{1.4cm}{\centering\arraybackslash Large, equal} & yes & lasso & 0.0 (0.3) & 10.5 (0.2) & 10.5 (0.2) & 99.8 (0.1) & 62.5 (0.1)\\
\midrule

 & no & unadj & 1.0 (1.2) & 41.7 (1.0) & 41.7 (1.0) & 97.1 (0.5) & 186.8 (0.4)\\

 & yes & unadj & -0.9 (1.0) & 31.6 (0.6) & 31.6 (0.6) & 96.5 (0.6) & 134.7 (0.3)\\

 & no & lasso & 2.1 (0.8) & 25.5 (0.6) & 25.6 (0.6) & 97.6 (0.5) & 123.4 (0.3)\\

\multirow{-4}{1.4cm}{\centering\arraybackslash Large, unequal} & yes & lasso & -0.0 (0.7) & 22.2 (0.5) & 22.2 (0.5) & 99.9 (0.1) & 123.5 (0.3)\\
\midrule

 & no & unadj & 1.0 (1.1) & 34.9 (0.8) & 34.9 (0.8) & 95.5 (0.6) & 142.7 (0.1)\\

 & yes & unadj & 0.1 (0.7) & 21.2 (0.5) & 21.2 (0.5) & 96.9 (0.5) & 92.6 (0.1)\\

 & no & lasso & 0.8 (0.5) & 16.5 (0.4) & 16.6 (0.4) & 98.5 (0.4) & 77.8 (0.1)\\

\multirow{-4}{1.4cm}{\centering\arraybackslash Small, equal} & yes & lasso & 0.0 (0.4) & 14.0 (0.3) & 14.0 (0.3) & 99.4 (0.2) & 77.7 (0.1)\\
\midrule

 & no & unadj & -1.2 (1.2) & 36.6 (0.8) & 36.6 (0.8) & 94.6 (0.7) & 149.0 (0.2)\\

 & yes & unadj & 0.6 (0.8) & 24.3 (0.6) & 24.3 (0.6) & 94.7 (0.7) & 100.0 (0.2)\\

 & no & lasso & 1.3 (0.6) & 18.9 (0.4) & 19.0 (0.4) & 96.9 (0.5) & 84.2 (0.1)\\

\multirow{-4}{1.4cm}{\centering\arraybackslash Small, unequal} & yes & lasso & 1.3 (0.5) & 15.0 (0.4) & 15.0 (0.4) & 99.4 (0.2) & 84.2 (0.2)\\
\midrule

 & no & unadj & 0.7 (1.2) & 38.2 (0.8) & 38.2 (0.8) & 97.1 (0.5) & 162.3 (0.1)\\

 & yes & unadj & -0.7 (0.8) & 23.7 (0.5) & 23.7 (0.5) & 97.3 (0.5) & 104.8 (0.1)\\

 & no & lasso & 0.2 (0.5) & 16.2 (0.3) & 16.2 (0.3) & 98.7 (0.4) & 83.5 (0.1)\\

\multirow{-4}{1.4cm}{\centering\arraybackslash Hybrid, equal} & yes & lasso & -0.1 (0.4) & 14.1 (0.3) & 14.1 (0.3) & 99.2 (0.3) & 83.4 (0.1)\\
\midrule

 & no & unadj & 1.4 (1.3) & 40.2 (0.9) & 40.3 (0.9) & 95.9 (0.6) & 166.0 (0.2)\\

 & yes & unadj & 1.6 (0.8) & 27.7 (0.6) & 27.7 (0.6) & 95.5 (0.7) & 114.3 (0.2)\\

 & no & lasso & 2.7 (0.7) & 20.3 (0.4) & 20.5 (0.4) & 97.8 (0.4) & 93.4 (0.2)\\

\multirow{-4}{1.4cm}{\centering\arraybackslash Hybrid, unequal} & yes & lasso & 2.2 (0.5) & 17.1 (0.4) & 17.2 (0.4) & 99.4 (0.3) & 93.4 (0.2)\\
\midrule

 & no & unadj & -1.6 (1.3) & 41.2 (0.9) & 41.2 (0.9) & 94.2 (0.7) & 158.2 (0.3)\\

 & yes & unadj & -0.1 (0.8) & 25.8 (0.6) & 25.8 (0.6) & 94.1 (0.8) & 102.1 (0.4)\\

 & no & lasso & -1.0 (0.6) & 20.1 (0.5) & 20.1 (0.5) & 94.7 (0.7) & 80.0 (0.2)\\

\multirow{-4}{1.4cm}{\centering\arraybackslash Triplet, equal} & yes & lasso & 0.1 (0.5) & 15.0 (0.4) & 15.0 (0.4) & 98.9 (0.3) & 79.9 (0.2)\\
\midrule

 & no & unadj & 0.7 (1.2) & 37.3 (0.8) & 37.2 (0.8) & 94.4 (0.7) & 147.2 (0.3)\\

 & yes & unadj & 1.0 (0.7) & 22.2 (0.5) & 22.2 (0.5) & 93.7 (0.7) & 87.8 (0.5)\\

 & no & lasso & 1.7 (0.7) & 23.8 (0.6) & 23.9 (0.6) & 95.7 (0.7) & 96.2 (0.3)\\

\multirow{-4}{1.4cm}{\centering\arraybackslash Triplet, unequal} & yes & lasso & 1.4 (0.5) & 16.5 (0.4) & 16.5 (0.4) & 99.5 (0.2) & 96.3 (0.3)\\
\bottomrule
\end{tabular}
\begin{tablenotes}
\item Note: The numbers in brackets are the corresponding standard errors estimated using the bootstrap with 500 replications. Bias, SD, RMSE, CP, Length, and their standard errors are multiplied by 100.
\end{tablenotes}
\end{threeparttable}
\end{table}

\subsubsection{Constant treatment effect}

In this section, we investigate the finite sample performance of the proposed variance estimator. Our theoretical analysis indicates that the proposed variance estimator is generally conservative. It is consistent when the treatment effects within each coarse block remain constant, and the average treatment effects across fine blocks also remain constant. These conditions are satisfied when $Y_i(0) = Y_i(1)$ for all $i$. Therefore, we set $Y_i(0) = Y_i(1)$ for all $i$, while maintaining the other settings consistent with those in Section \ref{sec:sim}.
Table \ref{tab:sim_varest} displays the results. The empirical coverage probabilities of all methods closely align with the nominal level of 95\%. In some instances, when rerandomization is employed, the coverage may slightly exceed 95\%, reaching around 97\%.


\begin{table}[p]

\caption{\label{tab:sim_varest}Simulation results in scenarios where $Y_i(1) = Y_i(0)$ for all $i$.}
\centering
\begin{threeparttable}
\begin{tabular}[t]{>{\centering\arraybackslash}p{1.4cm}ccrrrrr}
\toprule
\multicolumn{1}{c}{Scenario} & \multicolumn{1}{c}{Rerand.} & \multicolumn{1}{c}{Est.} & \multicolumn{1}{c}{Bias} & \multicolumn{1}{c}{SD} & \multicolumn{1}{c}{RMSE} & \multicolumn{1}{c}{CP} & \multicolumn{1}{c}{Length}\\
\midrule
 & no & unadj & -0.1 (1.2) & 38.8 (0.9) & 38.7 (0.9) & 94.7 (0.7) & 149.6 (0.0)\\

 & yes & unadj & -0.4 (0.7) & 22.7 (0.5) & 22.7 (0.5) & 94.1 (0.8) & 85.1 (0.0)\\

 & no & lasso & 0.5 (0.6) & 17.9 (0.4) & 17.9 (0.4) & 94.3 (0.8) & 69.6 (0.0)\\

\multirow{-4}{1.4cm}{\centering\arraybackslash No block} & yes & lasso & -0.2 (0.5) & 16.8 (0.4) & 16.8 (0.4) & 97.2 (0.5) & 69.7 (0.0)\\
\midrule

 & no & unadj & -0.7 (1.0) & 32.2 (0.7) & 32.2 (0.7) & 95.1 (0.7) & 129.4 (0.0)\\

 & yes & unadj & 0.9 (0.7) & 22.6 (0.5) & 22.6 (0.5) & 94.6 (0.7) & 85.7 (0.0)\\

 & no & lasso & -0.1 (0.5) & 16.3 (0.4) & 16.3 (0.4) & 94.3 (0.7) & 62.1 (0.0)\\

\multirow{-4}{1.4cm}{\centering\arraybackslash Large, equal} & yes & lasso & 0.8 (0.5) & 15.1 (0.3) & 15.1 (0.3) & 95.7 (0.7) & 62.1 (0.0)\\
\midrule

 & no & unadj & -0.6 (1.0) & 34.7 (0.8) & 34.7 (0.8) & 94.0 (0.8) & 135.2 (0.2)\\

 & yes & unadj & -0.6 (1.0) & 31.6 (0.7) & 31.6 (0.7) & 94.2 (0.8) & 119.7 (0.1)\\

 & no & lasso & 0.1 (0.6) & 19.0 (0.4) & 19.0 (0.4) & 94.4 (0.8) & 74.1 (0.2)\\

\multirow{-4}{1.4cm}{\centering\arraybackslash Large, unequal} & yes & lasso & -0.3 (0.5) & 16.9 (0.3) & 16.9 (0.3) & 97.3 (0.5) & 74.6 (0.2)\\
\midrule

 & no & unadj & 0.3 (1.3) & 41.3 (0.9) & 41.3 (0.9) & 95.2 (0.7) & 163.4 (0.1)\\

 & yes & unadj & -0.1 (0.9) & 27.0 (0.6) & 27.0 (0.6) & 94.6 (0.7) & 104.5 (0.1)\\

 & no & lasso & -0.4 (0.5) & 16.7 (0.4) & 16.7 (0.4) & 95.5 (0.7) & 67.9 (0.1)\\

\multirow{-4}{1.4cm}{\centering\arraybackslash Small, equal} & yes & lasso & -0.5 (0.5) & 16.1 (0.4) & 16.1 (0.4) & 95.9 (0.6) & 67.8 (0.1)\\
\midrule

 & no & unadj & -0.5 (1.3) & 40.6 (0.9) & 40.6 (0.9) & 96.2 (0.6) & 165.1 (0.2)\\

 & yes & unadj & 0.5 (0.8) & 27.5 (0.6) & 27.5 (0.6) & 92.6 (0.8) & 101.1 (0.3)\\

 & no & lasso & -0.5 (0.6) & 19.3 (0.4) & 19.3 (0.4) & 96.0 (0.6) & 76.1 (0.1)\\

\multirow{-4}{1.4cm}{\centering\arraybackslash Small, unequal} & yes & lasso & 0.2 (0.5) & 17.0 (0.4) & 17.0 (0.4) & 97.3 (0.5) & 76.0 (0.1)\\
\midrule

 & no & unadj & 0.6 (1.3) & 39.6 (0.9) & 39.6 (0.9) & 93.6 (0.8) & 153.7 (0.1)\\

 & yes & unadj & 0.4 (0.9) & 25.5 (0.6) & 25.5 (0.6) & 94.9 (0.7) & 97.6 (0.1)\\

 & no & lasso & 0.3 (0.5) & 17.7 (0.4) & 17.7 (0.4) & 95.5 (0.7) & 68.8 (0.1)\\

\multirow{-4}{1.4cm}{\centering\arraybackslash Hybrid, equal} & yes & lasso & 0.4 (0.5) & 16.7 (0.4) & 16.7 (0.4) & 95.8 (0.6) & 68.8 (0.1)\\
\midrule

 & no & unadj & 1.6 (1.6) & 49.3 (1.0) & 49.3 (1.0) & 96.2 (0.6) & 200.7 (0.2)\\

 & yes & unadj & 0.8 (1.1) & 38.1 (0.9) & 38.0 (0.9) & 94.5 (0.7) & 144.8 (0.2)\\

 & no & lasso & 1.8 (0.8) & 25.4 (0.6) & 25.5 (0.6) & 94.4 (0.7) & 98.6 (0.2)\\

\multirow{-4}{1.4cm}{\centering\arraybackslash Hybrid, unequal} & yes & lasso & 1.0 (0.7) & 22.7 (0.4) & 22.7 (0.5) & 96.6 (0.6) & 98.8 (0.2)\\
\midrule

 & no & unadj & 0.8 (0.9) & 31.6 (0.7) & 31.6 (0.7) & 94.4 (0.7) & 121.4 (0.2)\\

 & yes & unadj & -0.1 (0.7) & 21.8 (0.4) & 21.7 (0.4) & 92.7 (0.8) & 81.8 (0.3)\\

 & no & lasso & 0.2 (0.4) & 13.9 (0.3) & 13.9 (0.3) & 93.7 (0.7) & 51.4 (0.2)\\

\multirow{-4}{1.4cm}{\centering\arraybackslash Triplet, equal} & yes & lasso & 0.0 (0.3) & 10.2 (0.2) & 10.1 (0.2) & 98.4 (0.4) & 51.6 (0.2)\\
\midrule

 & no & unadj & 1.3 (1.2) & 40.0 (1.0) & 40.0 (1.0) & 94.6 (0.8) & 158.4 (0.2)\\

 & yes & unadj & 0.6 (0.7) & 22.7 (0.5) & 22.7 (0.5) & 91.5 (0.9) & 82.9 (0.5)\\

 & no & lasso & 3.2 (0.6) & 20.7 (0.5) & 20.9 (0.5) & 95.5 (0.7) & 81.4 (0.3)\\

\multirow{-4}{1.4cm}{\centering\arraybackslash Triplet, unequal} & yes & lasso & 1.8 (0.5) & 14.0 (0.3) & 14.1 (0.3) & 99.2 (0.3) & 81.3 (0.3)\\
\bottomrule
\end{tabular}
\begin{tablenotes}
\item Note: The numbers in brackets are the corresponding standard errors estimated using the bootstrap with 500 replications. Bias, SD, RMSE, CP, Length, and their standard errors are multiplied by 100.
\end{tablenotes}
\end{threeparttable}
\end{table}

\subsubsection{Larger sample size}

In this section, we consider a larger sample size $n=600$, while keeping the other settings same as those in Section \ref{sec:sim}.
Figure~\ref{fig:sim_dist600} and Table~\ref{tab:sim_600} show the simulation results for $n=600$. The findings align closely with those discussed in the main text.

\begin{figure}[p]
\centering
\includegraphics[scale=0.6]{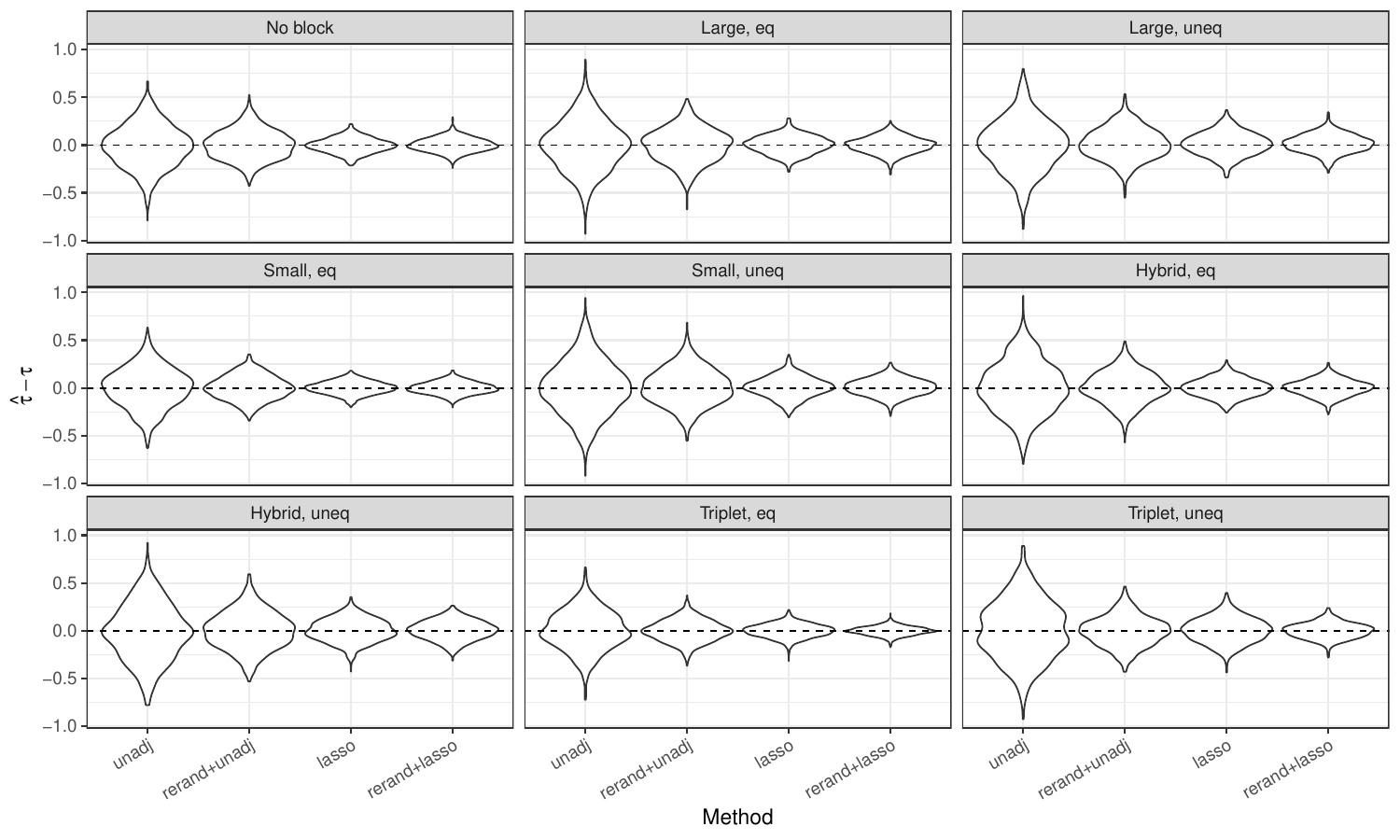}
\caption{Distributions of the average treatment effect estimators minus the true value of the average treatment effect for different scenarios when $n=600$.}
\label{fig:sim_dist600}
\end{figure}

\begin{table}[p]

\caption{\label{tab:sim_600}Simulation results for different scenarios when $n=600$.}
\centering
\begin{threeparttable}
\begin{tabular}[t]{>{\centering\arraybackslash}p{1.4cm}ccrrrrr}
\toprule
\multicolumn{1}{c}{Scenario} & \multicolumn{1}{c}{Rerand.} & \multicolumn{1}{c}{Est.} & \multicolumn{1}{c}{Bias} & \multicolumn{1}{c}{SD} & \multicolumn{1}{c}{RMSE} & \multicolumn{1}{c}{CP} & \multicolumn{1}{c}{Length}\\
\midrule
 & no & unadj & -1.0 (0.7) & 21.8 (0.5) & 21.8 (0.5) & 97.0 (0.6) & 93.7 (0.1)\\

 & yes & unadj & 0.5 (0.5) & 15.1 (0.3) & 15.1 (0.3) & 97.5 (0.5) & 68.4 (0.0)\\

 & no & lasso & -0.4 (0.3) & 7.7 (0.2) & 7.7 (0.2) & 100.0 (0.0) & 47.6 (0.0)\\

\multirow{-4}{1.4cm}{\centering\arraybackslash No block} & yes & lasso & 0.2 (0.2) & 7.2 (0.2) & 7.2 (0.2) & 99.6 (0.2) & 47.5 (0.0)\\
\midrule

 & no & unadj & 0.3 (0.8) & 25.7 (0.6) & 25.7 (0.6) & 96.8 (0.5) & 109.3 (0.0)\\

 & yes & unadj & -0.4 (0.5) & 17.6 (0.4) & 17.6 (0.4) & 96.9 (0.5) & 76.7 (0.0)\\

 & no & lasso & 0.4 (0.3) & 9.3 (0.2) & 9.3 (0.2) & 99.2 (0.3) & 52.3 (0.0)\\

\multirow{-4}{1.4cm}{\centering\arraybackslash Large, equal} & yes & lasso & 0.0 (0.3) & 8.5 (0.2) & 8.5 (0.2) & 99.6 (0.2) & 52.4 (0.0)\\
\midrule

 & no & unadj & 0.3 (0.9) & 26.5 (0.6) & 26.4 (0.6) & 94.1 (0.8) & 105.7 (0.1)\\

 & yes & unadj & 1.4 (0.5) & 16.2 (0.3) & 16.2 (0.3) & 97.7 (0.5) & 70.9 (0.1)\\

 & no & lasso & 1.3 (0.4) & 12.0 (0.3) & 12.1 (0.3) & 97.4 (0.5) & 55.6 (0.1)\\

\multirow{-4}{1.4cm}{\centering\arraybackslash Large, unequal} & yes & lasso & 1.6 (0.3) & 9.6 (0.2) & 9.7 (0.2) & 99.2 (0.3) & 55.7 (0.1)\\
\midrule

 & no & unadj & 0.0 (0.7) & 20.5 (0.4) & 20.5 (0.4) & 95.9 (0.6) & 83.1 (0.0)\\

 & yes & unadj & 0.2 (0.4) & 12.3 (0.2) & 12.3 (0.2) & 97.6 (0.5) & 53.0 (0.0)\\

 & no & lasso & 0.1 (0.2) & 6.7 (0.1) & 6.7 (0.1) & 99.8 (0.1) & 37.1 (0.0)\\

\multirow{-4}{1.4cm}{\centering\arraybackslash Small, equal} & yes & lasso & -0.2 (0.2) & 6.2 (0.1) & 6.2 (0.1) & 99.9 (0.1) & 37.1 (0.0)\\
\midrule

 & no & unadj & 0.3 (0.8) & 27.4 (0.6) & 27.4 (0.6) & 96.2 (0.7) & 115.5 (0.1)\\

 & yes & unadj & 1.2 (0.6) & 18.7 (0.4) & 18.7 (0.4) & 97.0 (0.5) & 80.7 (0.1)\\

 & no & lasso & 0.8 (0.3) & 10.7 (0.2) & 10.8 (0.2) & 98.8 (0.3) & 55.9 (0.1)\\

\multirow{-4}{1.4cm}{\centering\arraybackslash Small, unequal} & yes & lasso & 0.7 (0.3) & 8.7 (0.2) & 8.8 (0.2) & 99.9 (0.1) & 55.8 (0.1)\\
\midrule

 & no & unadj & 0.6 (0.9) & 26.6 (0.6) & 26.6 (0.6) & 95.9 (0.6) & 108.6 (0.0)\\

 & yes & unadj & 0.3 (0.5) & 16.4 (0.4) & 16.4 (0.4) & 96.6 (0.6) & 69.5 (0.1)\\

 & no & lasso & 0.3 (0.3) & 9.3 (0.2) & 9.3 (0.2) & 99.5 (0.2) & 48.4 (0.0)\\

\multirow{-4}{1.4cm}{\centering\arraybackslash Hybrid, equal} & yes & lasso & 0.1 (0.3) & 8.4 (0.2) & 8.4 (0.2) & 99.4 (0.2) & 48.5 (0.0)\\
\midrule

 & no & unadj & -1.2 (0.9) & 29.0 (0.6) & 29.0 (0.6) & 96.2 (0.6) & 117.6 (0.1)\\

 & yes & unadj & 0.2 (0.6) & 19.4 (0.4) & 19.4 (0.4) & 94.7 (0.7) & 79.9 (0.1)\\

 & no & lasso & 0.3 (0.4) & 11.9 (0.3) & 11.9 (0.3) & 97.7 (0.5) & 55.5 (0.1)\\

\multirow{-4}{1.4cm}{\centering\arraybackslash Hybrid, unequal} & yes & lasso & 0.6 (0.3) & 10.0 (0.2) & 10.0 (0.2) & 99.9 (0.1) & 55.6 (0.1)\\
\midrule

 & no & unadj & 0.3 (0.7) & 20.6 (0.5) & 20.5 (0.5) & 95.7 (0.7) & 82.4 (0.1)\\

 & yes & unadj & -0.2 (0.4) & 11.1 (0.2) & 11.1 (0.2) & 95.3 (0.7) & 46.8 (0.2)\\

 & no & lasso & -0.0 (0.2) & 7.1 (0.2) & 7.1 (0.2) & 98.2 (0.4) & 34.4 (0.1)\\

\multirow{-4}{1.4cm}{\centering\arraybackslash Triplet, equal} & yes & lasso & -0.2 (0.1) & 4.6 (0.1) & 4.6 (0.1) & 100.0 (0.0) & 34.6 (0.1)\\
\midrule

 & no & unadj & 0.7 (1.0) & 30.2 (0.6) & 30.2 (0.6) & 95.2 (0.7) & 118.6 (0.1)\\

 & yes & unadj & -0.4 (0.5) & 15.4 (0.3) & 15.4 (0.3) & 93.3 (0.8) & 59.8 (0.3)\\

 & no & lasso & 0.6 (0.4) & 13.0 (0.3) & 13.0 (0.3) & 96.5 (0.6) & 53.7 (0.1)\\

\multirow{-4}{1.4cm}{\centering\arraybackslash Triplet, unequal} & yes & lasso & 0.1 (0.2) & 8.0 (0.2) & 8.0 (0.2) & 100.0 (0.0) & 53.8 (0.1)\\
\bottomrule
\end{tabular}
\begin{tablenotes}
\item Note: The numbers in brackets are the corresponding standard errors estimated using the bootstrap with 500 replications. Bias, SD, RMSE, CP, Length, and their standard errors are multiplied by 100.
\end{tablenotes}
\end{threeparttable}
\end{table}

\subsubsection{Two semi-synthetic data sets}

To evaluate the repeated sampling properties of different estimators \zhu{on the data set in Section \ref{sec:rd_ok}}, which depend on all the potential outcomes, we generate synthetic data based on the experimental data. We use Lasso to fit two sparse linear models for the treatment and control groups, respectively, and impute the unobserved potential outcomes using the fitted models. We use the same blocking and propensity scores for each block as those in the original experiment and implement either stratified randomization or stratified rerandomization in the design stage. For rerandomization, we use the following covariates: age, high school grades, and whether the students correctly answered two tests about the scholarship formula. We replicate the simulation 1000 times on this semi-synthetic data set.
Table \ref{tab:rd} shows the results.
Rerandomization is the preferable strategy, and the Lasso-adjusted estimator is superior to the unadjusted one. Specifically, the combination of rerandomization and Lasso adjustment reduces the standard deviation by 30\% and decreases the mean confidence interval lengths by 19\%.

To further examine the repeated sampling properties of the Lasso-adjusted estimator \zhu{on the data set in Section \ref{sec:rd_fish}}, we use the same approach
\zhu{as the last paragraph} to impute the unobserved potential outcomes. We use the same blocking and propensity scores for each block as those in the matched data set and implement stratified randomization 1000 times on this semi-synthetic data set.
Table \ref{tab:rd} shows the results.
The distribution of the Lasso-adjusted estimator is more concentrated than that of the unadjusted estimator. Moreover, the Lasso-adjusted estimator reduces the standard deviation by 32\% and decreases the mean confidence interval length by 9\%. Thus, the Lasso-adjusted estimator improves both the estimation and inference efficiencies.

\begin{table}[p]

\caption{\label{tab:rd}Simulation results for two semi-synthetic data sets}
\centering
\begin{threeparttable}
\begin{tabular}[t]{ccrrrrr}
\toprule
\multicolumn{1}{c}{Rerand.} & \multicolumn{1}{c}{Est.} & \multicolumn{1}{c}{Bias} & \multicolumn{1}{c}{SD} & \multicolumn{1}{c}{RMSE} & \multicolumn{1}{c}{CP} & \multicolumn{1}{c}{Length}\\
\midrule
\addlinespace[0.3em]
\multicolumn{7}{c}{\textbf{OK data}}\\
\hspace{1em}no & unadj & -0.12 (0.15) & 4.61 (0.11) & 4.61 (0.11) & 97.40 (0.51) & 21.03 (0.03)\\
\hspace{1em}yes & unadj & 0.26 (0.14) & 4.38 (0.10) & 4.39 (0.10) & 97.90 (0.43) & 20.43 (0.03)\\
\hspace{1em}no & lasso & -0.31 (0.11) & 3.42 (0.08) & 3.43 (0.08) & 98.40 (0.41) & 17.25 (0.04)\\
\hspace{1em}yes & lasso & -0.01 (0.10) & 3.21 (0.07) & 3.21 (0.07) & 99.40 (0.23) & 17.14 (0.04)\\
\midrule
\addlinespace[0.3em]
\multicolumn{7}{c}{\textbf{Fish data}}\\
\hspace{1em}- & unadj & 0.12 (0.14) & 4.31 (0.09) & 4.31 (0.09) & 98.30 (0.41) & 19.94 (0.07)\\
\hspace{1em}- & lasso & 0.15 (0.11) & 3.68 (0.08) & 3.68 (0.08) & 98.80 (0.33) & 18.02 (0.07)\\
\bottomrule
\end{tabular}
\begin{tablenotes}
\item Note: The numbers in brackets are the corresponding standard errors estimated using the bootstrap with 500 replications. Bias, SD, RMSE, CP, Length, and their standard errors are multiplied by 100.
\end{tablenotes}
\end{threeparttable}
\end{table}

\clearpage

\subsubsection{Power issue}

\zk{In this section, we consider the same settings as those in Section \ref{sec:sim}, except we set different values for $\tau$ to examine the power of various methods.
Figure~\ref{fig:sim_power} shows the results.
The proposed Lasso-adjusted estimator may lose power when the true $\tau$ is small.
However, when the true $\tau$ is moderately large, the proposed Lasso-adjusted estimator demonstrates increased power compared to the unadjusted estimator.
}

\begin{figure}[H]
\centering
\includegraphics[scale=0.8]{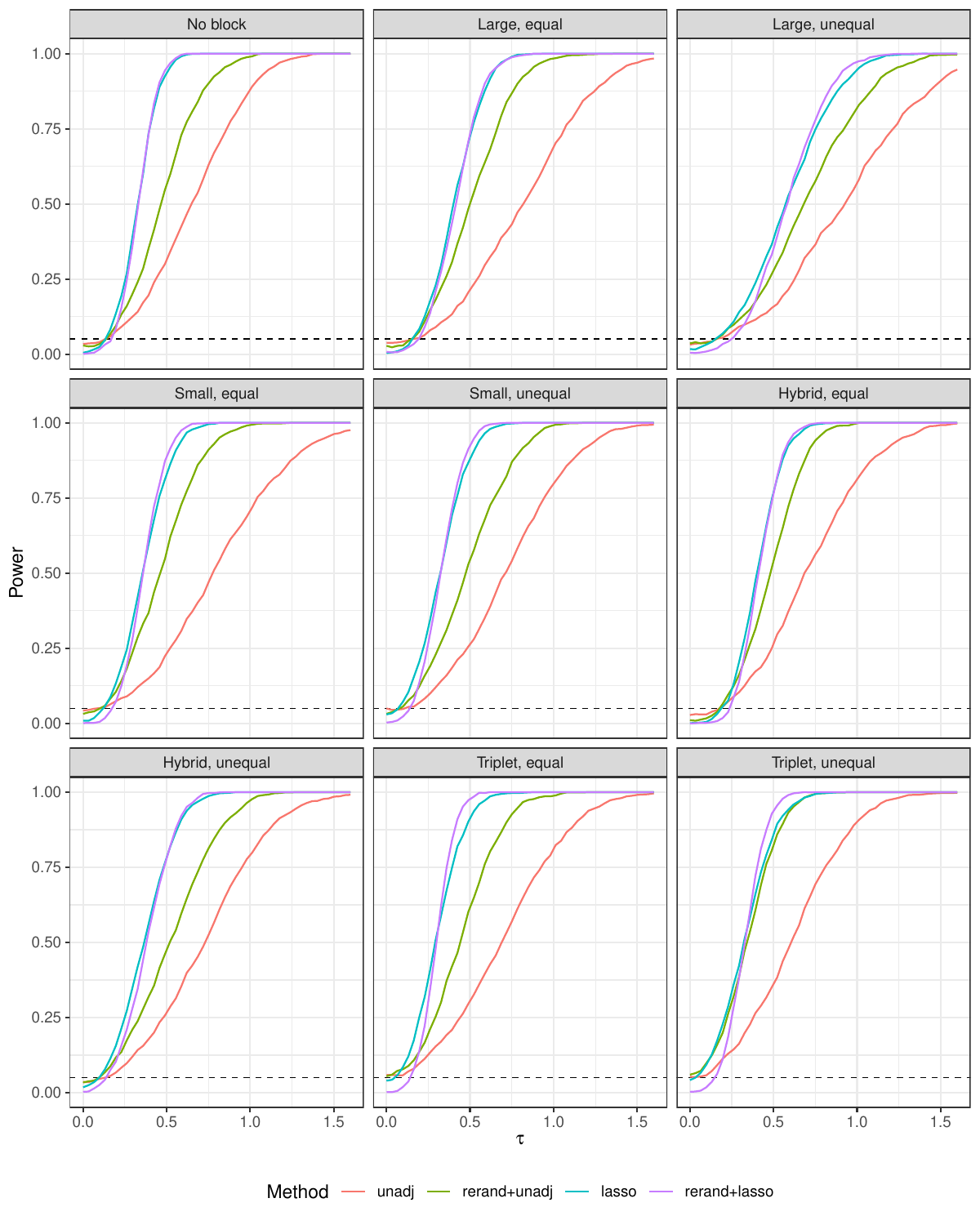}
\caption{Power comparison of different methods. The horizontal dashed line represents the significance level of 0.05.}
\label{fig:sim_power}
\end{figure}


\end{document}